\newtheoremstyle{mytheoremstyle}{3pt}{3pt}{\itshape}{}{\bf}{.}{.3em}{} 
\theoremstyle{mytheoremstyle}
\newtheorem{theorem}{Theorem}
\newcommand\nc\newcommand
\nc\bfa{{\boldsymbol a}}\nc\bfA{{\boldsymbol A}}\nc\cA{{\mathscr A}}
\nc\bfb{{\boldsymbol b}}\nc\bfB{{\boldsymbol B}}\nc\cB{{\mathscr B}}
\nc\bfc{{\boldsymbol c}}\nc\bfC{{\boldsymbol C}}\nc\cC{{\mathscr C}}
\nc\bfd{{\boldsymbol d}}\nc\bfD{{\boldsymbol D}}\nc\cD{{\mathscr D}}
\nc\bfe{{\boldsymbol e}}\nc\bfE{{\boldsymbol E}}\nc\cE{{\mathscr E}}
\nc\bff{{\boldsymbol f}}\nc\bfF{{\boldsymbol F}}\nc\cF{{\mathscr F}}
\nc\bfg{{\boldsymbol g}}\nc\bfG{{\boldsymbol G}}\nc\cG{{\mathscr G}}
\nc\bfh{{\boldsymbol h}}\nc\bfH{{\boldsymbol H}}\nc\cH{{\mathscr H}}
\nc\bfi{{\boldsymbol i}}\nc\bfI{{\boldsymbol I}}\nc\cI{{\mathscr I}}
\nc\bfj{{\boldsymbol j}}\nc\bfJ{{\boldsymbol J}}\nc\cJ{{\mathscr J}}
\nc\bfk{{\boldsymbol k}}\nc\bfK{{\boldsymbol K}}\nc\cK{{\mathscr K}}
\nc\bfl{{\boldsymbol l}}\nc\bfL{{\boldsymbol L}}\nc\cL{{\mathscr L}}
\nc\bfm{{\boldsymbol m}}\nc\bfM{{\boldsymbol M}}\nc{\cM}{{\mathscr M}}
\nc\bfn{{\boldsymbol n}}\nc\bfN{{\boldsymbol N}}\nc\cN{{\mathscr N}}
\nc\bfo{{\boldsymbol o}}\nc\bfO{{\boldsymbol O}}\nc\cO{{\mathscr O}}
\nc\bfp{{\boldsymbol p}}\nc\bfP{{\boldsymbol P}}\nc\cP{{\mathscr P}}\nc\eP{{\EuScriptP}}\nc\fP{{\mathfrak P}}
\nc\bfq{{\boldsymbol q}}\nc\bfQ{{\boldsymbol Q}}\nc\cQ{{\mathscr Q}}
\nc\bfr{{\boldsymbol r}}\nc\bfR{{\boldsymbol R}}\nc\cR{{\mathscr R}}
\nc\bfs{{\boldsymbol s}}\nc\bfS{{\boldsymbol S}}\nc\cS{{\mathscr S}}
\nc\bft{{\boldsymbol t}}\nc\bfT{{\boldsymbol T}}\nc\cT{{\mathscr T}}
\nc\bfu{{\boldsymbol u}}\nc\bfU{{\boldsymbol U}}\nc\cU{{\mathscr U}}
\nc\bfv{{\boldsymbol v}}\nc\bfV{{\boldsymbol V}}\nc\cV{{\mathscr V}}
\nc\bfw{{\boldsymbol w}}\nc\bfW{{\boldsymbol W}}\nc\cW{{\mathscr W}}
\nc\bfx{{\boldsymbol x}}\nc\bfX{{\boldsymbol X}}\nc\cX{{\mathscr X}}
\nc\bfy{{\boldsymbol y}}\nc\bfY{{\boldsymbol Y}}\nc\cY{{\mathscr Y}}
\nc\bfz{{\boldsymbol z}}\nc\bfZ{{\boldsymbol Z}}\nc\cZ{{\mathscr Z}}
\newtheorem{lemma}[theorem]{Lemma}
\newtheorem{claim}[theorem]{Claim}
\newtheorem{proposition}[theorem]{Proposition}
\newtheorem{definition}{Definition}
\newtheorem{example}{Example}
\theoremstyle{remark}
\newcommand{\cev}[1]{\reflectbox{\ensuremath{\vec{\reflectbox{\ensuremath{#1}}}}}}
\DeclareMathOperator{\rank}{rank}
\DeclareMathOperator{\supp}{supp}
\newcommand{\ff}{{\mathbb F}}
\begin{document}
		\title{Cyclic and convolutional codes with locality}
		\author{\IEEEauthorblockN{Zitan Chen} \hspace*{1in}
\and \IEEEauthorblockN{Alexander Barg}}
		\date{}
		\maketitle

\begin{abstract}
Locally recoverable (LRC) codes and their variants have been extensively studied in recent years. In this paper we focus on cyclic constructions
of LRC codes and derive conditions on the zeros of the code that support the property of hierarchical locality. As a result, we obtain
a general family of hierarchical LRC codes for a new range of code parameters. We also observe that our approach enables one to represent
an LRC code in quasicyclic form, and use this representation to construct tail-biting convolutional LRC codes with locality.
Among other results, we extend the general approach to cyclic codes with locality to multidimensional cyclic codes, yielding new families of LRC 
codes with availability, and construct a family of $q$-ary cyclic hierarchical LRC codes of unbounded length.
\end{abstract}
{\small{{\tableofcontents}}}

\renewcommand{\thefootnote}{\arabic{footnote}}
\setcounter{footnote}{0}		
		
{\renewcommand{\thefootnote}{}\footnotetext{
\vspace*{-.15in}

\noindent\rule{1.5in}{.4pt}

An extended abstract of this work appears in Proceedings of the 2020 IEEE International Symposium on Information Theory (ISIT 2020).

{Zitan Chen is with Dept. of ECE and ISR, University of Maryland, College Park, MD 20742. Email: chenztan@umd.edu. 

Alexander Barg is with Dept. of ECE and ISR, University of Maryland, College Park, MD 20742 and also with IITP, Russian Academy of Sciences, 127051 Moscow, Russia. Email: abarg@umd.edu. 

This research was supported by NSF grants CCF1618603 and CCF1814487.
}}}
\renewcommand{\thefootnote}{\arabic{footnote}}
\setcounter{footnote}{0}

\section{Introduction}\label{sec:introduction}
\subsection{The locality problem: main definitions}\label{sec:IA}

Locally recoverable (LRC) codes form a family of erasure codes, motivated by applications in distributed storage, that support
repair of a failed storage node by contacting a small number of other nodes in the cluster. LRC codes can be constructed in a number of ways. 
A connection between LRC codes and the well-studied family of Reed-Solomon (RS) codes was put forward in \cite{tb14}, where codes with
large distance were constructed as certain subcodes of RS codes. The results of \cite{tb14} paved the way for using powerful algebraic techniques
of coding theory for constructing other families of LRC codes including algebraic geometric codes \cite{BTV17,Li17a,LiMaXing17b}. 
In \cite{CyclicLRC16} it was observed that a particular class of the codes in \cite{tb14} can be represented in cyclic form, and the 
distance and locality properties of cyclic LRC codes were described in terms of the zeros of the code. This established a framework
for cyclic LRC codes that was advanced in a number of ways in several recent works \cite{chen17,holzbaur2018cyclic,beemer18,luo2018optimal}.

In this paper we focus on several aspects of cyclic LRC codes that have not been previously addressed in the literature. The first of these is 
codes with hierarchical locality (H-LRC codes), defined in \cite{SAC2015}, which assume that the code can correct one or more erasures by using a 
subset whose size depends on the number of erased locations, increasing progressively with their count. In addition to defining the problem and 
deriving a bound on the parameters of H-LRC codes, the authors of \cite{SAC2015} extended the construction of \cite{tb14} to the hierarchical case. 
Their construction was further generalized to algebraic geometric codes in \cite{bbv19}.

Another general problem related to cyclic codes that we study addresses construction of LRC convolutional codes. This class of erasure codes was 
considered in several previous works \cite{IKZ2018,ZLLS2016,Datta2014} before being thoroughly analyzed in a recent paper \cite{MN19}. Here we 
exploit a classic link between quasicyclic codes and convolutional codes \cite{SvT79} to construct LRC convolutional codes with a particular type 
of locality whose parameters are controlled by the set of zeros of the underlying (quasi)cyclic code.

We also address the problem of maximum length of optimal LRC codes \cite{GXY19}, wherein the main question is constructing such codes of 
length larger than the size of the code alphabet. This problem has been the subject of a number of recent papers including 
\cite{B+17,beemer18,LiMaXing17b}, and we study this question in the hierarchical case. Our techniques here combine
the approach to cyclic H-LRC codes developed in this paper with the ideas of \cite{luo2018optimal} aimed at constructing long LRC codes.

We describe the known and the new results in more details after giving the basic definitions. In coding-theoretic terms, the central problem addressed by LRC codes is correcting one or several erasures in 
the codeword based on the contents of a 
small number of other coordinates. This problem was first isolated in \cite{gopalan2012locality}, and it has been actively studied in the last 
decade. 

\begin{definition}[\sc LRC codes]
	\label{def:lrc}
A linear code $\cC\subset \mathbb{F}_q^n$ is locally recoverable with locality $r$ if for every $i\in \{1,2,\dots,n\}$
there exists an $r$-element subset $I_i\subset \{1,2,\dots,n\}\backslash \{i\}$
	and a linear function $\phi_i:\mathbb{F}_q^r\to\mathbb{F}_q$ such that for every codeword $c\in\mathcal{C}$ we have
	$
	c_i=\phi_i(c_{j_1},\dots,c_{j_r}),
	$
	where $j_1<j_2<\cdots   <j_r$ are the elements of $I_i.$ 
\end{definition}
The coordinates in $I_i$ are called the \emph{recovering set} of $i$, and the set $\{i\}\cup I_i$ is called a \emph{repair group}. 
Below we refer to a linear LRC code of length $n$, dimension $k$, and locality $r$ as an $(n,k,r)$ LRC code. Since the code is
occasionally used to correct a large number of erasures (such as in the event of massive system failure), another parameter of interest is
the maximum number of erasures that it can tolerate. This is controlled by the minimum distance $d(\cC)$ of the code, for which there are
several bounds known in the literature. We will be interested in the generalized Singleton bound of \cite{gopalan2012locality} which
states that for any LRC code $\cC$,	
    \begin{equation}\label{eq:sb-1}
	d(\cC)\leq n-k-\left\lceil \frac{k}{r}\right\rceil+2.
    \end{equation}

While in most situations repairing a single failed node restores the system to the functional state, occasionally there may be a need to 
recover the data from several concurrent node failures. The following extension of the previous definition is due to \cite{kamath2012codes}. 
\begin{definition}[\sc $(r,\delta)$ locality] \label{def:r-delta}
For any $\delta\ge 2$ we say that a linear code $\cC$ has {\em $(r,\delta)$ locality} if every coordinate $i\in\{1,\dots,n\}$ is contained in a subset $J_i\subset \{1,\ldots,n\}$ of size at most $r+\delta -1$ 
such that the restriction $\cC_{J_i}$ to the coordinates in $J_i$ forms a code of distance at least $\delta$.
\end{definition}
Note that in the case of $\delta=2$ the codes defined here are exactly the codes of Def.~\ref{def:lrc} above.

 An intermediate situation arises when the code is designed to correct a single erasure by 
contacting a small number $r_1$ of helper nodes, while at the same time supporting local recovery of multiple erasures. Extending this
idea to multiple levels of local protection, the authors of \cite{SAC2015} introduced the concept of hierarchical LRC (H-LRC) codes, which are 
defined as follows.

\begin{definition}[\sc H-LRC codes]\label{def:H-LRC}
	Let $h\geq 1$, $0 < r_1 < r_2 < \ldots < r_h < k$, and $1 < \delta_1 \leq \ldots \leq \delta_h \leq d$ be integers. 
	A linear code $\cC\subset \mathbb{F}_q^n$ is said to have $h$-level hierarchical locality $(r_1,\delta_1),\ldots,(r_h,\delta_h)$ if for every $1\leq i\leq h$ and every coordinate of the code $\cC$ there is a punctured code $\cC^{(i)}$ such that the coordinate is in the support of $\cC^{(i)}$ and\\ 
	\indent (a) $\dim(\cC^{(i)})\le r_i$,\\
	\indent (b) $d(\cC^{(i)})\ge\delta_i$,\\
	\indent (c) the $i$-th local code $\cC^{(i)}$ has $(i-1)$-level hierarchical locality $(r_1,\delta_1),\ldots,(r_{i-1},\delta_{i-1})$.
\end{definition}

The authors of \cite{SAC2015} proved the following extension of the bound \eqref{eq:sb-1}: The minimum distance of an $h$-level
H-LRC code with locality satisfies the inequality
  \begin{align} \label{eq:sb}
d \leq n-k+\delta_h - \sum_{i=1}^{h}\left\lceil\frac{k}{r_i}\right\rceil(\delta_i-\delta_{i-1}),
  \end{align} 
  where $\delta_0=1$. In particular, for $h=1$ this gives a version of the bound \eqref{eq:sb-1} for the distance of an $(n,k)$ code
with $(r,\delta)$ locality:
 \begin{equation}\label{eq:delta}
   d\le n-k+\delta-\Big\lceil\frac kr\Big\rceil(\delta-1).
 \end{equation}
We call an H-LRC code {\em optimal} if its distance attains the bound \eqref{eq:sb}. Note that it is possible that the code is optimal while its
local codes $\cC^{(i)}$ for some or even all $i\le h-1$ are not. We say that an H-LRC code $\cC$ is {\em strongly optimal} if 
in every level $i,1\leq i\leq h$, the $i$-th local codes are optimal H-LRC codes with $i-1$ levels. 
For $h=1$ the distance of the optimal code with $(r,\delta)$ locality attains the bound \eqref{eq:delta} with equality. 
  
\subsection{Earlier results and our contributions}

As noted above, the starting point of our constructions is a cyclic version of the RS-like codes with locality designed in \cite{tb14}. 
RS codes over $\ff_q$ can be alternatively described in terms of polynomial evaluation and (in the case that the code length $n$ divides
$q-1$) as cyclic codes of the BCH type. While \cite{tb14} adopted the former approach, the authors of \cite{CyclicLRC16} studied the
cyclic case, finding a condition on the zeros of the code that support the locality property. 

The approach of \cite{CyclicLRC16} was later advanced in a number of ways. In particular, the authors of \cite{chen17} extended the main 
construction of \cite{CyclicLRC16} to codes with $(r,\delta)$ locality, designing a cyclic representation of the polynomial evaluation codes
from \cite{tb14} for the general case of $\delta\ge 2.$

Our main results relate to constructions of cyclic LRC codes with hierarchy. Several recent studies presented families of codes with 
multiple levels of erasure correction \cite{freij2016locally,grezet2019complete,yang2019hierarchical}, not necessarily within the framework of the 
above definition. As far as H-LRC codes are concerned, the only general family of optimal H-LRC codes that we are aware of was presented in \cite{SAC2015,bbv19}. This construction essentially followed the approach of \cite{tb14}, relying on 
multivariate polynomials that are constant on the blocks that form the support of the code $\cC^{(i)}$ in Def.~\ref{def:H-LRC}.
The codes of \cite{SAC2015,bbv19} form a family of strongly optimal H-LRC codes which can be 
constructed for any code length $n\le q,$ dimension $k,$ and any values of $r_i,i=1,\dots,h$ as long as $r_i|r_{i+1}, i=1,\dots,h-1$ and $r_h|k.$
The divisibility constraint is essential for the constructions discussed, and it limits the possible choices of the code parameters.

In this paper we derive conditions on the zeros of a cyclic code that support several levels of hierarchy. As a result, we construct 
families of cyclic H-LRC codes that do not rely on the divisibility assumptions, which yields a new range of code parameters. We also derive 
conditions that are sufficient for our codes to be strongly optimal and give examples of such codes. These results are presented in Sec.~\ref{sec:hlrc}.

These results also enable us to connect the construction of LRC cyclic codes and convolutional codes with locality. 
The recent work \cite{MN19} focused on the so-called sliding window repair property of convolutional codes. The authors of \cite{MN19}
further observed that certain families of convolutional codes, notably the so-called codes with the maximum distance profile 
\cite{gluesing2006strongly,TRV2012}, suggest an approach to constructing codes with locality. They also 
presented a family of LRC convolutional codes with sliding window repair for the case of {\em column locality}, and they suggested
that there may be a connection between H-LRC codes and LRC convolutional codes. We show that this connection indeed leads to
fruitful results, designing LRC convolutional codes for the case of {\em row locality} defined herein. The lower bounds on the column distance of 
the codes constructed here and in \cite{MN19} are the same; however the alphabet size of our codes is much smaller than in \cite{MN19}.
We also derive an upper bound on the column distance of LRC convolutional codes (with either type of locality); however, our construction falls short of attaining it. The method that we use relies on the characterization of zeros of cyclic block H-LRC codes described above. We observe that several levels of hierarchy enable one to put our cyclic LRC codes in quasicyclic form, and then use a classic connection between quasicyclic codes and convolutional codes \cite{SvT79} to construct convolutional codes with locality; see Sec.~\ref{sec:cc}.

We also examine two other problems for LRC codes that benefit from the cyclic code construction. The first of them is 
the problem of maximum length of optimal LRC codes put forward in \cite{GXY19}. 
Answering the challenge of constructing optimal LRC codes of length larger than $q,$
the authors of \cite{CaiMiaoSchwartzTang18,GXY19,beemer18,luo2018optimal} constructed several families of optimal cyclic codes of large, and in some cases even unbounded length, and \cite{CaiMiaoSchwartzTang18} extended these results to the case of several erasures. Here we follow the lead of 
\cite{luo2018optimal} and construct an infinite family of H-LRC codes over a given finite field and establish conditions for their optimality in 
terms of the bound \eqref{eq:sb}. Finally, we consider the problem of {\em availability} which calls for constructing LRC codes with several disjoint recovering sets for each code coordinate (see the definition in Sec.~\ref{sec:avl}). We note that multidimensional cyclic codes naturally 
yield several recovering sets for the coordinates. We use a description of bi-cyclic codes in terms of their zeros 
together with a special version of code concatenation \cite{saints1993hyperbolic} to construct codes with availability and rate higher than the rate 
of product codes. We note that the known bounds on the code parameters for multiple recovering sets 
\cite{rawat2016locality,tamo2016bounds,wang2014repair} do not support a conclusive picture, and we are not aware of general 
families of codes with availability whose distance attains one of the known upper limits. Our construction in this paper also falls in the same 
category.

Although we do not pursue this direction here, let us note that the methods of this paper enable one to construct codes with both hierarchical 
locality and availability. We remark that constructions of LRC codes that have both properties were presented in \cite{bbv19}, where
the main tools were fiber products and covering maps of algebraic curves.

\subsection{Optimal cyclic LRC codes}\label{sec:clrc}
Let $\cC$ be a cyclic code of length $n$ with generator polynomial $g(x)$ and check polynomial $h(x)=\frac{x^n-1}{g(x)}.$
The dual code $\cC'$ has generator polynomial $g_{\cC'}(x)=x^{\deg(h(x))}h(x^{-1}),$ and the code $\cev{\cC'}$ 
generated by $h(x)$ is obtained from $\cC'$ by inverting the order of the coordinates. A codeword $a(x)\in\cC'$ of weight $r+1$ defines 
a repair group of the code $\cC,$ and so does the reversed codeword $\cev{a}(x)\in\cev{\cC'}.$ 
For this reason below in this section we argue about the code $\cev{\cC'}$ rather than $\cC'$, which makes the writing more compact without affecting the conclusions.

Let us recall a connection between cyclic codes and LRC codes of \cite{tamo2016cyclic}, which we present in the form close 
to the earlier works \cite{beemer18,luo2018optimal}. The following lemma underlies constructions of cyclic LRC codes in this paper and
elsewhere, and it represents a mild extension of Lemma 3.3 in \cite{tamo2016cyclic}. In the statement as well as elsewhere in the paper we do not distinguish between zeros of the code and their exponents
in terms of some fixed primitive $n$th root of unity in $\ff_q$.
\begin{lemma}\label{lemma:zeros} Let $\cC$ be a cyclic code over $\ff_q$ of length $n|(q-1)$ and let $\alpha$ be a primitive $n$th root of unity in $\ff_q.$ Suppose that $n=\nu m$ for some integers $\nu,m.$ Then the code $\cev{\cC'}$ contains a vector
   \begin{equation}\label{eq:b}
   b(x)=\sum_{i=0}^{m-1}x^{i\nu}\alpha^{(m-1-i)\nu u}, \quad u\in\{0,\dots,m-1\}
   \end{equation}
if and only if the set $\cL=\{{u+im},i=0,1,\dots,\nu-1\}$ is among the zeros of $\cC.$ 
\end{lemma}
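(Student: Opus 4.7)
The plan is to translate membership in $\cev{\cC'}$ into a vanishing condition and then do a direct geometric-sum calculation to identify which powers of $\alpha$ annihilate $b(x)$.

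First I would recall that $\cev{\cC'}$ is the cyclic code generated by $h(x)$ in $\ff_q[x]/(x^n-1)$. Since $g(x)h(x)=x^n-1$ and $\alpha$ is a primitive $n$th root of unity, the roots of $h(x)$ are exactly the powers $\alpha^j$ whose exponents $j$ do \emph{not} belong to the zero set of $\cC$. Hence $b(x)\in\cev{\cC'}$ if and only if $b(\alpha^j)=0$ for every $j$ that is not a zero of $\cC$, equivalently, if and only if
\[
\{j\in\{0,1,\dots,n-1\}: b(\alpha^j)\ne 0\}\subseteq\{\text{zeros of }\cC\}.
\]

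Next I would compute $b(\alpha^j)$ explicitly. Substituting,
\[
b(\alpha^j)=\sum_{i=0}^{m-1}\alpha^{i\nu j}\alpha^{(m-1-i)\nu u}
=\alpha^{(m-1)\nu u}\sum_{i=0}^{m-1}(\alpha^{\nu})^{i(j-u)}.
\]
Because $\alpha$ has order $n=\nu m$, the element $\alpha^{\nu}$ has order $m$. The inner sum is a geometric series in an $m$th root of unity, so it equals $m$ when $\alpha^{\nu(j-u)}=1$ and $0$ otherwise. Thus $b(\alpha^j)\ne 0$ precisely when $j\equiv u\pmod m$, i.e. when $j\in\{u,u+m,u+2m,\dots,u+(\nu-1)m\}=\cL$ (reduced modulo $n$).

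Combining the two steps gives the equivalence: $b(x)\in\cev{\cC'}$ iff $\cL$ is contained in the set of zero-exponents of $\cC$. The only potential subtlety is bookkeeping around the ``reversed dual'' $\cev{\cC'}$ and its generator polynomial $h(x)$, which is why I would state the characterization of $\cev{\cC'}$ in terms of vanishing at roots of $h(x)$ up front; once that is in place, the remaining argument is essentially the geometric-sum identity and causes no real obstacle.
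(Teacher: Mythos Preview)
Your argument is correct. The only small point you might make explicit is that the geometric sum equals $m$ (rather than $0$) in $\ff_q$, and $m\ne 0$ because $m\mid n\mid q-1$ forces $m$ to be coprime to the characteristic; this is what guarantees $b(\alpha^j)\ne 0$ when $j\equiv u\pmod m$.

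Your route differs from the paper's. Rather than evaluating $b(\alpha^j)$ by a geometric sum, the paper identifies the closed form
\[
b(x)=\frac{x^n-1}{x^\nu-\alpha^{\nu u}},\qquad x^\nu-\alpha^{\nu u}=\prod_{i=0}^{\nu-1}(x-\alpha^{u+im}),
\]
and argues each direction from this factorization: the forward direction because $b$ is nonvanishing precisely on $\cL$, and the converse by writing $g(x)=(x^\nu-\alpha^{\nu u})p(x)$ and exhibiting $b(x)=h(x)p(x)\in\langle h(x)\rangle=\cev{\cC'}$. Your approach is arguably cleaner in that it establishes the full ``iff'' in one stroke via the vanishing characterization of $\langle h(x)\rangle$. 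The paper's approach, on the other hand, records the identity $x^\nu-\alpha^{\nu u}=\prod_i(x-\alpha^{u+im})$, which it reuses later (in the proof of the locality lemma for $\cC^{(0)}$); if you adopt your proof, that identity would need to be stated separately when it is needed.
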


\begin{proof}
Notice that the polynomial $b(x)$ can be equivalently written as
   $
   b(x)=\frac{x^n-1}{x^\nu-\alpha^{\nu u}},
   $
where 
   \begin{equation}\label{eq:ap}
x^\nu-\alpha^{\nu u}=\prod_{i=0}^{\nu-1}(x-\alpha^{u+im})
  \end{equation}
is the annihilator polynomial of the set $\cL.$ Thus, $b(\alpha^t)\ne 0$ for all $t\in \cL.$ If $b(x)\in \cev{\cC'},$ this implies that $\cL$ is a subset of the set of zeros of $\cC.$

Conversely, let $g (x)=(x^\nu-\alpha^{\nu u})p(x)$ be the generator polynomial of $\cC$.
 Then
   $$
   \Big(\frac{x^n-1}{g (x)}\Big)p(x)=\frac{x^n-1}{x^\nu-\alpha^{\nu u}}=b(x)\in \cev{\cC'}.
   $$
\end{proof}

This lemma immediately yields the cyclic codes from \cite{tamo2016cyclic} (the cyclic case of the codes from \cite{tb14}).
\begin{theorem}[\cite{tamo2016cyclic}] \label{thm:cyclic} Let $(r+1)|n, r|k, n|(q-1)$. Let $\alpha\in \ff_q$ be a primitive $n$-th root of unity, and let $\cC$ be an $(n,k)$ cyclic code with zeros
$\alpha^i, i\in \cZ:=\cL \cup\cD,$ where
%\footnote{Here and below we do not distinguish between the (sets of) zeros and the (sets of) their exponents.}
  \begin{equation}\label{eq:zeros}
    \begin{aligned}
        \cL&=\{1+l(r+1),l=0,\dots,\textstyle{\frac n{r+1}}-1\}\\
        \cD&=\{1,2,\dots,n-\frac kr(r+1)+1\}.
        \end{aligned}
  \end{equation}
%  for some $j\in \cZ_1.$
  Then $\cC$ is an $(n,k,r)$ optimal LRC code.
\end{theorem}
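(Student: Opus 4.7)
The plan is to verify in turn that $\cC$ has dimension $k$, locality $r$, and distance attaining the generalized Singleton bound \eqref{eq:sb-1}.

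For the dimension, since $\cC$ is cyclic, $\dim\cC = n - |\cZ|$, and I would compute $|\cL\cup\cD|$ by inclusion--exclusion. An element $1+l(r+1)$ of $\cL$ lies in $\cD$ exactly when $l \le \tfrac{n}{r+1} - \tfrac{k}{r}$, so $|\cL\cap\cD| = \tfrac{n}{r+1}-\tfrac{k}{r}+1$; substituting into $|\cL\cup\cD| = |\cL|+|\cD|-|\cL\cap\cD|$ and simplifying yields $n-k$, so $\dim\cC = k$.

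For the locality, I would invoke Lemma~\ref{lemma:zeros} with $m = r+1$, $\nu = \tfrac{n}{r+1}$, and $u = 1$. Since $\cL = \{1+i(r+1) : i = 0,\dots,\nu-1\} \subset \cZ$ has exactly the required form, the lemma produces a codeword $b(x)\in\cev{\cC'}$ of Hamming weight $r+1$ whose support $S_0 = \{0,\nu,2\nu,\ldots,r\nu\}$ is the order-$(r+1)$ subgroup of $\mathbb{Z}/n\mathbb{Z}$. Because $\cev{\cC'}$ is cyclic (being generated by $h(x)$), the shift $x^j b(x)\bmod(x^n-1)$ remains in $\cev{\cC'}$ and has support $S_0 + j$; as $j$ runs over $\{0,1,\ldots,\nu-1\}$ these supports are the $\nu$ distinct cosets of $S_0$, which partition $\{0,1,\ldots,n-1\}$. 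Each shift is a weight-$(r+1)$ codeword of $\cev{\cC'}$, hence encodes a linear relation among the corresponding code coordinates that recovers any one of them from the other $r$, so every coordinate has a repair group of size $r+1$ and locality $r$ follows.

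For the distance, $\cD$ consists of $n - \tfrac{k}{r}(r+1) + 1$ consecutive integers, all in the zero set of $\cC$. The BCH bound then gives
\[
d(\cC) \;\ge\; n - \tfrac{k}{r}(r+1) + 2 \;=\; n - k - \tfrac{k}{r} + 2.
\]
Since $r\mid k$ we have $\lceil k/r\rceil = k/r$, and \eqref{eq:sb-1} yields the matching upper bound $d(\cC) \le n - k - \tfrac{k}{r} + 2$, so equality holds throughout and $\cC$ is an optimal $(n,k,r)$ LRC code.

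The argument is essentially direct once Lemma~\ref{lemma:zeros} and the BCH bound are available. The only step requiring a moment of thought is that the cyclic shifts of $b(x)$ sweep out every coordinate exactly once, which follows immediately from the subgroup structure of $S_0$; the remainder is a short cardinality calculation and a comparison of two bounds that agree.
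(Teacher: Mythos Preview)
Your proposal is correct and follows essentially the same approach as the paper: invoke Lemma~\ref{lemma:zeros} with $m=r+1$, $\nu=n/(r+1)$, $u=1$ to produce a weight-$(r+1)$ word in $\cev{\cC'}$, take its cyclic shifts to get $n/(r+1)$ disjoint repair groups, use the BCH bound on the run $\cD$ for the distance, and verify $|\cZ|=n-k$ for the dimension. You are merely more explicit than the paper about the inclusion--exclusion for $|\cL\cup\cD|$ and about the coset structure of the shifted supports, but the logical skeleton is identical.
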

\begin{proof}
Since $\cL\subset \cZ,$ Lemma \ref{lemma:zeros} implies that 
  $$
  b(x)=\sum_{i=0}^r \alpha^{i\frac{n}{r+1}}x^{(r-i)\frac{n}{r+1}}
  $$
is a codeword in $\cev{\cC'}.$ This codeword is of weight $r+1,$ and its cyclic shifts give $\frac n{r+1}$ disjoint repair groups, supporting the
locality claim. At the same time, the BCH bound implies that $d(\cC)\ge n-\frac kr(r+1)+2,$ so the code is optimal by \eqref{eq:sb-1} once one observes $|\cZ|=n-k$ and $\dim(\cC)=n-|\cZ|=k$.
\end{proof}

This construction extends without difficulty to codes with $(r,\delta)$ locality for any $\delta\ge 2.$ A family of optimal codes in the sense of the bound \eqref{eq:delta} was constructed in \cite{tb14}, Construction 8 (see also \cite{chen17}). The codes in this family are constructed as certain
subcodes of Reed-Solomon codes that rely on piecewise-constant polynomials over $\ff_q.$ In the particular case that the code length
$n$ divides $q-1$ it is possible to represent these codes in cyclic form. For this, we assume that $r|k,$ take $m=r+\delta-1,$ and take the zeros of the code to be 
  \begin{equation}\label{eq:rts}
  \begin{aligned}
  \cL&=\{i+lm\mid l=0,\dots,\nu-1, i=1,\dots,\delta-1\}\\ &\cD=\{1,2,\dots,n-k-((k/r)-1)(\delta-1)\}.
  \end{aligned}
  \end{equation}
As will be apparent from the proof of Lemma \ref{le:local0}, the condition about zeros given by the set $\cL$ translates into conditions on the dual code that support the locality claim. The distance of the code $\cC$ clearly meets the bound \eqref{eq:delta} with equality. 

\subsection{Cyclic codes with locality}\label{sec:construction}
In the next lemma we present a slightly more general view of the method in Theorem \ref{thm:cyclic} that will be instrumental in the code constructions below in this work. The main element of the construction is code $\cC^{(0)}$ defined in \eqref{eq:C0}, which isolates a repair group in the code $\cC$ and supports local correction of several erasures.
\begin{lemma} 
	\label{le:local0}
Let $n|(q-1), n=\nu m$. Let $\alpha$ be a primitive $n$-th root of unity in $\ff_q$, and fix $\delta\in\{2,\dots,m\}.$ Let $\cZ$ be a subset of size $Z$ such that
  $$
	\{1,\ldots,\delta-1\}\subset \cZ \subset \{0,\ldots,{m}-1\}
  $$
and let $\cL = \bigcup_{s=0}^{\nu-1}(\cZ+ s{m}).$
Consider a cyclic code $\cC=\langle g(x)\rangle$ of length $n,$ where 
   $$	
	g(x)=p(x)\prod_{t\in\cL}(x-\alpha^t) , 
   $$
and $p(x)\in \ff_q[x]$ is some polynomial. Let 
  \begin{equation}\label{eq:C0}
   \cC^{(0)}=\{ (c_0, c_{\nu},\ldots,c_{({m}-1)\nu}) \mid (c_0,\ldots,c_{n-1})\in \cC \}.
  \end{equation}
Then  $
	\dim(\cC^{(0)})\leq {m}-Z, \quad d(\cC^{(0)})\geq \delta,
  $
and thus, the code $\cC$ is an LRC code with $({m}-Z,\delta)$ locality.

Further, if for every $u\in\{0,\ldots,m-1\}\setminus\cZ$ 
there exists $s\in\{0,1,\dots,\nu-1\}$ such that $g(\alpha^{u+sm})\neq 0$, then $\dim(\cC^{(0)})=m-Z$.
\end{lemma}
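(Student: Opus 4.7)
The plan is to relate the projection $\cC^{(0)}$ to an auxiliary length-$m$ cyclic code through a short Fourier-type identity, so that the dimension bound and the distance follow at once from the BCH bound. Put $\beta=\alpha^\nu$, a primitive $m$-th root of unity, and for $c(x)=\sum_i c_i x^i\in\cC$ define the compressed polynomial $\tilde c(y):=\sum_{i=0}^{m-1}c_{i\nu}\,y^i$. Using the geometric-sum fact that $\sum_{s=0}^{\nu-1}\alpha^{smj}=\nu$ when $\nu\mid j$ and $0$ otherwise, a one-line computation yields
\[
\tilde c(\beta^t)\;=\;\frac{1}{\nu}\sum_{s=0}^{\nu-1}c(\alpha^{t+sm}),\qquad t=0,1,\dots,m-1.
\]
Since $\cL\supseteq\{t+sm:s=0,\dots,\nu-1\}$ for every $t\in\cZ$, each $\alpha^{t+sm}$ is a zero of $\cC$, so $\tilde c(\beta^t)=0$ for $t\in\cZ$. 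Identifying $\cC^{(0)}$ with the polynomials $\tilde c(y)$, this places $\cC^{(0)}\subseteq \cC^\ast$, where $\cC^\ast$ is the cyclic code of length $m$ over $\ff_q$ whose defining zeros are $\{\beta^t:t\in\cZ\}$. Then $\dim\cC^\ast=m-Z$ and, by the BCH bound applied to the $\delta-1$ consecutive zeros $\beta,\beta^2,\dots,\beta^{\delta-1}$, $d(\cC^\ast)\geq\delta$. This gives $\dim\cC^{(0)}\leq m-Z$ and $d(\cC^{(0)})\geq\delta$.

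The locality claim follows by cyclic symmetry: cyclic shifting by $j$ positions sends the coordinate set $\{0,\nu,\dots,(m-1)\nu\}$ to $\{j,j+\nu,\dots,j+(m-1)\nu\}\pmod n$, and the restriction of $\cC$ to any such arithmetic progression is a coordinate permutation of $\cC^{(0)}$, hence inherits the same dimension and distance bounds. Every coordinate therefore lies in a repair group of size $m$ whose local code has dimension at most $m-Z$ and distance at least $\delta$, which is the claimed $(m-Z,\delta)$-locality.

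For the equality $\dim\cC^{(0)}=m-Z$, I would show that the $m-Z$ linear functionals $\phi_t(c):=\sum_{s=0}^{\nu-1}c(\alpha^{t+sm})$, indexed by $t\in\{0,\dots,m-1\}\setminus\cZ$, are linearly independent on $\cC$. Expanding $\sum_t\mu_t\phi_t(c)$ via the same geometric-sum identity rewrites it as the standard inner product $\langle B,c\rangle$, where $B\in\ff_q^n$ is supported on $\{0,\nu,\dots,(m-1)\nu\}$ with $B_{i\nu}=\nu\sum_t\mu_t\beta^{ti}$. A direct spectral computation then gives
\[
B(\alpha^{-j})\;=\;\begin{cases}m\nu\,\mu_{j\bmod m}&\text{if }j\bmod m\notin\cZ,\\ 0&\text{if }j\bmod m\in\cZ.\end{cases}
\]
Vanishing of $\sum_t\mu_t\phi_t$ on $\cC$ is equivalent to $B\in\cC^\perp$, i.e., $B(\alpha^{-j})=0$ for every $j$ with $g(\alpha^j)\neq 0$. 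The hypothesis furnishes, for each $u\in\{0,\dots,m-1\}\setminus\cZ$, some $s$ with $g(\alpha^{u+sm})\neq 0$; taking $j=u+sm$ then forces $\mu_u=0$. Hence all $\mu_t$ vanish, so the map $c\mapsto(\tilde c(\beta^t))_{t\notin\cZ}$ is surjective onto $\ff_q^{m-Z}$, giving $\dim\cC^{(0)}=m-Z$.

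The main obstacle is the spectral bookkeeping in the last step: pinning down the exact form of $B(\alpha^{-j})$ and matching the non-zeros of $\cC^\perp$ with the hypothesis on $g$. The earlier parts reduce to the Fourier identity above, the BCH bound, and a cyclic shift.
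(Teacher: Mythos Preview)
Your argument is correct. The first part is essentially the paper's proof recast in DFT language: the paper constructs $Z$ explicit dual codewords $b_i(x)=\frac{x^n-1}{x^\nu-\alpha^{\nu i}}$, $i\in\cZ$, supported on $\{0,\nu,\dots,(m-1)\nu\}$, assembles them into a $Z\times m$ parity-check matrix $H$ for $\cC^{(0)}$, and reads off the Vandermonde structure. Your identity $\tilde c(\beta^t)=\frac1\nu\sum_s c(\alpha^{t+sm})$ is exactly the inner product of $c$ with (a scalar multiple of) the reversal of $b_t$, so the two are the same computation; the paper's Vandermonde argument is your BCH bound on $\cC^\ast$. The paper additionally remarks that $\cC^{(0)}$ is itself cyclic (a $\nu$-step shift of $\cC$ induces a one-step shift of the projection), which you could cite instead of merely saying $\cC^{(0)}\subseteq\cC^\ast$.

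For the equality $\dim\cC^{(0)}=m-Z$ the two routes genuinely diverge. The paper argues the contrapositive: if $\dim\cC^{(0)}<m-Z$ then the cyclic code $\cC^{(0)}$ has an extra zero $\beta^u$ with $u\notin\cZ$, and Lemma~\ref{lemma:zeros} then forces $(x^\nu-\alpha^{\nu u})\mid g(x)$, i.e.\ $g(\alpha^{u+sm})=0$ for all $s$. Your direct linear-independence argument via the spectral description of $\cC^\perp$ is a valid alternative; your computation of $B(\alpha^{-j})$ is correct (note $m\nu=n$ is invertible in $\ff_q$ since $n\mid q-1$), and the characterization ``$B\in\cC^\perp$ iff $B(\alpha^{-j})=0$ for every nonzero $\alpha^j$ of $\cC$'' is the standard fact about zeros of dual cyclic codes. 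The paper's route is shorter because it reuses Lemma~\ref{lemma:zeros}; yours is self-contained and makes the role of the hypothesis (one nonzero of $g$ in each coset $u+m\mathbb{Z}$, $u\notin\cZ$) more transparent.
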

\begin{proof} We proceed similarly to Theorem \ref{thm:cyclic}. Let 
   $$
   l(x)=\prod_{t\in \cL}(x-\alpha^t)=\prod_{s=0}^{\nu-1}\prod_{i\in\cZ}(x-\alpha^{i+sm})=\prod_{i\in\cZ} l_i(x)
   $$
   where $l_i(x):=x^\nu-\alpha^{\nu i}.$ Let $h(x)=\frac{x^n-1}{g(x)}$ and consider a subset of $Z$ codewords of $\cev{\cC}'$ given by
   $$
   b_i(x):={h}(x)p(x)\prod_{j\in \cZ\backslash\{ i\}}l_j(x), \quad i\in\cZ.
   $$
A codeword $b_i$ has the form 
   $$
   b_i(x)=\frac{x^n-1}{l_i(x)}=\sum_{j=0}^{m-1} \alpha^{(m-1-j)i\nu} x^{j\nu},
   $$
Hamming weight $m$, and contains $\nu-1$ zero coordinates after every nonzero entry. 

To prove the statement about locality, let us form a $Z\times m$ matrix $H$ obtained by 
inverting the order of coordinates in the codewords $b_i, i\in \cZ,$ writing the resulting vectors as rows, and discarding all the zero columns. 
By construction, every row of $H$ is a parity-check equation of the code $\cC^{(0)}.$ Any submatrix of $\delta-1$ columns of $H$ has rank $\delta-1$ (its first $\delta-1$ rows form a Vandermonde determinant),
and thus, $d(\cC^{(0)})\ge \delta.$ Since the rows of $H$ give $Z$ independent parity-check equations for the code $\cC^{(0)},$ we also have
$\dim(\cC^{(0)})\le m-Z.$ 
	This argument exhibits a local code in the coordinates that are integer multiples of $\nu,$ and by cyclic shifts we can partition the 
	set $\{0,1,\dots,n-1\}$ into supports of disjoint local codes of length ${m}$ and distance at least $\delta.$ Furthermore, we note that the punctured code $\cC^{(0)}$ is itself a cyclic code of length $m|n.$ Let $g_{0}(x)$ be its generator polynomial. Since 
each row of $H$ is a parity-check equation for the code $\cC^{(0)}$, we have $g_{0}(\alpha^{i\nu})=0$ for every $i\in\cZ$ and thus 
$\deg(g_0(x))\geq  Z $. Together these arguments prove the claim 	about $(m-Z,\delta)$ locality of the code $\cC.$

Next we show that if $\deg(g_0(x))>  Z ,$ then necessarily there exists $u\in\{0,\ldots,m-1\}\setminus\cZ$ such that $g(\alpha^{u+sm})=0$ for all $s=0,\dots,\nu-1.$ 
Suppose that there exists $u\in\{0,\ldots,m-1\}\setminus\cZ$ such that $g_{0}(\alpha^{u\nu})=0.$ 
By Eq.~\eqref{eq:ap} in the proof Lemma~\ref{lemma:zeros} there exists a codeword $b_u\in\cev{\cC}'$ given by 
\begin{align*}
	b_u(x)=\sum_{j=0}^{m-1} \alpha^{ju\nu} x^{(m-1-j)\nu}=\frac{x^n-1}{x^\nu-\alpha^{\nu u}}.
\end{align*} 
On the other hand, $\cev{\cC}'=\langle h(x)\rangle,$ so $h(x)|b_u(x)$ and therefore $(x^\nu-\alpha^{\nu u})|g(x)$. 
Noticing that $x^\nu-\alpha^{\nu u}=\prod_{s=0}^{\nu-1}(x-\alpha^{u+sm})$ (cf.~\eqref{eq:ap}), we conclude that $g(x)$ is divisible by $x-\alpha^{u+sm}$ for all $s=0,\ldots,\nu-1$. Hence, if for every $u\in\{0,\ldots,m-1\}\setminus\cZ$ there exists $0\leq s\leq \nu-1$ such that $(x-\alpha^{u+sm})\nmid g(x)$, i.e., $g(\alpha^{u+sm})\neq 0$, then $\deg(g_0(x))= Z $, and thus $\dim(\cC^{(0)})=m- Z $.
\end{proof}

\section{Codes with hierarchical locality}\label{sec:h}

	\subsection{Optimal cyclic codes with hierarchy}\label{sec:hlrc}
In this section we construct a family of H-LRC cyclic codes with $h\geq 1$ levels of hierarchy and derive sufficient conditions for their
optimality. Suppose that $h$ is fixed and we are given the local dimension $r_1$ (the dimension of the first, innermost local code), and
the designed local distances $1=\delta_0\le \delta_1 \leq \ldots \leq \delta_h \leq \delta_{h+1}$, 
where $\delta_{h+1}$ is the designed distance of the overall code $\cC.$

We will assume that the first local code is MDS and thus its length is $n_1=r_1+\delta_1-1$. For $1\leq i\leq h$, let $n_{i+1}=\nu_i n_i$ be the length of the code in the $(i+1)$st level of hierarchy, where $\nu_i>1$ is an integer.
Let $\mathbb{F}_q$ be a finite field and suppose that $n_{h+1}|(q-1)$. 

We construct a cyclic H-LRC code $\cC$ over $\mathbb{F}_q$ of length $n=n_{h+1}$ and designed (local) distances $\delta_1,\ldots,\delta_{h+1}$ as follows.
Let $\alpha\in \mathbb{F}_q$ be a primitive $n$-th root of unity. 
The code $\cC$ will be given by its defining set of zeros $\cZ$ which we specify via a recursive procedure. 
Consider the set of 
exponents $\cD_1=\{1,\ldots,\delta_1-1\}$ of the primitive element $\alpha$. Further, let $\cL_1=\emptyset$ and
\begin{align}
\cZ_1=\cL_1\cup\cD_1.\label{eq:def-set-1}
\end{align}
Having \eqref{eq:zeros} in mind, for $1\leq i\leq h$ let \vspace*{-.1in}
     \begin{equation} \label{eq:def-set-2}
     \begin{aligned}
\cL_{i+1} = \bigcup_{s=0}^{\nu_i-1}(\cZ_i+ &sn_i), \quad
\cD_{i+1} = \{1,\ldots, \delta_{i+1}-1\},\\[-.05in]
&\cZ_{i+1} = \cL_{i+1}\cup\cD_{i+1}.
\end{aligned}
     \end{equation}
Finally, put $\cZ=\cZ_{h+1}$. 

The generator polynomial of the cyclic code $\cC$ of length $n$ is given by
\begin{align}
g(x) = \prod_{t\in\cZ}(x-\alpha^t).\label{eq:hlrc-def}
\end{align}

The parameters of the code $\cC$ are estimated as follows.

\begin{proposition}
	\label{prop:local}
	(i) The dimension of the code $\cC$ is $n-|\cZ|$ and the distance $d\ge \delta_{h+1}.$
	(ii) 
	The code $\cC$ is an $h$-level H-LRC code with locality
	$(n_i-|\cZ_i|,\delta_{i}),i=1,\dots,h.$
\end{proposition}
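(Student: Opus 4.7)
The plan is to establish (i) via a straightforward dimension count together with the BCH bound, and (ii) by iterated application of Lemma~\ref{le:local0} one level at a time down the hierarchy.

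For (i), I would first verify by induction on $i$ that $\cZ_i\subset\{0,1,\dots,n_i-1\}$ with pairwise distinct elements. The key observation is that $\cL_{i+1}$ is the disjoint union of the shifted copies $\cZ_i+sn_i$ for $s=0,\dots,\nu_i-1$, each lying in a distinct length-$n_i$ block of $\{0,\dots,n_{i+1}-1\}$; the additional set $\cD_{i+1}=\{1,\dots,\delta_{i+1}-1\}$ lies in the initial block and is absorbed without violating the containment (it may overlap with $\cZ_i$, but $\cZ$ is taken as a set). Consequently $g(x)=\prod_{t\in\cZ}(x-\alpha^t)$ has degree exactly $|\cZ|$, so $\dim(\cC)=n-|\cZ|$. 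For the distance, since $\cD_{h+1}=\{1,\dots,\delta_{h+1}-1\}\subset\cZ_{h+1}$, the BCH bound gives $d(\cC)\geq\delta_{h+1}$.

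For (ii), I would proceed by induction on the number of levels $h$. At the top level I apply Lemma~\ref{le:local0} to $\cC$ with parameters $m=n_h$, $\nu=\nu_h$, and local zero set $\cZ_h$: the containments $\{1,\dots,\delta_h-1\}=\cD_h\subset\cZ_h\subset\{0,\dots,n_h-1\}$ are exactly what step (i) has just verified, and by construction the shifted union in \eqref{eq:def-set-2} coincides with the set $\cL=\bigcup_{s=0}^{\nu_h-1}(\cZ_h+sn_h)$ of the lemma. The lemma then produces a punctured local code $\cC^{(h)}$ of length $n_h$ satisfying $\dim(\cC^{(h)})\leq n_h-|\cZ_h|=r_h$ and $d(\cC^{(h)})\geq\delta_h$, and further identifies $\cC^{(h)}$ as a cyclic code whose zero set contains $\{\beta^{i}:i\in\cZ_h\}$ where $\beta=\alpha^{\nu_h}$ is a primitive $n_h$-th root of unity in $\ff_q$. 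That is, $\cC^{(h)}$ is a subcode of the cyclic code produced by running the same construction but with one fewer level of hierarchy. By the inductive hypothesis, that target code is an $(h-1)$-level H-LRC code with the locality $(n_i-|\cZ_i|,\delta_i)$ for $i=1,\dots,h-1$, and cyclic shifts cover all coordinates with congruent local codes.

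The main subtlety I expect is the subcode inheritance step. The zero set of $\cC^{(h)}$ is only guaranteed to \emph{contain} $\{\beta^{i}:i\in\cZ_h\}$ and may in principle be strictly larger (precisely when some $u\notin\cZ_h$ satisfies $g(\alpha^{u+sn_h})=0$ for every $s$, as the converse part of Lemma~\ref{le:local0} describes). Thus $\cC^{(h)}$ is in general a subcode of the intended next-level construction, rather than equal to it. Fortunately the H-LRC bounds in Definition~\ref{def:H-LRC} are monotone in the right direction under the subcode relation: for any punctured piece $\cC_J^{(h)}\subseteq\cC_J^{\text{target}}$, we have $\dim(\cC_J^{(h)})\leq\dim(\cC_J^{\text{target}})\leq r_j$ and $d(\cC_J^{(h)})\geq d(\cC_J^{\text{target}})\geq \delta_j$, so every level of the inherited hierarchy still satisfies (a)--(c). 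The remaining work is bookkeeping: confirming at each recursive step that the definitions \eqref{eq:def-set-1}--\eqref{eq:def-set-2} are preserved under the substitution $\alpha\mapsto\beta=\alpha^{\nu_h}$, $n\mapsto n_h$, $h\mapsto h-1$, which is immediate from construction.
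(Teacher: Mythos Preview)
Your proposal is correct. Part (i) matches the paper exactly. For part (ii) you and the paper both rely on Lemma~\ref{le:local0}, but you organize the argument differently.

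The paper does not peel the hierarchy one level at a time. Instead it observes, in one line, that for \emph{every} $i=1,\dots,h$ the global generator $g(x)$ is divisible by $\prod_{t\in\bigcup_{s=0}^{n/n_i-1}(\cZ_i+sn_i)}(x-\alpha^t)$; this follows immediately from unwinding the recursion \eqref{eq:def-set-2} (each $\cZ_{i+1}$ contains the $\nu_i$ shifted copies of $\cZ_i$). Hence Lemma~\ref{le:local0} applies directly to $\cC$ with $m=n_i$, $\nu=n/n_i$, producing the $i$th local code for every level at once. The nesting condition (c) of Definition~\ref{def:H-LRC} then comes for free from the divisibility $n_j\mid n_i$, which makes the support of the $j$th local code a subset of the support of the $i$th local code for $j<i$.

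Your route---extract $\cC^{(h)}$, identify it as a subcode of the $(h-1)$-level construction with primitive root $\beta=\alpha^{\nu_h}$, and invoke monotonicity of the H-LRC requirements under passing to subcodes---is also valid and in some sense more explicit about condition (c). The subcode issue you flag is real but harmless here, and your monotonicity argument handles it correctly (cyclicity guarantees the local codes are nonzero, so support coverage is not lost). The tradeoff is that the paper's direct approach avoids both the induction on $h$ and the subcode/monotonicity discussion entirely, at the cost of being terse about why the nested structure holds.
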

\begin{proof}
	$(i)$ The value of the dimension is clear from the construction, and the estimate for the distance comes from the BCH bound.
	
	$(ii)$ The statement about the locality parameters follows by Lemma~\ref{le:local0} 
once we observe that $g(x)$ is divisible by $\prod_{t\in\bigcup_{s=0}^{n/n_i-1}(\cZ_i+sn_i)}(x-\alpha^t)$ for every $i=1,\dots,h$.
\end{proof}

Next we examine the conditions that suffice for the distance of $\cC$ to meet the bound \eqref{eq:sb} with equality. 
We build up the optimality of our code in an incremental manner in the sense that we first ensure that the first local codes are optimal 
(i.e., MDS codes), and relying on these optimal local codes we make sure that the second local codes are optimal 
(i.e., optimal LRC codes), and so forth until we reach the outermost code. 

Let $r_1 < r_2 < \ldots < r_h<r_{h+1}=\dim(\cC)$ be the chosen values of the dimensions of the local codes.
 As before, we set $n_1=r_1+\delta_1-\delta_0$ 
and let $n_{i+1}=\nu_i n_i$ for $1\leq i\leq h$ where the integer number $\nu_i$ satisfies $\nu_i\geq \lceil r_{i+1}/r_i \rceil$.
Note that this assumption does not entail a loss of generality since, assuming the $i$th and $(i+1)$st local codes are optimal, by \eqref{eq:def-set-2} we have
 $|\cL_{i+1}|=(n_i-r_i)\nu_i$, and $r_{i+1}=n_{i+1}-|\cZ_{i+1}|\le n_{i+1}-|\cL_{i+1}|=r_i\nu_i.$

To show optimality, we connect the target values of the local distances $\delta_2,\delta_3,\ldots,\delta_{h+1}$ with the dimension values
through several auxiliary parameters. For $1\leq i\leq h$, let us write $r_{i+1} = a_i r_i -b_i$ where $0\leq b_i< r_i$.
Further, let $b^{(i)}_i=b_i$ and for $j=i,i-1,\ldots,2$, let 
\begin{align*}
 b_j^{(i)}=u_{j-1}^{(i)} r_{j-1}+b_{j-1}^{(i)}, \quad 0\le b_{j-1}^{(i)}<r_{j-1}.
\end{align*}
Put $b^{(0)}_0=0$ and for $1\leq i\leq h$ let
\begin{align}
b^{(i)}_0 = (b^{(i)}_1 + b^{(i-1)}_0) \bmod r_1, \quad u^{(i)}_0 = \left\lfloor \frac{b^{(i)}_1 + b^{(i-1)}_0 }{r_1}\right\rfloor.\label{eq:b0}
\end{align}
Finally, for $1\leq i\leq h$, let \vspace*{-.1in}
\begin{align}
\delta_{i+1} = (\nu_i-a_i)n_i+\delta_i+\sum_{j=1}^{i-1}u^{(i)}_j n_j + u^{(i)}_0 n_1+b^{(i)}_0-b^{(i-1)}_0.\label{eq:d}
\end{align}
The high-level ideas behind these parameters are as follows. 
By Lemma~\ref{le:local0}, the quantities $\delta_1,\ldots,\delta_{h+1}$ control the distances via the BCH bound and we would like to make these 
quantities as large as possible given the target dimensions. We need to make sure that the $i$th local code has a run of consecutive zeros  
of length $\delta_i-1,$ and our budget of creating such a run is limited by the dimension. 
Therefore, we seek to rely on the already present runs of zeros of the $j$th local codes, $j<i$, and spend the budget 
frugally on the way to optimality. 
The quantities $b^{(i)}_j$ serve the purpose bridging the ``distance gap'' (the gaps between runs of zeros) between the local codes in
levels $j$ and $j+1$ on the way to ensure the distance of the $i$th code.  

As for the dimensions of the local codes, by Lemma~\ref{le:local0} they  are determined by $\cZ_i,1\leq i\leq h+1$. 
The cardinality of $\cZ_i$ is established in the next claim.
\begin{proposition}
	\label{cl:card}
	For $1\leq i\leq h+1$, we have $|\cZ_i|=n_i-r_i$.
\end{proposition}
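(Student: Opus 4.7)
My plan is to induct on $i$, with the base case $i=1$ immediate: $\cZ_1 = \cD_1 = \{1,\ldots,\delta_1-1\}$ has cardinality $\delta_1-1 = n_1 - r_1$ by the defining relation $n_1 = r_1 + \delta_1 - 1$. For the inductive step, I assume $|\cZ_i| = n_i - r_i$ and aim for $|\cZ_{i+1}| = n_{i+1} - r_{i+1}$. A short parallel induction first shows $\cZ_j \subseteq \{1,\ldots,n_j-1\}$ for every $j$ (both $\cD_j$ and every shifted copy in $\cL_j$ are bounded in this range, using $\delta_j \leq n_j$). This guarantees that the $\nu_i$ shifts $\cZ_i + s n_i$ in the definition of $\cL_{i+1}$ are pairwise disjoint subsets of $\{0,\ldots,n_{i+1}-1\}$, whence $|\cL_{i+1}| = \nu_i(n_i - r_i)$.

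Applying inclusion-exclusion to $\cZ_{i+1} = \cL_{i+1} \cup \cD_{i+1}$ and using $r_{i+1} = a_i r_i - b_i$, the desired equality is equivalent to
\[
|\cL_{i+1} \cap \cD_{i+1}| \;=\; \delta_{i+1} - 1 - (\nu_i - a_i) r_i - b_i.
\]
To evaluate the left-hand side, I would partition $\{0,1,\ldots,n_{i+1}-1\}$ into $\nu_i$ consecutive blocks of length $n_i$. The formula for $\delta_{i+1}$ is engineered so that $\delta_{i+1} - 1$ spans $\nu_i - a_i$ complete blocks (each intersecting $\cL_{i+1}$ in $n_i - r_i$ elements) plus a tail of length $\delta'_{i+1} - 1$ in the next block, where $\delta'_{i+1} := \delta_{i+1} - (\nu_i - a_i) n_i$. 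This partitioning reduces the claim to the interior identity
\[
|\cZ_i \cap \{1,\ldots,\delta'_{i+1}-1\}| \;=\; \delta'_{i+1} - 1 - b_i,
\]
i.e., that exactly $b_i$ positions in the initial segment of length $\delta'_{i+1}-1$ are missing from $\cZ_i$.

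The main obstacle, and the combinatorial heart of the argument, is this last identity. My approach is a secondary induction that peels off the recursion $\cZ_i = \cL_i \cup \cD_i$ level by level. The positions $\{1,\ldots,\delta_i-1\}$ lie entirely in $\cD_i \subseteq \cZ_i$ and contribute no missing elements. Past $\delta_i$, the term $u_j^{(i)} n_j$ in the expansion of $\delta'_{i+1} - \delta_i$ corresponds to traversing $u_j^{(i)}$ complete $n_j$-sized sub-blocks at the $j$th level of the self-similar structure of $\cZ_i$, each of which contributes $r_j$ fresh missing elements (since $|\cZ_j| = n_j - r_j$ by the outer inductive hypothesis). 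Iterating the recursion $b_j^{(i)} = u_{j-1}^{(i)} r_{j-1} + b_{j-1}^{(i)}$ and the relation for $b_0^{(i)}, u_0^{(i)}$ gives the dual expansion $b_i = \sum_{j=1}^{i-1} u_j^{(i)} r_j + u_0^{(i)} r_1 + b_0^{(i)} - b_0^{(i-1)}$, which matches term-by-term the expression for $\delta'_{i+1} - \delta_i$ with each $n_j$ replaced by $r_j$. The fine-scale residue correction $b_0^{(i)} - b_0^{(i-1)}$ accounts for misalignments at the innermost $n_1$-scale. The whole point of the definitions is to make this mixed-radix bookkeeping close exactly, so once the matching between the two expansions is set up, the identity drops out.
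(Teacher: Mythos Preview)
Your proposal is correct and follows essentially the same inductive route as the paper: both reduce to counting, within the $(\nu_i-a_i+1)$-st block, how many positions of the tail of $\cD_{i+1}$ lie outside the shifted copy of $\cZ_i$, and both verify this count equals $b_i$ by unwinding the self-similar structure of $\cZ_i$ against the mixed-radix expansion $b_i=\sum_{j=1}^{i-1}u_j^{(i)}r_j+u_0^{(i)}r_1+b_0^{(i)}-b_0^{(i-1)}$. The paper isolates the $n_1$-scale alignment you describe as the ``fine-scale residue correction'' into a standalone preliminary claim, $\delta_1\equiv\delta_i-b_0^{(i-1)}\pmod{n_1}$, and counts the complement directly rather than via inclusion--exclusion, but otherwise the arguments coincide.
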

The proof of this proposition proceeds by induction and is given in Appendix~\ref{app:card}. An examination of the proof
also gives a better understanding of the parameters $a_i,b_i$ introduced above.

On account of Proposition~\ref{cl:card}, the locality parameters of the code $\cC$ are $(r_i,\delta_i),1\leq i\leq h.$ 
Furthermore, $\dim{(\cC)}=r_{h+1}$, and by the BCH bound $d(\cC)\geq \delta_{h+1}$.

Sufficient conditions for optimality of the code $\cC$ are given in the following lemma whose proof is given in Appendix~\ref{app:le:opt}.

\begin{lemma}
	\label{le:opt}
Suppose that for $i\geq 2$ and $2\leq s \leq i$ the following conditions are satisfied:
	\begin{equation} \label{eq:opt}
	\begin{aligned}
	\left\lceil\frac{r_{s+1}}{r_s}\right\rceil \left\lceil\frac{r_s}{r_1}\right\rceil-\left\lceil\frac{r_{s+1}}{r_1}\right\rceil &=
	u^{(s)}_0 + u^{(s)}_1 + \sum_{j=2}^{s-1}u^{(s)}_j\left\lceil \frac{r_j}{r_1} \right\rceil\\
	\left\lceil\frac{r_{s+1}}{r_s}\right\rceil \left\lceil\frac{r_s}{r_l}\right\rceil-\left\lceil\frac{r_{s+1}}{r_l}\right\rceil &= u^{(s)}_l + \sum_{j=l+1}^{s-1}u^{(s)}_j\left\lceil \frac{r_j}{r_l} \right\rceil,
	\end{aligned}
	\end{equation}
where the first condition holds for $s\geq 2$ and the second for $2\le l\le s-1.$	
	Then for $1\leq i\leq h$, we have
	\begin{align*}
	\delta_{i+1} = n_{i+1}-r_{i+1}+\delta_{i}-\sum_{l=1}^{i}\left\lceil \frac{r_{i+1}}{r_l} \right\rceil (\delta_l-\delta_{l-1}).
	\end{align*}
\end{lemma}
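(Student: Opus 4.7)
The plan is to argue by induction on $i$, starting from \eqref{eq:d} and using \eqref{eq:opt} together with the inductive hypothesis to reduce the claim to a purely combinatorial identity for the auxiliary parameters $b^{(i)}_j,u^{(i)}_j$. The base case $i=1$ is a short direct computation: the constraints $b^{(0)}_0=0$ and $0\leq b_1<r_1$ force $b^{(1)}_0=b_1$ and $u^{(1)}_0=0$, and then \eqref{eq:d}, together with $a_1 r_1=r_2+b_1$ and $n_1=r_1+\delta_1-1$, immediately produces the claimed formula for $\delta_2$.

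For the inductive step I will first extract from the (strong) inductive hypothesis the telescoping identity
\[
\sum_{l=1}^{j}\left\lceil\frac{r_j}{r_l}\right\rceil(\delta_l-\delta_{l-1})=n_j-r_j,\quad 1\leq j\leq i,
\]
obtained by adding $(\delta_j-\delta_{j-1})$ to both sides of the formula for $\delta_j$. Substituting \eqref{eq:d} and rewriting $(\nu_i-a_i)n_i=n_{i+1}-a_i n_i$ (where $a_i=\lceil r_{i+1}/r_i\rceil$), I would cancel the $n_{i+1}$ and $\delta_i$ terms against the target formula and use the identity above at $j=i$ to expand $a_i n_i$. Collecting what remains by coefficients of each $(\delta_l-\delta_{l-1})$ produces exactly the quantities $a_i\lceil r_i/r_l\rceil-\lceil r_{i+1}/r_l\rceil$, which are the left-hand sides of \eqref{eq:opt}. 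Applying \eqref{eq:opt}, swapping the order of summation, and invoking the telescoping identity once more for each $j<i$ makes the $\sum_{j=1}^{i-1}u^{(i)}_j n_j$ contribution of \eqref{eq:d} cancel exactly, and the whole inductive step reduces to the single numerical identity
\[
b_i=u^{(i)}_0 r_1+\sum_{j=1}^{i-1}u^{(i)}_j r_j+b^{(i)}_0-b^{(i-1)}_0.
\]

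To close, I would verify this residual identity by unrolling the recursion $b^{(i)}_j=u^{(i)}_{j-1}r_{j-1}+b^{(i)}_{j-1}$ from $j=i$ down to $j=2$, which gives $b_i=b^{(i)}_i=\sum_{j=1}^{i-1}u^{(i)}_j r_j+b^{(i)}_1$; \eqref{eq:b0} then supplies $b^{(i)}_1=u^{(i)}_0 r_1+b^{(i)}_0-b^{(i-1)}_0$ and the identity follows. I expect the main technical obstacle to be the bookkeeping during the substitution: the multi-indexed $b^{(i)}_j,u^{(i)}_j$ must be tracked carefully and the double sums reorganized so that the conditions \eqref{eq:opt} can be applied coefficient-wise. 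The key conceptual point is that \eqref{eq:opt} is precisely tailored so that the coefficients of every $(\delta_l-\delta_{l-1})$ match on the two sides; any deviation would leave a residual term and prevent $\delta_{i+1}$ from meeting the bound in \eqref{eq:sb}.
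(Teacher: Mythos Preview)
Your proposal is correct and follows essentially the same route as the paper's own proof: induction on $i$, with the base case handled directly from \eqref{eq:d} using $u^{(1)}_0=0$, and the step carried out by writing $(\nu_i-a_i)n_i=n_{i+1}-a_in_i$, invoking the (strong) inductive hypothesis in the form $n_j-r_j=\sum_{l=1}^{j}\lceil r_j/r_l\rceil(\delta_l-\delta_{l-1})$ for $j\le i$, collecting coefficients of each $(\delta_l-\delta_{l-1})$, and matching them via \eqref{eq:opt}. The only cosmetic difference is that you isolate the residual identity $b_i=u^{(i)}_0 r_1+\sum_{j=1}^{i-1}u^{(i)}_j r_j+b^{(i)}_0-b^{(i-1)}_0$ as a separate verification at the end, whereas the paper absorbs it at the outset (passing from $\sum_j u^{(i)}_j n_j+u^{(i)}_0 n_1+b^{(i)}_0-b^{(i-1)}_0$ directly to $\sum_j u^{(i)}_j(n_j-r_j)+u^{(i)}_0(n_1-r_1)+b_i$); the two orderings are equivalent.
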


It follows from Lemma~\ref{le:local0}, Lemma~\ref{le:opt}, and the bound \eqref{eq:sb} that the code $\cC$ is 
an $(n=n_{h+1},k=r_{h+1})$ optimal H-LRC code with local parameters $(r_i,\delta_i),1\leq i\leq h$. 
Clearly, when $h=0$ our construction gives an $(n_1,r_1)$ MDS code and when $h=1,$ it gives an $(n_2,r_2)$ optimal LRC code of \cite{tamo2016cyclic}.
For $h=2$, conditions \eqref{eq:opt} take a simpler form: 
\begin{align}
\left\lceil\frac{r_{3}}{r_2}\right\rceil \left\lceil\frac{r_2}{r_1}\right\rceil-\left\lceil\frac{r_{3}}{r_1}\right\rceil
&= \left\lfloor\frac{b_1+b_2}{r_1} \right\rfloor. \label{eq:h-2}
\end{align}
We note that the condition of \cite[Theorem 2.6]{SAC2015} is easily seen to be equivalent to \eqref{eq:h-2}. Another 
known case of optimality, the divisibility conditions $r_i| r_{i+1}, i=1,\dots,h,$ is also covered by Lemma~\ref{le:opt} 
(in this case both the left-hand sides and the right-hand sides of \eqref{eq:opt} are zero).

Let us give a general example of the choice of parameters that ensures optimality. 
Suppose that $r_1 \geq 2^h$ and $r_{i+1} = 2r_i - 1$ for $1\leq i\leq h$. Then 
conditions \eqref{eq:opt} are satisfied. Indeed, we have $r_i = 2^{i-1}(r_1-1)+1$ for $1\leq i\leq h+1$. Therefore, for $1\leq j\leq i\leq h+1$ we have
     $$
\left\lceil \frac{r_i}{r_j} \right\rceil = \left\lceil 2^{i-j}-\frac{2^{i-j}-1}{2^j(r_1-1)+1} \right\rceil= 2^{i-j},
     $$
where the last equality follows because $r_1 \geq 2^h$. It follows that the left-hand sides of conditions \eqref{eq:opt} are zero. On the other hand, we have $b_i=1$ for all $1\leq i\leq h$. Since $r_1\geq 2^h$, we have $u^{(s)}_l=0$ for all $1\leq s\leq h$ and $0\leq l\leq s-1,$
and thus, the right-hand sides are also zero, which confirms the optimality claim.

\begin{proposition} Suppose that the conditions \eqref{eq:opt} are satisfied, then the code $\cC$ is a strongly optimal H-LRC code in the sense of Sec.~\ref{sec:IA}. 
\end{proposition}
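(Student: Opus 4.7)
The plan is to verify, for each level $i$ with $1\le i\le h$, that the $i$-th local code $\cC^{(i)}$ attains the bound \eqref{eq:sb} as an $(i-1)$-level H-LRC code in its own right.

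First I would exploit the self-similar nature of the construction: the recursion \eqref{eq:def-set-1}--\eqref{eq:def-set-2} implies that the local code $\cC^{(i)}$ obtained via \eqref{eq:C0} with stride $n/n_i$ is a cyclic code of length $n_i$ whose defining set equals the nested set $\cZ_i$, produced from the parameters $(r_1,\dots,r_i)$ and $(\delta_1,\dots,\delta_i)$ by exactly the same recipe used to build $\cC$ itself. Applying Proposition~\ref{prop:local} and Proposition~\ref{cl:card} at level $i$ then identifies $\cC^{(i)}$ as an $(n_i,r_i)$ cyclic code with hierarchical locality $(r_j,\delta_j)$ for $j=1,\dots,i-1$, and the BCH bound gives $d(\cC^{(i)})\ge \delta_i$.

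Next I would invoke Lemma~\ref{le:opt} with the outer index $i$ playing the role previously assigned to $h+1$. Since the hypotheses \eqref{eq:opt} are assumed for all $2\le s\le h$, they hold in particular for the truncated range $2\le s\le i-1$, so the lemma yields
\begin{align*}
\delta_i = n_i - r_i + \delta_{i-1} - \sum_{l=1}^{i-1}\left\lceil\frac{r_i}{r_l}\right\rceil(\delta_l-\delta_{l-1}).
\end{align*}
The right-hand side coincides with the value given by the generalized Singleton bound \eqref{eq:sb} applied to any $(i-1)$-level H-LRC code of length $n_i$ and dimension $r_i$. Combined with the BCH lower bound of the previous step, this forces $d(\cC^{(i)})=\delta_i$, so $\cC^{(i)}$ is an optimal $(i-1)$-level H-LRC code. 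The base case $i=1$ is a degenerate instance in which $\cC^{(1)}$ is an MDS code of length $n_1=r_1+\delta_1-1$ and distance $\delta_1$, and requires no hypothesis from \eqref{eq:opt}.

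Iterating this across $i=1,\dots,h$, together with the already established optimality of $\cC$ itself (the $i=h+1$ instance), yields strong optimality. The main bookkeeping obstacle I foresee lies in the structural identification of the first step: namely, checking carefully that puncturing $\cC$ at stride $n/n_i$ produces a cyclic code whose defining zero set is \emph{exactly} $\cZ_i$, with no extraneous zeros that would drop the dimension below $r_i$ or inflate the BCH distance beyond $\delta_i$. This is the same dimension-tracking argument already used in the final paragraph of the proof of Lemma~\ref{le:local0}, so no new ideas are needed; once the identification is in place, the rest of the argument is a direct invocation of Lemma~\ref{le:opt} and \eqref{eq:sb} with no further computation.
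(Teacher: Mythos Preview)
Your outline is correct and follows essentially the same route as the paper. Both arguments reduce strong optimality to the single claim that the $i$-th punctured code has dimension \emph{exactly} $r_i$ (equivalently, that its defining set is exactly $\cZ_i$), and both invoke the second half of Lemma~\ref{le:local0} for this. Be aware, though, that what you call ``bookkeeping'' is the entire content of the paper's proof: one must exhibit, for each $u\in\{0,\dots,n_i-1\}\setminus\cZ_i$, a lift $u+sn_i$ that avoids $\cZ$, and the paper does this by taking the last copy $s=n/n_i-1$ and checking that $\cD_{i+1}$ cannot reach into it, via the estimate $\delta_{i+1}\le (\nu_i-a_i+1)n_i\le n_{i+1}-n_i$ (which in turn uses $r_{i+1}>r_i$ to force $a_i\ge 2$). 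This step is short but not automatic, so you should not leave it implicit.
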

\begin{proof} It suffices to show that the dimension of the $i$-th local code \emph{equals} $r_i$ for all $i$. 
	
	By assumption, we have $r_{i+1}>r_i$ and thus $a_i\geq 2$ for $1\leq i\leq h$. It is not difficult to verify from \eqref{eq:d} that $\delta_{i+1}\leq (\nu_i-a_i+1)n_i \leq n_{i+1}-n_i$ for all $1\leq i\leq h$. 
	We claim that $g(\alpha^{n-n_i+t_i})\neq 0$ for every $t_i\in \cT_i, 1\leq i\leq h$ where
	\begin{align*}
		\cT_i=\{n\} \cup (\{1,\ldots,n_i-1\}\setminus\cZ_i+n-n_i).
	\end{align*}
	Then by the second part of Lemma~\ref{le:local0} the dimension of the $i$-th local code equals $r_i$ and the strong optimality follows.
	
Now let us show $n-n_i+t_i\notin\cZ$. Observe that the set $\cZ$ contains $n/n_i$ copies of $\cZ_i$ and the set $\cT_i$ is the complement to the last 
one of those copies with respect to $\{1,\ldots,n_i\}$. Indeed, we have $n-n_i+t_i=t_i+(n/n_i-1)n_i \notin \{n\}\cup(\cZ_i+(n/n_i-1)n_i)$. Now 
consider the last copy of $\cZ_{i+1}$ contained in $\cZ$. Obviously, it contains the last copy of $\cZ_{i}$. To establish $n-n_i+t_i\notin\cZ$, it 
remains to show $n-n_i+t_i$ is not in the last copy of $\cD_{i+1}$, namely, $n-n_i+t_i\notin \cD_{i+1}+n-n_{i+1}$. Since $\delta_{i+1}\leq n_{i+1}-
n_i$ as we observed above and $n_{i+1}-n_i+t_i\geq n_{i+1}-n_i+\delta_i$, we have $n_{i+1}-n_i+t_i\notin \cD_{i+1}$. 
	It follows that $n-n_i+t_i\notin \cD_{i+1}+n-n_{i+1}$. Therefore, $n-n_i+t_i\notin\cZ$ and $g(\alpha^{n-n_i+t_i})\neq 0$ for every $t_i\in \cT_i, 1\leq i\leq h$. 
	
	In the case that $r_{i+1}=r_i$ for some $1\leq i\leq h$ (although we rule out this trivial case in Definition~\ref{def:H-LRC}), the code $\cC$ is 
still strongly optimal if the optimality conditions are satisfied. In fact, if $r_{i+1}=r_i$ then from \eqref{eq:d} we have $\delta_{i+1}=(\nu_i-
a_i)n_i+\delta_i=n_{i+1}-n_i+\delta_i\notin\cD_{i+1}$. By similar arguments as above, we have $n-n_i+t_i\notin \cZ$ for every $t_i\in \cT_i, 1\leq i
\leq h$, and thus establish the strong optimality of the code.
\end{proof}

We conclude with a numerical example that shows that the assumptions on the parameters can be simultaneously satisfied for moderate
values of the length and alphabet size.
\begin{example}
	Consider the case $h=3$. Let $r_1=2$ and $\delta_1=2$. Then $n_1=3$. Let $(n_2,r_2)=(9,3)$, $(n_3,r_3)=(27,5)$, and $(n_4,r_4)=(81,7)$. Then our construction (with designed distances found from \eqref{eq:d}) gives rise to a strongly optimal H-LRC code of length $n=n_4$ and dimension $k=r_4$ with local distances $\delta_2=6,\delta_3=17$ and distance $d=\delta_4=53$ over a finite field $\mathbb{F}_q$ where $81|(q-1)$ (for example, we can take $q=163$).
\end{example}

	\subsection{Hierarchical cyclic codes of unbounded length}
In this section we construct a family of H-LRC codes with distance $d=\delta_h+1,h\ge 1$ and unbounded length. The construction combines the idea of \cite{luo2018optimal} with H-LRC codes of the previous section.

Let $1 < r_1 < r_2 < \ldots < r_h$ be integers. Let $1=\delta_0 < \delta_1$ and let $\delta_2,\delta_3,\ldots,\delta_h$ be as in \eqref{eq:d}. 
Again, we put $n_1=r_1+\delta_1-\delta_0$ and let $n_{i+1}=\nu_i n_i$ for $1\leq i\leq h-1,$ where $\nu_i\geq \lceil r_{i+1}/r_i \rceil$ is an integer. Let $\mathbb{F}_{q^m},m\ge 1$ be a finite field and let $n_h| (q-1)$. Let $n = q^m-1$ and observe that $n_h| n$. 
Let $\alpha\in \mathbb{F}_{q^m}$ be a primitive $n$-th root of unity.
Let $\cZ_h$ be constructed by the procedure in \eqref{eq:def-set-1} and \eqref{eq:def-set-2}. Finally, define
\begin{align}
\cL = \bigcup_{s=0}^{n/n_h-1}(\cZ_h+sn_h),\quad
\cZ = \cL \cup \{0\}.\label{eq:long}
\end{align}

Consider a cyclic code $\cC$ with generator polynomial
\begin{align}
\label{eq:c1}
g(x) = \prod_{t\in\cZ}(x-\alpha^t).
\end{align}
As is easily seen, $g(x)\in \mathbb{F}_q[x]$. Indeed, 
\begin{align*}
g(x) &= (x-1)\prod_{t\in\cL}(x-\alpha^t)\\
&= (x-1)\prod_{s=0}^{n/n_h-1}\prod_{t\in\cZ_h}(x-\alpha^{sn_h+t})\\
&= (x-1)\prod_{t\in\cZ_h}(x^{n/n_h}-\alpha^{nt/n_h}).
\end{align*}
For the last equality we note that $x^{n/n_h}-\alpha^{nt/n_h}=\prod_{s=0}^{n/n_h-1}(x-\alpha^{sn_h+t}).$
%where the last equality follows since the roots of $\prod_{t\in\cZ_h}(x^{n/n_h}-\alpha^{nt/n_h})$ in $\mathbb{F}_{q^m}$ are 
%$\alpha^{sn_h+t}, s=0,\ldots,n/n_h-1, t\in\cZ_h$. 
Observe that for $t\in\cZ_h$ we have $(\alpha^{nt/n_h})^{q-1}=1$ since $n_h|(q-1)$.
It follows that $\alpha^{nt/n_h}\in\mathbb{F}_q$ for $t\in\cZ_h$ and thus $g(x)\in \mathbb{F}_q[x]$.

\begin{proposition}
	\label{cl:d1}
	Let $\cC=\langle g(x)\rangle\in \mathbb{F}_q[x]/(x^n-1)$ be a cyclic code. Then $\dim(\cC)=nr_n/n_h-1$, $d(\cC)\geq \delta_h+1$, and 
the locality parameters are $(r_i,\delta_i)$ for $1\leq i\leq h$.
\end{proposition}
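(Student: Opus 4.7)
The proof naturally splits into the three claims of the proposition, each of which follows by cleanly packaging earlier results.

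The plan is, first, to compute $|\cZ|$ and deduce the dimension. By construction $\cL$ is a disjoint union of $n/n_h$ shifted copies of $\cZ_h$ (these are disjoint since, as a routine induction on $i$ shows, $\cZ_i\subset\{0,1,\dots,n_i-1\}$), and $0\notin\cL$ because $\cZ_h\subset\{1,\dots,n_h-1\}$. Hence
\[
|\cZ|=1+(n/n_h)\,|\cZ_h|=1+(n/n_h)(n_h-r_h)=n-\tfrac{n r_h}{n_h}+1,
\]
using Proposition~\ref{cl:card} for $|\cZ_h|=n_h-r_h$. Therefore $\dim(\cC)=n-|\cZ|=nr_h/n_h-1$, which gives the dimension claim.

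Second, I would derive the distance bound by invoking the BCH bound. Since $\cD_h=\{1,\dots,\delta_h-1\}\subset\cZ_h\subset\cL$ and $0\in\cZ$, the defining set $\cZ$ contains the $\delta_h$ consecutive integers $\{0,1,\dots,\delta_h-1\}$; the BCH bound then yields $d(\cC)\geq\delta_h+1$.

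Third, for locality I would apply Lemma~\ref{le:local0} at each hierarchical level $1\leq i\leq h$. The key observation is that the set $\cL$ contains the union $\bigcup_{s=0}^{n/n_i-1}(\cZ_i+s n_i)$: indeed, starting from $\cL=\bigcup_{s=0}^{n/n_h-1}(\cZ_h+sn_h)$ and expanding $\cZ_h$ recursively via the relation $\cZ_{j+1}\supset\bigcup_{s'=0}^{\nu_j-1}(\cZ_j+s' n_j)$ from \eqref{eq:def-set-2}, one reindexes $s'+s\nu_j$ into a single index ranging over $\{0,\dots,n/n_j-1\}$, and iterating this from $j=h-1$ down to $j=i$ gives the claim. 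Consequently $g(x)$ factors as $p_i(x)\prod_{t\in \bigcup_s(\cZ_i+sn_i)}(x-\alpha^t)$ for some $p_i(x)\in\ff_q[x]$ absorbing the factor $x-1$ and the leftover zeros. Since $\{1,\dots,\delta_i-1\}=\cD_i\subset\cZ_i\subset\{0,\dots,n_i-1\}$ (the inclusion on the right follows by the induction mentioned above), the hypotheses of Lemma~\ref{le:local0} (with $m=n_i$ and $\nu=n/n_i$) are met, and that lemma gives $(n_i-|\cZ_i|,\delta_i)=(r_i,\delta_i)$ locality, where the final equality again uses Proposition~\ref{cl:card}.

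The proof is essentially a bookkeeping exercise, so there is no single hard step; the only point requiring care is verifying the nesting of the zero sets, namely that $\cL$ indeed contains $n/n_i$ disjoint shifted copies of $\cZ_i$ for every $i\leq h$, which unwinds the recursive definition \eqref{eq:def-set-2} and relies on the containment $\cZ_i\subset\{0,\dots,n_i-1\}$ that must be proved by induction alongside it.
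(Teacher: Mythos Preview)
Your argument is correct and matches the paper's proof essentially step for step: the dimension is computed from $|\cZ|=1+(n/n_h)|\cZ_h|$ together with Proposition~\ref{cl:card}, the distance bound comes from the run of consecutive zeros $\{0,1,\dots,\delta_h-1\}$ and the BCH bound, and the locality follows from Lemma~\ref{le:local0} applied at each level via the containment $\bigcup_{s=0}^{n/n_i-1}(\cZ_i+sn_i)\subset\cL$. If anything, you supply more detail than the paper does in justifying that last containment and the auxiliary fact $\cZ_i\subset\{1,\dots,n_i-1\}$.
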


\begin{proof}
The dimension of $\cC$	is found as 
   $$ 
   k=n-\deg(g(x))=n-1-\frac{|\cZ_h|n}{n_h}=
	 \frac{nr_h}{n_h}-1,
	 $$
where the last equality follows by Proposition~\ref{cl:card}. 
The distance of the code $\cC$ is $d\geq \delta_h+1$ since $g(x)$ has consecutive roots $\alpha^t,t=0,\ldots,\delta_h-1$. 
	
	The locality parameters of the code $\cC$ follow immediately by Lemma~\ref{le:local0} and Proposition~\ref{cl:card}.
\end{proof}

The next lemma provides the conditions when the code $\cC$ is optimal. Its proof amounts to a calculation based on Lemma~\ref{le:opt} and
Proposition~\ref{cl:d1}.

\begin{lemma}
	\label{le:d1-opt}
	Suppose that for $1\leq i\leq h-1$, conditions \eqref{eq:opt} are satisfied. Further, suppose that 
	\begin{align}
	\frac{n}{n_h}\left\lceil \frac{r_h}{r_l} \right\rceil = \left\lceil \frac{k}{r_l} \right\rceil, \quad 1\leq l\leq h-1.\label{eq:opt-d1}
	\end{align} Then the code $\cC$ is optimal.
\end{lemma}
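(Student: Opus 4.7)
The plan is to establish optimality by matching the lower bound $d\geq \delta_h+1$ from Proposition~\ref{cl:d1} with the upper bound \eqref{eq:sb}. Since the generalized Singleton bound for an $h$-level H-LRC code with locality $(r_i,\delta_i),1\leq i\leq h$ gives $d\leq n-k+\delta_h-\sum_{i=1}^{h}\lceil k/r_i\rceil(\delta_i-\delta_{i-1})$, to prove optimality it suffices to establish the identity
\begin{equation*}
\sum_{i=1}^{h}\left\lceil \frac{k}{r_i}\right\rceil(\delta_i-\delta_{i-1})=n-k-1=\frac{n\,|\cZ_h|}{n_h},
\end{equation*}
where the rightmost equality uses $k=nr_h/n_h-1$ from Proposition~\ref{cl:d1} together with $|\cZ_h|=n_h-r_h$ from Proposition~\ref{cl:card}.

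First I would invoke Lemma~\ref{le:opt} at the index $i=h-1$, which under the hypothesis that \eqref{eq:opt} holds for $1\leq i\leq h-1$ yields
\begin{equation*}
\delta_h-\delta_{h-1}=n_h-r_h-\sum_{l=1}^{h-1}\left\lceil \frac{r_h}{r_l}\right\rceil(\delta_l-\delta_{l-1})=|\cZ_h|-\sum_{l=1}^{h-1}\left\lceil \frac{r_h}{r_l}\right\rceil(\delta_l-\delta_{l-1}).
\end{equation*}
Multiplying throughout by $n/n_h$ and applying the hypothesis \eqref{eq:opt-d1} to substitute $\lceil k/r_l\rceil$ for $(n/n_h)\lceil r_h/r_l\rceil$ on the right, one obtains
\begin{equation*}
\frac{n}{n_h}(\delta_h-\delta_{h-1})=\frac{n\,|\cZ_h|}{n_h}-\sum_{l=1}^{h-1}\left\lceil \frac{k}{r_l}\right\rceil(\delta_l-\delta_{l-1}).
\end{equation*}

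It remains to verify that the coefficient $n/n_h$ of $\delta_h-\delta_{h-1}$ agrees with the missing $l=h$ term in the target sum, that is, $\lceil k/r_h\rceil=n/n_h$. This follows from a one-line computation: since $k=nr_h/n_h-1$, we have $k/r_h=n/n_h-1/r_h$, and because $n/n_h$ is a positive integer while $0<1/r_h<1$, the ceiling equals $n/n_h$. Substituting back and rearranging delivers the desired identity, and the optimality of $\cC$ with respect to \eqref{eq:sb} follows.

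The subtle point I anticipate is aligning the telescoping form of the distance recursion in Lemma~\ref{le:opt} with the weighted sum in the generalized Singleton bound \eqref{eq:sb}: the hypothesis \eqref{eq:opt-d1} is tailored precisely to convert each ceiling $\lceil r_h/r_l\rceil$ arising at the top level of the recursion into the ceiling $\lceil k/r_l\rceil$ that appears in the bound, while the $l=h$ contribution is absorbed cleanly by the trivial identity $\lceil k/r_h\rceil=n/n_h$ made possible by the $-1$ in the dimension $k=nr_h/n_h-1$.
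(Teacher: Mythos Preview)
Your proof is correct and follows essentially the same approach as the paper: both invoke Lemma~\ref{le:opt} at level $i=h-1$, use hypothesis~\eqref{eq:opt-d1} to trade $(n/n_h)\lceil r_h/r_l\rceil$ for $\lceil k/r_l\rceil$, and handle the $l=h$ term via $\lceil k/r_h\rceil=n/n_h$. The only cosmetic difference is that the paper evaluates the Singleton bound directly and simplifies it to $\delta_h+1$, whereas you first isolate the sum $\sum_{i=1}^h\lceil k/r_i\rceil(\delta_i-\delta_{i-1})$ and show it equals $n-k-1$; the algebra is identical.
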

\begin{proof}
By Lemma~\ref{le:opt}, we have
\begin{align}
\delta_{h} = n_{h}-r_{h}+\delta_{h-1}-\sum_{l=1}^{h-1}\left\lceil \frac{r_{h}}{r_l} \right\rceil (\delta_l-\delta_{l-1}).\label{eq:d1}
\end{align}

By Proposition~\ref{cl:d1}, we have $k=nr_h/n_h-1$. Using the bound \eqref{eq:sb}, the distance of the code cannot exceed
\begin{align}
n-k+\delta_{h}-&\sum_{l=1}^{h}\left\lceil \frac{k}{r_l} \right\rceil (\delta_l-\delta_{l-1})\nonumber\\
&= n-\frac{nr_h}{n_h}+1+\delta_h-\frac{n}{n_h} (\delta_h-\delta_{h-1})-\sum_{l=1}^{h-1}\left\lceil \frac{k}{r_l} \right\rceil (\delta_l-\delta_{l-1})\label{eq:k}\\
&= 1+\delta_h+\frac{n}{n_h}\sum_{l=1}^{h-1}\left\lceil \frac{r_{h}}{r_l} \right\rceil (\delta_l-\delta_{l-1})-\sum_{l=1}^{h-1}\left\lceil \frac{k}{r_l} \right\rceil (\delta_l-\delta_{l-1})\label{eq:d1-sb1}\\
&= 1+\delta_h,\label{eq:d1-sb2}
\end{align}
where \eqref{eq:k} follows since $r_h>1$ implies $\lceil k/r_h \rceil=n/n_h$, in \eqref{eq:d1-sb1} we used \eqref{eq:d1}, and \eqref{eq:d1-sb2} follows by \eqref{eq:opt-d1}.
Hence, the code $\cC$ has the largest possible distance $d=\delta_h+1$.
\end{proof}

In particular, the conditions in Lemma~\ref{le:d1-opt} are satisfied when $r_i|r_{i+1}, i=1,\dots,h-1$ and $r_h|k.$

As in Sec.~\ref{sec:hlrc}, the code $\cC$ constructed above in this section has the strong optimality property
if the optimality conditions in Lemma~\ref{le:d1-opt} are satisfied. 
Specifically, the main difference between the construction in this section and 
the one in the previous section is in the final step of constructing the defining set $\cZ$, which also includes element $0$ (i.e., $\alpha^0$). 
By an argument similar to Sec.~\ref{sec:hlrc}, one can show $n-n_i+t_i\notin \cZ$ for 
every $t_i\in (\{0,\ldots,n_i-1\}\setminus\cZ_i)+(n-n_i), 1\leq i\leq h$ and so strong optimality follows.

\begin{example}
	Consider the case $h=3$. Let $r_1=2$ and $\delta_1=2$. Then $n_1=3$. Let $(n_2,r_2)=(9,3)$ and $(n_3,r_3)=(27,5)$.
	Let $m\geq 1$ be an arbitrary integer and $q=163$.
	Then our construction (with designed local distances given by \eqref{eq:d}) gives rise to a strongly optimal H-LRC code of length $n=163^m-1$ and dimension $k=2(163^m-1)/9$ with local distances $\delta_2=6,\delta_3=17$ and distance $d=18$ over $\mathbb{F}_q$.
\end{example}

Recall that \cite{beemer18} shows that the length of an optimal LRC code in the general case cannot be greater than a certain power of the
alphabet size $q$. Using similar arguments, it might be possible to derive upper bounds on the length of optimal H-LRC codes in the general case; however already in the case of $(r,\delta)$ locality addressed in \cite{CaiMiaoSchwartzTang18} (with just a single level of hierarchy), following this route requires cumbersome calculations.

%%
%%    CONVOLUTIONAL CODES
%%

\section{Convolutional codes with locality}\label{sec:cc}

It has been recognized a long while ago that quasi-cyclic codes can be encoded convolutionally, and multiple papers 
constructed families of convolutional codes from their quasicyclic counterparts \cite{Justesen1990,esmaeili1998link,Tanner2004}. In this section, we present a family of convolutional codes with locality by relying on the tailbiting version of convolutional codes \cite{SvT79}. 
We single out this approach because it enables us to establish the locality properties of convolutional codes based on the
properties of cyclic H-LRC codes constructed above in this paper.

We begin with a brief reminder of the basic notions for convolutional codes \cite{JZ15}. 
Let $D$ be an indeterminate and let $\ff_q(D)$ be the field of rational functions of one variable over $\ff_q.$ A $q$-ary $(n,k)$ convolutional code
$\cC$ is a linear $k$-dimensional subspace of $\ff_q(D)^n.$ A generator matrix $G(D)=(g_{ij}(D))$ of the code $\cC$ 
is a $k\times n$ matrix with entries in $\ff_q(D)$ whose rows form a basis of $\cC$.  Thus, the code $\cC$ is a linear space 
$\{u(D)G(D) \mid u(D)\in \ff_q(D)^k\}.$ The matrix $G(D)$ can be transformed to the polynomial form by 
multiplying every element by the least common denominator of its entries. The transformed matrix generates the same code $\cC$, and so 
in the sequel we will consider only \emph{polynomial generator matrices}. Below we will 
assume that the generator matrix $G(D)$ is a $k\times n$ matrix with entries in $\ff_q[D],$ where $\ff_q[D]$ is the ring of polynomials over $\ff_q.$

For $1\leq i\leq k$, the \emph{degree $m_i$ of the $i$-th row of $G(D)$} is the maximum degree of the entries in row $i$, namely, $m_i=\max_{1\leq j\leq n}\deg(g_{i,j}(D))$. As with linear block codes, the encoding of a convolutional code depends on the choice of a generator matrix. The maximum degree $M:=\max_{1\le i\le k}m_i$ is called the \emph{memory} of the encoder. The generator matrix of the code $\cC$ can also
be written in the form
   \begin{equation}\label{eq:G}
   G=\begin{bmatrix} G_0&G_1&\dots&G_M\\&G_0&G_1&\dots&G_M\\&&\ddots&\ddots&&\ddots\end{bmatrix},
   \end{equation}
where each $G_i$ is a $k\times n$ matrix over $\ff_q.$ The codeword of the code $\cC$ is obtained as a product $uG,$ where $u$ is a semi-infinite input sequence of symbols of $\ff_q.$

With a given convolutional code $\cC$ one can associate a multitude of distance measures. In direct analogy with block codes, one defines the
{\em free distance} of the code $\cC$ as the minimum Hamming weight of the Laurent expansions of the nonzero codewords. 

Another distance measure of interest is the so-called \emph{column distance} of the code \cite[p.162]{JZ15}. 
To define it, let $\cC_{[0,j]}$ be the truncated code of $\cC$ at the $j$-th time instant, $j\ge 0,$ namely, 
    $$
   \cC_{[0,j]}=\{c_{[0,j]}(D)=\sum_{i=0}^{j}c_iD^i\mid c(D)=\sum_{i\geq 0}c_iD^i\in\cC \}.
    $$
This is a linear block code of length $n(j+1)$, and by \eqref{eq:G} its generator matrix can be written in the form
  \begin{equation}\label{eq:Gj}
    G_j^c=\begin{bmatrix} G_0&G_1&\dots &G_j\\&G_0&\dots&G_{j-1}\\&&\ddots&\vdots\\&&&G_0\end{bmatrix}
  \end{equation}
where we put $G_l=0$ for $l> M.$ Clearly, the code $\cC_{[0,j]}$ is obtained by truncating the code $\cC$ to its first $j+1$ entries.
A codeword of $\cC_{[0,j]}$ has the form $(c_0,c_1,\dots,c_j),$ where for $j\le M$ and each $l=0,1,\dots,j$
   \begin{equation}\label{eq:tc}
   c_l=\sum_{i=0}^l u_i G_{l-i},
   \end{equation}
   where $c_l=(c_l^{(1)},\dots,c_l^{(n)})$ for each $l$.
   
We assume throughout that $G_0$ has full rank, so the mapping $\ff_q^k\stackrel{G}\to\ff_q^n$ given by $u_0 G_0\mapsto c_0$ is injective. 

\begin{definition}\label{def:djc}
	 For $j\geq 0$ the $j$-th column distance of $\cC$ is given by
	\begin{align*}
		d_j^c=\min\{\mathrm{wt}(c_{[0,j]}(D)) \mid c_{[0,j]}(D)\in\cC_{[0,j]},c_0\neq 0 \}.
	\end{align*}
\end{definition}
Clearly, the value of $d_j^c$ is at least the minimum distance of the truncated code $\cC_{[0,j]}.$ This follows because
for the column distance we seek the minimum of pairwise distances of codewords that differ in the first coordinate, while the
standard minimum distance computation does not involve this assumption. In many cases the column distance is in fact strictly greater. This remark 
is important for the sliding window repair which enables one to correct more erasures than would be possible for block codes relying on their minimum
distance.

Convolutional codes support several forms of erasure repair. One of them, called the sliding window repair \cite{TRV2012,MN19}, is based on the 
column distance and is used to correct erasures in streaming applications \cite{TRV2012}. We illustrate the idea of sliding window repair in Fig.~
\ref{fig:swr}, representing a code sequence of the code $\cC$ as a semi-infinite matrix whose columns are length $n$ vectors $c_l, l\ge 0$, and whose 
row $c^{(i)},i=1,\dots,n$ represents the stream formed by the $i$th coordinates of the symbols $c_l,l=0,1,\dots.$ We begin with fixing $j$ based on 
the value of the column distance $d_j^c$ of the code. The box in the figure shown with dashed lines represents the window of length $j+1$ that 
contains the truncated code at time $l\ge j.$ The erasures within the sliding window can clearly be repaired as long as their number at no point exceeds $d_j^c-1.$

Having in mind streaming applications, one may argue that a more efficient way of repairing erasures is to rely either on the symbols at a fixed time instant, or on a small group of symbols contained within the same stream $i.$ Accordingly, in the next two subsections we define two types of locality for convolutional codes, calling them the column and row localities.

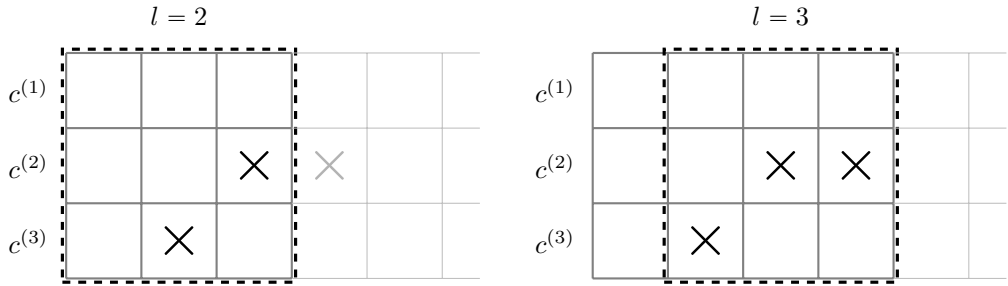
\begin{figure}[ht]
	% sliding window repair n = 3, k = 2, j = 2;
	% distance = 4;
	% column distance d_j^c = 4;
\hspace*{.4in}	\begin{tikzpicture}
	% part 1
	\node at (1.5,7.5) {$l=2$};
	\node at (-0.5,6.5) {$c^{(1)}$};
	\node at (-0.5,5.5) {$c^{(2)}$};
	\node at (-0.5,4.5) {$c^{(3)}$};
	\draw[step=1cm,gray,thick] (0,4) grid (3,7);
	\node at (1.5,4.5) {\Huge $\times$};
	\node at (2.5,5.5) {\Huge $\times$};
	\node [opacity=0.3] at (3.5,5.5) {\Huge $\times$};
	\draw[very thick,dashed] (-.05,3.95) rectangle (3.05,7.05);
	\draw[step=1cm,gray!50,thin] (3.001,4) grid (5.5,7);
	% part 2
	\node at (9.5,7.5) {$l=3$};
	\node at (6.5,6.5) {$c^{(1)}$};
	\node at (6.5,5.5) {$c^{(2)}$};
	\node at (6.5,4.5) {$c^{(3)}$};
	\draw[step=1cm,gray,thick] (7,4) grid (11,7);
	\node at (8.5,4.5) {\Huge $\times$};
	\node at (9.5,5.5) {\Huge $\times$};
	\node at (10.5,5.5) {\Huge $\times$};
	\node at (4.5,2.5) {};
	\draw[very thick,dashed] (7.95,3.95) rectangle (11.05,7.05);
	\draw[step=1cm,gray!50,thin] (11.001,4) grid (12.5,7);
	\end{tikzpicture}
\vspace*{-.5in}	\caption{{\sc Sliding window repair.} Suppose $\cC$ is an $(3,2)$ convolutional code with $d^c_2=4$. Let 
$(c^{(1)},c^{(2)},c^{(3)})\in\cC$ be a codeword, where the crosses denote erasures. At time instant $l=2$, there are two erasures in the window of length three (dashed box in the left figure), which is less than the $2$nd column distance. However, neither of the two erasures are in the first column of the window, and thus their recovery is postponed until later. At time $l=3$ (right figure), the sliding window contains 3 erasures, of which one is in the first column. This erasure can be recovered from the other symbols in 
the window. The remaining erasures are corrected in the next steps as long as the number of erasures in the window
does not exceed $d^c_2-1.$
}
	\label{fig:swr}
\end{figure}

\subsection{Convolutional codes with column locality}

Column locality was introduced in \cite{MN19}. First let us define the {\em $i$th column code} $\cC_i,i\ge 0$ of a convolutional code $\cC$ as a block 
code of length $n$ given by
   $$
   \cC_i=\{c_i\mid c(D)\in \cC \}.
   $$
We say that a convolutional code $\cC$ has $(r,\delta)$ column locality if for all $i$ the codes $\cC_i$ have the $(r,\delta)$ locality
property.

The results in \cite{MN19} are based on a version of this definition that requires that {\em only the code $\cC_M$} support $(r,\delta)$ locality. 
This restriction may seem too narrow until one realizes that if locality is present in the code $\cC_M$, then every code $\cC_i,i\ge 0$ has the $(r,
\delta)$ locality property. This follows immediately from \eqref{eq:tc} and the definition of $\cC_i$ because $\cC_i=\cC_M$ for $i>M$ and $\cC_i$ 
forms a linear subcode of $\cC_M$ otherwise. The only difference between this definition and the one given above is that under the approach of 
\cite{MN19}, every code $\cC_i$ has similarly aligned repair groups which are propagated from the repair groups of $\cC_M$, while our definition 
allows differently aligned repair groups for different values of $i$.

To enable local repair, we simply assume that every column of the codeword forms a block code with $(r,\delta)$ locality.
An example is given in Fig~\ref{fig:swr-c}, demonstrating sliding window repair combined with column locality.

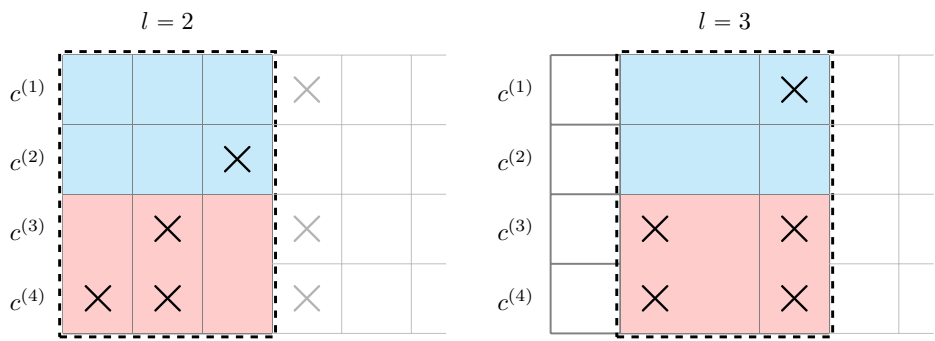
\begin{figure}[h]
	\centering
	% sliding window repair with column locality n = 4, k = 2, j = 2; r = 1, delta = 2
	% distance = 2;
	% column distance d_j^c = 6; 
\resizebox{.8\textwidth}{!}	{\begin{tikzpicture}[
rednode/.style={rectangle, fill=red!20, minimum size=9.8mm},
cyannode/.style={rectangle, fill=cyan!20, minimum size=9.8mm},
]
% part 1
\node at (1.5, 9.5) {$l=2$};
\node at (-0.5,8.5) {$c^{(1)}$};
\node at (-0.5,7.5) {$c^{(2)}$};
\node at (-0.5,6.5) {$c^{(3)}$};
\node at (-0.5,5.5) {$c^{(4)}$};
\draw[step=1cm,gray,thick] (0,5) grid (3,9);
\node[rednode] at (0.5,5.5) {\Huge $\times$};
\node[rednode] at (0.5,6.5) {};
\node[cyannode] at (0.5,7.5) {};
\node[cyannode] at (0.5,8.5) {};
\node[cyannode] at (1.5,8.5) {};
\node[cyannode] at (1.5,7.5) {};
\node[rednode] at (1.5,6.5) {\Huge $\times$};
\node[rednode] at (1.5,5.5) {\Huge $\times$};
\node[cyannode] at (2.5,8.5) {};
\node[cyannode] at (2.5,7.5) {\Huge $\times$};
\node[rednode] at (2.5,6.5) {};
\node[rednode] at (2.5,5.5) {};
\node[opacity=0.3] at (3.5,8.5) {\Huge $\times$};
\node[opacity=0.3] at (3.5,6.5) {\Huge $\times$};
\node[opacity=0.3] at (3.5,5.5) {\Huge $\times$};
\draw[very thick,dashed] (-0.05,4.95) rectangle (3.05,9.05);
\draw[step=1cm,gray!50,thin] (3.001,5) grid (5.5,9);
% part 2
\node at (9.5,9.5) {$l=3$};
\node at (6.5,8.5) {$c^{(1)}$};
\node at (6.5,7.5) {$c^{(2)}$};
\node at (6.5,6.5) {$c^{(3)}$};
\node at (6.5,5.5) {$c^{(4)}$};
\draw[step=1cm,gray,thick] (7,5) grid (11,9);
\node[cyannode] at (8.5,8.5) {};
\node[cyannode] at (8.5,7.5) {};
\node[rednode] at (8.5,6.5) {\Huge $\times$};
\node[rednode] at (8.5,5.5) {\Huge $\times$};
\node[cyannode] at (9.5,8.5) {};
\node[cyannode] at (9.5,7.5) {};
\node[rednode] at (9.5,6.5) {};
\node[rednode] at (9.5,5.5) {};
\node[cyannode] at (10.5,8.5) {\Huge $\times$};
\node[cyannode] at (10.5,7.5) {};
\node[rednode] at (10.5,6.5) {\Huge $\times$};
\node[rednode] at (10.5,5.5) {\Huge $\times$};
\draw[very thick,dashed] (7.95,4.95) rectangle (11.05,9.05);
\draw[step=1cm,gray!50,thin] (11.001,5) grid (12.5,9);
\end{tikzpicture}}
	\caption{{\sc Sliding window repair with column locality.} Suppose $\cC$ is a $(4,2)$ convolutional code with $(1,2)$ column locality and $d_2^c=6.$ 
Let $(c^{(1)},c^{(2)},c^{(3)},c^{(4)})\in\cC$ be a codeword, where the crosses denote erasures, and different repair groups in the
columns are shown in different colors. At time $l=2$, the window of length $j+1=3$ contains 4 erasures. Of these, the symbols $c^{(4)}_0,c^{(2)}_2$
can be recovered within their repair groups. Then for $l=3$ the window contains 5 erasures, of which two in the first column can be repaired from the other symbols in the window, while the symbol $c^{(1)}_3$ can be recovered locally. }
	\label{fig:swr-c}
\end{figure}

\subsection{Convolutional codes with row locality}
In this section we introduce and study another notion of locality for convolutional codes. Given a convolutional code $\cC,$ define the
{\em $i$th row code} $\cC^{(i)}_{[0,j]},1\leq i\leq n$ truncated at $j$th time instant, $j\ge 0,$ as follows:
  \begin{equation}\label{eq:row code}
  \cC^{(i)}_{[0,j]}=\{c^{(i)}_{[0,j]}=(c^{(i)}_0,\ldots,c^{(i)}_j) \mid c\in \cC \}.
  \end{equation}
We say that $\cC$ has $(r,\delta)$ row locality if for all $t\geq 0$ the codes $\cC_{[t,t+j]}^{(i)},1\leq i\leq n$ have the $(r,\delta)$ locality property, where $j\geq 0$ is fixed. 

In the case of tailbiting codes, it is more convenient to give this definition in the following form, which will also be used in our constructions
below.
\begin{definition}%[Convolutional codes with row locality] 
	Let $j\ge0$ be fixed.
A convolutional code $\cC$ has $(r,\delta)$ row locality at time $j$ if every code $\cC^{(i)}_{[0,j]},1\leq i\leq n$ has $(r,\delta)$ locality.
\end{definition}

We give two examples of repair with row locality, see Figures~\ref{fig:swr-r} and \ref{fig:swr-r-tb}. The first of them illustrates the above
definition while the second specializes it to tailbiting codes. {Let us stress that whenever local repair by rows is not possible,
we fall back on sliding window repair relying on the column distance of the truncated code.}

\begin{figure}
	\centering
	% sliding window repair with row locality n = 4, k = 2, j = 3; r = 1, delta = 2;
	% distance = 2;
	% column distance d_j^c = 8;
\resizebox{.9\textwidth}{!}{	\begin{tikzpicture}[
rednode/.style={rectangle, fill=red!20, minimum size=9.8mm},
cyannode/.style={rectangle, fill=cyan!20, minimum size=9.8mm},
]
% part 1
\node at (2,9.5) {$l=3$};
\node at (-0.5,8.5) {$c^{(1)}$};
\node at (-0.5,7.5) {$c^{(2)}$};
\node at (-0.5,6.5) {$c^{(3)}$};
\node at (-0.5,5.5) {$c^{(4)}$};
\draw[step=1cm,gray,thick] (0,5) grid (4,9);
\node[rednode] at (0.5,5.5) {};
\node[cyannode] at (1.5,5.5) {\Huge $\times$};
\node[rednode] at (2.5,5.5) {\Huge $\times$};
\node[cyannode] at (3.5,5.5) {};
\node[rednode] at (0.5,6.5) {};
\node[cyannode] at (1.5,6.5) {\Huge $\times$};
\node[rednode] at (2.5,6.5) {};
\node[cyannode] at (3.5,6.5) {\Huge $\times$};
\node[rednode] at (0.5,7.5) {\Huge $\times$};
\node[cyannode] at (1.5,7.5) {};
\node[rednode] at (2.5,7.5) {};
\node[cyannode] at (3.5,7.5) {\Huge $\times$};
\node[rednode] at (0.5,8.5) {};
\node[cyannode] at (1.5,8.5) {\Huge $\times$};
\node[rednode] at (2.5,8.5) {};
\node[cyannode] at (3.5,8.5) {\Huge $\times$};
\draw[very thick,dashed] (-0.05,4.95) rectangle (4.05,9.05);
\draw[step=1cm,gray!50,thin] (4.001,5) grid (6.5,9);
% part 2
\node at (11,9.5) {$l=4$};
\node at (7.5,8.5) {$c^{(1)}$};
\node at (7.5,7.5) {$c^{(2)}$};
\node at (7.5,6.5) {$c^{(3)}$};
\node at (7.5,5.5) {$c^{(4)}$};
\draw[step=1cm,gray,thick] (8,5) grid (13,9);
\node[rednode] at (9.5,5.5) {};
\node[cyannode] at (10.5,5.5) {};
\node[rednode] at (11.5,5.5) {};
\node[cyannode] at (12.5,5.5) {};
\node[rednode] at (9.5,6.5) {\Huge $\times$};
\node[cyannode] at (10.5,6.5) {};
\node[rednode] at (11.5,6.5) {\Huge $\times$};
\node[cyannode] at (12.5,6.5) {};
\node[rednode] at (9.5,7.5) {};
\node[cyannode] at (10.5,7.5) {};
\node[rednode] at (11.5,7.5) {};
\node[cyannode] at (12.5,7.5) {};
\node[rednode] at (9.5,8.5) {\Huge $\times$};
\node[cyannode] at (10.5,8.5) {};
\node[rednode] at (11.5,8.5) {\Huge $\times$};
\node[cyannode] at (12.5,8.5) {};
\draw[very thick,dashed] (8.95,4.95) rectangle (13.05,9.05);
\draw[step=1cm,gray!50,thin] (13.001,5) grid (14.5,9);
\end{tikzpicture}}
	\caption{{\sc Sliding window repair with row locality.} Suppose $\cC$ is a $(4,2)$ convolutional code with $(1,2)$ row locality 
$d_3^c=8.$	Let $(c^{(1)},c^{(2)},c^{(3)},c^{(4)})\in\cC$ be a codeword, where the crosses denote erasures, and different repair groups in the
rows are shown in different colors. At time $l=3,$ by row locality, the symbols $c^{(2)}_0,c^{(2)}_3,c^{(4)}_1,c^{(4)}_2$ can be recovered from the other symbols in their respective repair groups. At time $l=4$ there are four erasures in the window of length four, which is smaller than $d_3^c,$ so they are recoverable. Thereafter the two remaining erasures can be recovered relying on row locality.}
	\label{fig:swr-r}
\end{figure}
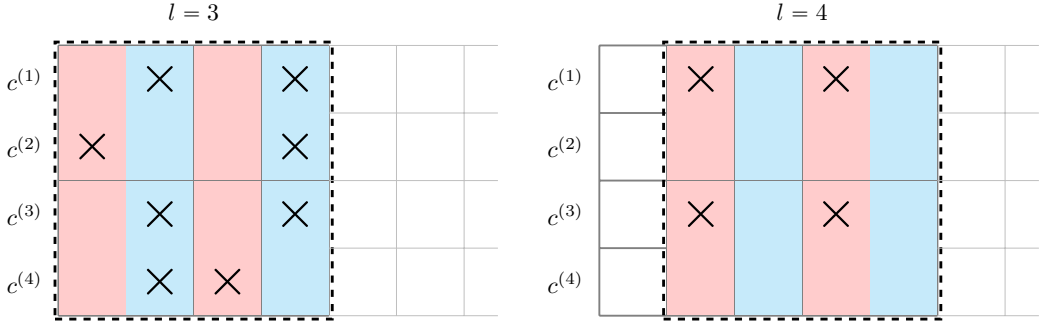

\begin{figure}
		% sliding window repair with row locality n = 4, k = 2, j = 7; r = 1, delta = 2;
	% distance = 2;
	% column distance d_j^c = 16;
\resizebox{\textwidth}{!}{
	\hspace*{-.2in}\begin{tikzpicture}[
	rednode/.style={rectangle, fill=red!20, minimum size=9.8mm},
	greennode/.style={rectangle, fill=green!20, minimum size=9.8mm},
	cyannode/.style={rectangle, fill=cyan!20, minimum size=9.8mm},
	yellownode/.style={rectangle, fill=yellow!20, minimum size=9.8mm}, 
	]
	% part 1
	\node at (4,9.5) {Sliding window starting from time $l=0$};
	\node at (-0.5,8.5) {$c^{(1)}$};
	\node at (-0.5,7.5) {$c^{(2)}$};
	\node at (-0.5,6.5) {$c^{(3)}$};
	\node at (-0.5,5.5) {$c^{(4)}$};
	\draw[step=1cm,gray,thick] (0,5) grid (8,9);
	\node[rednode] at (0.5,5.5) {\Huge $\times$};
	\node[cyannode] at (1.5,5.5) {};
	\node[greennode] at (2.5,5.5) {\Huge $\times$};
	\node[yellownode] at (3.5,5.5) {};
	\node[rednode] at (4.5,5.5) {};
	\node[cyannode] at (5.5,5.5) {\Huge $\times$};
	\node[greennode] at (6.5,5.5) {\Huge $\times$};
	\node[yellownode] at (7.5,5.5) {};
	\node[rednode] at (0.5,6.5) {\Huge $\times$};
	\node[cyannode] at (1.5,6.5) {};
	\node[greennode] at (2.5,6.5) {\Huge $\times$};
	\node[yellownode] at (3.5,6.5) {};
	\node[rednode] at (4.5,6.5) {\Huge $\times$};
	\node[cyannode] at (5.5,6.5) {};
	\node[greennode] at (6.5,6.5) {\Huge $\times$};
	\node[yellownode] at (7.5,6.5) {};
	\node[rednode] at (0.5,7.5) {\Huge $\times$};
	\node[cyannode] at (1.5,7.5) {};
	\node[greennode] at (2.5,7.5) {\Huge $\times$};
	\node[yellownode] at (3.5,7.5) {};
	\node[rednode] at (4.5,7.5) {};
	\node[cyannode] at (5.5,7.5) {\Huge $\times$};
	\node[greennode] at (6.5,7.5) {\Huge $\times$};
	\node[yellownode] at (7.5,7.5) {};
	\node[rednode] at (0.5,8.5) {\Huge $\times$};
	\node[cyannode] at (1.5,8.5) {};
	\node[greennode] at (2.5,8.5) {\Huge $\times$};
	\node[yellownode] at (3.5,8.5) {};
	\node[rednode] at (4.5,8.5) {\Huge $\times$};
	\node[cyannode] at (5.5,8.5) {};
	\node[greennode] at (6.5,8.5) {\Huge $\times$};
	\node[yellownode] at (7.5,8.5) {};
	\draw[very thick,dashed] (-0.05,4.95) rectangle (8.05,9.05);
	% part 2
	\node at (13,9.5) {Sliding window starting from time $l=2$};
	\node at (8.7,8.5) {$c^{(1)}$};
	\node at (8.7,7.5) {$c^{(2)}$};
	\node at (8.7,6.5) {$c^{(3)}$};
	\node at (8.7,5.5) {$c^{(4)}$};
	\draw[step=1cm,gray,thick] (9,5) grid (17,9);
	\node[rednode] at (9.5,5.5) {};
	\node[cyannode] at (10.5,5.5) {};
	\node[greennode] at (11.5,5.5) {\Huge $\times$};
	\node[yellownode] at (12.5,5.5) {};
	\node[rednode] at (13.5,5.5) {};
	\node[cyannode] at (14.5,5.5) {};
	\node[greennode] at (15.5,5.5) {\Huge $\times$};
	\node[yellownode] at (16.5,5.5) {};
	\node[rednode] at (9.5,6.5) {};
	\node[cyannode] at (10.5,6.5) {};
	\node[greennode] at (11.5,6.5) {\Huge $\times$};
	\node[yellownode] at (12.5,6.5) {};
	\node[rednode] at (13.5,6.5) {};
	\node[cyannode] at (14.5,6.5) {};
	\node[greennode] at (15.5,6.5) {\Huge $\times$};
	\node[yellownode] at (16.5,6.5) {};
	\node[rednode] at (9.5,7.5) {};
	\node[cyannode] at (10.5,7.5) {};
	\node[greennode] at (11.5,7.5) {\Huge $\times$};
	\node[yellownode] at (12.5,7.5) {};
	\node[rednode] at (13.5,7.5) {};
	\node[cyannode] at (14.5,7.5) {};
	\node[greennode] at (15.5,7.5) {\Huge $\times$};
	\node[yellownode] at (16.5,7.5) {};
	\node[rednode] at (9.5,8.5) {};
	\node[cyannode] at (10.5,8.5) {};
	\node[greennode] at (11.5,8.5) {\Huge $\times$};
	\node[yellownode] at (12.5,8.5) {};
	\node[rednode] at (13.5,8.5) {};
	\node[cyannode] at (14.5,8.5) {};
	\node[greennode] at (15.5,8.5) {\Huge $\times$};
	\node[yellownode] at (16.5,8.5) {};
	\draw[very thick,dashed] (17.05,9.05) -- (11.05,9.05) -- (11.05,4.95) -- (17.05,4.95);
	\draw[very thick,dashed] (8.95,9.05) -- (10.95,9.05) -- (10.95,4.95) -- (8.95,4.95);
	\end{tikzpicture}}
	\caption{{\sc Sliding window repair with row locality for tailbiting codes.} Suppose $\cC$ is a $(4,2)$ unit memory tailbiting convolutional code
with $(1,2)$ row locality and $d_7^c=16$. Let $(c^{(1)},c^{(2)},c^{(3)},c^{(4)})\in\cC$ be a codeword, where the crosses denote erasures, and 
different repair groups in the rows are shown in different colors. 
There are a total of 16 erasures. First we engage row locality to repair symbols $c^{(2)}_0,c^{(2)}_5,c^{(4)}_0,c^{(4)}_5,$ 
whereupon 12 erasures remain. In the sliding window starting from time $l=0$, the remaining two erasures in the first column can be recovered. After that, $c^{(1)}_4,c^{(3)}_4$ can be recovered locally. \\
\hspace*{.3in}Next, we move the sliding window to start at time $l=2.$ (Note that the window is wrapped around.) The erasures in the first column of this window can be recovered since the number of erasures is smaller than the column distance, and after that the remaining erasures
can be recovered relying on row locality.}
	\label{fig:swr-r-tb}
\end{figure}
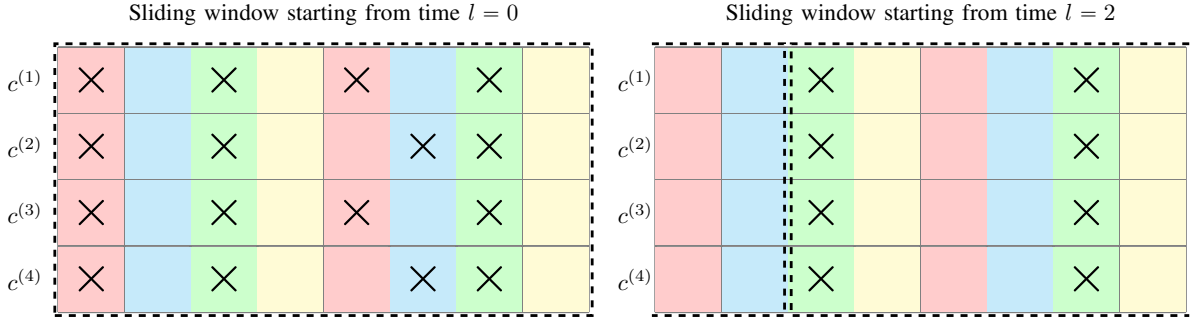

The problem that we address is to construct convolutional codes with locality and large column distance. This is similar to the problem
studied in \cite{MN19} and also to the case of block codes with locality and large minimum distance.pursuing constructions with large column distance and $(r,\delta)$ locality. 

We begin with deriving an upper bound on the column distance of the truncated code with either 
column or row locality property. 

\begin{proposition}
	\label{prop:lrcc-max}
(a)	Let $\cC$ be an $(n,k)$ convolutional code with $(r,\delta)$ (column or row) locality. Then for any $j\geq 0$, the $j$-th column distance satisfies
	\begin{align}
		d_j^c\leq (n-k)(j+1)+\delta-\left\lceil\frac{k}{r}\right\rceil(\delta-1).\label{eq:sb-lrcc}
	\end{align} 

(b)
	Equality in \eqref{eq:sb-lrcc} implies that for all $i\leq j$, the $i$-th column distance satisfies
	\begin{align}
		d_i^c= (n-k)(i+1)+\delta-\left\lceil\frac{k}{r}\right\rceil(\delta-1).\label{eq:sb-lrcc1}
	\end{align}
\end{proposition}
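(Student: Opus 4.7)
The plan is to prove Part~(a) by constructing a specific codeword meeting the bound, and to derive Part~(b) from a monotonicity-type inequality between consecutive column distances. The key intermediate step for Part~(a) is the \emph{splitting inequality}
\[
d_j^c \;\le\; d(\cC_0) + (n-k)j.
\]
To prove this, I would pick a minimum-weight nonzero codeword $c_0^\star\in\cC_0$, lift it uniquely to $u_0^\star\in\ff_q^k$ via $u_0^\star G_0=c_0^\star$ (which exists and is nonzero since $G_0$ has full rank $k$), and then greedily choose $u_1,\ldots,u_j$ as follows. For each $l\ge 1$, once $u_0^\star,u_1,\ldots,u_{l-1}$ are fixed, we have $c_l=u_lG_0+v_l$ where $v_l=\sum_{i=0}^{l-1}u_iG_{l-i}$ is already determined, so $c_l$ ranges over a coset of $\cC_0=\text{rowspan}(G_0)$ in $\ff_q^n$ as $u_l$ varies. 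Since $\cC_0$ has dimension $k$, picking $k$ linearly independent columns of $G_0$ as an information set shows that the covering radius of $\cC_0$ is at most $n-k$; choosing $u_l$ to hit a coset leader then gives $\text{wt}(c_l)\le n-k$ for every $l\ge 1$, and the resulting codeword satisfies $c_0=c_0^\star\ne 0$ with weight at most $d(\cC_0)+(n-k)j$.

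It then suffices to bound $d(\cC_0)$ using locality. In the column-locality case, $\cC_0$ inherits $(r,\delta)$ locality directly from the definition, so the generalized Singleton bound \eqref{eq:delta} for LRC block codes gives $d(\cC_0)\le n-k+\delta-\lceil k/r\rceil(\delta-1)$; substituting into the splitting inequality yields \eqref{eq:sb-lrcc}. In the row-locality case, $\cC_0$ does not automatically acquire any $(r,\delta)$ structure, so I would instead adapt a GHSY-style information-set argument directly to the truncated code $\cC_{[0,j]}$, using the row-wise repair groups within each $\cC_{[0,j]}^{(i)}$ while restricting attention to codewords with $c_0\ne 0$ in order to recover the same bound.

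Part~(b) follows by applying the same greedy extension to compare $d_i^c$ and $d_j^c$. Any codeword $c'\in\cC_{[0,i]}$ with $c'_0\ne 0$ of weight $d_i^c$ can be continued block by block, choosing $u_{i+1},\ldots,u_j$ so that each added block has weight at most $n-k$ by the same covering-radius argument. This produces a codeword in $\cC_{[0,j]}$ with nonzero first block and weight at most $d_i^c+(n-k)(j-i)$, hence $d_j^c\le d_i^c+(n-k)(j-i)$. If equality holds in \eqref{eq:sb-lrcc} at index $j$, rearrangement forces $d_i^c\ge (n-k)(i+1)+\delta-\lceil k/r\rceil(\delta-1)$, which combined with Part~(a) applied to index $i$ gives \eqref{eq:sb-lrcc1}.

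I expect the delicate step to be the row-locality case of Part~(a). Because row locality does not descend to $\cC_0$, the block-code LRC Singleton bound cannot be invoked off the shelf; instead one must refine the GHSY information-set construction inside $\cC_{[0,j]}$ so that the saving remains $-\lceil k/r\rceil(\delta-1)$, rather than the weaker $-\lceil k(j+1)/r\rceil(\delta-1)$ that a naive application of the bound to the whole truncated block code would produce. Getting this sharper saving will require exploiting the triangular-block structure of $G_j^c$ together with the restriction $c_0\ne 0$ to trim the relevant information set by the correct amount.
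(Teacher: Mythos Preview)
Your covering-radius argument for Part~(a) in the column-locality case and your greedy-extension argument for Part~(b) are both correct, and both take a different route from the paper.

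For Part~(a), the paper does not split into ``bound $d(\cC_0)$, then extend.'' Instead it performs Gaussian elimination on $G_j^c$: using the lower block-rows (each of which begins with a full-rank copy of $G_0$), one clears $k$ columns in each of the blocks $G_1,\ldots,G_j$ of the first block-row. The resulting first $k$ rows span a $k$-dimensional subcode $\cK\subset\cC_{[0,j]}$ supported on at most $n+(n-k)j$ coordinates, every nonzero word of which has $c_0\ne 0$. Because $\cK$ is a genuine subcode of $\cC_{[0,j]}$, it inherits whatever local parity constraints $\cC_{[0,j]}$ carries---column \emph{or} row---and applying the block LRC bound \eqref{eq:delta} to $\cK$ gives \eqref{eq:sb-lrcc} for both hypotheses at once. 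Your splitting inequality $d_j^c\le d(\cC_0)+(n-k)j$ is a neat alternative that dispatches the column case cleanly, but as you correctly flag, it cannot reach the row case: $\cC_0$ sees only a single time instant, so row locality places no constraint on $d(\cC_0)$. The remedy you sketch in your last paragraph---exploit the triangular shape of $G_j^c$ to shrink the effective support while keeping locality---is exactly the paper's row-reduction maneuver, so you have the right missing ingredient but have not carried it out.

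For Part~(b), the paper argues on the parity-check side, invoking a characterization from \cite{gluesing2006strongly}: if $d_{j-1}^c$ were suboptimal, one of the first $n$ columns of $H_{j-1}^c$ would lie in the span of too few others; appending the final $H_0$ block to pass to $H_j^c$ adds at most $n-k$ columns to that dependency, contradicting the assumed value of $d_j^c$. Your inequality $d_j^c\le d_i^c+(n-k)(j-i)$, obtained by the same coset-leader extension you used in Part~(a), establishes the same propagation on the generator side and is more elementary, needing no auxiliary structural lemma.
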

The proof is given in Appendix~\ref{app:proof-of-lrcc-max}.

Part (b) of this proposition is similar to the propagation of the column distance optimality property in the case of general convolutional codes proved in \cite{gluesing2006strongly}. Namely, the Singleton bound implies that the column distance for all $j$ satisfies
  \begin{equation}\label{eq:sbc}
  d_j^c\le (n-k)(j+1)+1,
  \end{equation}
and equality for a given $j$ implies that all the other distances $d_i^c, i\le j$ also attain their versions of the Singleton bound with equality.

\subsection{Convolutional codes and quasicyclic codes}\label{sec:cc-qc}
A transformation between these two code families was constructed in \cite{SvT79}, and it has led to a broader family of convolutional codes and trellises now known as {\em tailbiting codes} (tailbiting trellises) \cite{JZ15}. 
An $(n(m+1),k(m+1))$ quasicyclic code can be defined by a generator matrix
   $$
   G=(G_{ij}),\;i=0,\dots,k-1, j=0,\dots,n-1
   $$
where each $G_{ij}$ is an $(m+1)\times (m+1)$ circulant matrix. 
With a given matrix $G_{ij}$ we associate a polynomial 
$g_{ij}(D)=\sum_{l=0}^{m} g_{l}D^{l},$ where $(g_0,\dots,g_{m})$ is the first row
of the matrix. Then the $k\times n$ generator matrix $G(D) = (g_{ij}(D))$ defines an $(n,k)$ convolutional code. 
The authors of \cite{SvT79} showed that if one takes the input sequences of the convolutional code in the form 
  \begin{align}
	u(D)=\sum_{l=-M}^mu_lD^l \text{ such that } u_{-s} = u_{m+1-s} \text{ for } s = 1,\ldots,M, \label{eq:qc}
  \end{align}
then the convolutional code is equivalent to the quasicyclic code defined above. In other words, the quasicyclic code 
can be encoded convolutionally, and the convolutional code with ``symmetric'' input sequences as in \eqref{eq:qc} is exactly the quasicyclic code.

\subsection{A family of tailbiting convolutional codes with row locality}
We come to the main result of this section, which is a construction of a family of convolutional codes with $(r,\delta)$ row locality. 
This family of codes has the largest possible minimum distance for the truncated code $\cC_{[0,j]},$ however we stop short of showing that 
the $j$-th column distance attains \eqref{eq:sb-lrcc} with equality. 
The construction is achieved by exploiting a connection between quasi-cyclic codes and convolutional codes discussed above. In high-level terms,
our plan is to construct a cyclic code from its set of zeros chosen according to the procedure in Sec.~\ref{sec:h}, writing it in a 
quasicyclic form (via a circulant generator matrix) and to construct a convolutional code using the technique discussed above.

\vspace*{.05in} Let us first construct an $(n(j+1),k(j+1))$ cyclic LRC codes with $(r,\delta)$ locality. We proceed similarly to Sec.~\ref{sec:hlrc}. We will need a few assumptions regarding the parameters of the code. Let $j\geq 0$ be such that $k\leq j+1\leq n$ and that $j+1=(r+\delta-1)\nu$ where $\nu\geq 1$. Let $\ff_q$ be a finite field such that $n(j+1)\mid (q-1)$ 
and let $\alpha\in\ff_q$ be a primitive root of unity of order $n(j+1)$ in $\ff_q.$

The set of zeros of the code is obtained as follows. let $\cZ_1=\{1,\ldots,\delta-1\}$. Using \eqref{eq:def-set-2} and setting $\cD_2=\cZ_1$, we have 
    \begin{equation}\label{eq:shifts}
    \cZ_2=\bigcup_{l=0}^{\nu-1}(\cZ_1+l(r+\delta-1)).
   \end{equation}
Further, let $\cL_{3}=\bigcup_{l=0}^{n-1}(\cZ_2+l(j+1))$ and let $\cD_3=\{1,\ldots,\delta_3-1\},$ 
where 
\begin{align}
\delta_3=(n-k)(j+1)+\delta- \left\lceil \frac{k(j+1)}{r}\right\rceil (\delta-1).\label{eq:db}
\end{align}
 Finally, put $\cZ=\cL_3\cup\cD_3$ and let $\cB$ be the cyclic code with generator polynomial $g_{\cB}(x)=\prod_{t\in \cZ}(x-\alpha^t)$. 
Note for future use that the complement of the set $\cD_3$ in the set of exponents of $\alpha$ has cardinality
   \begin{equation}\label{eq:LD}
   |\bar \cD_3|=n(j+1)-(\delta_3-1)=k(j+1)+\Big(\Big\lceil \frac{k(j+1)}{r}\Big\rceil-1\Big) (\delta-1).
   \end{equation}

In the next theorem we give an explicit representation of the code $\cB$ in quasicyclic form, and we also specify its locality properties.
\begin{theorem}
	\label{thm:quasi}
$(a)$	The code ${\cB}$ is an $(n(j+1),k(j+1))$ optimal LRC code with $(r,\delta)$ locality, and the punctured codes
	 $$
	{\cB}_l=\{(c_l,c_{l+n},\ldots,c_{l+nj}) \mid (c_0,\ldots,c_{n(j+1)-1})\in{\cB}\},\quad l=0,\ldots,n-1,
	$$ 
are LRC codes of length $j+1$ with $(r,\delta)$ locality. 
	
$(b)$ Furthermore, if $k| n,$ then the code $\cB$ is equivalent to a code with generator matrix $G$ given by
    $$G=(G_{il}), i=0,\dots,k-1, l=0,\dots n-1,$$ 
where every $G_{il}$ is a $(j+1)\times(j+1)$ circulant matrix. For every $i=0,1,\dots,k-1$ the matrices $G_{il}$ satisfy
  $$
  G_{il}=\begin{cases}I_{j+1} &\text{if } l=in/k\\ 0 &\text{if }l\in\{0,n/k,\dots,n-n/k\}\backslash\{ in/k\}.
  \end{cases}
  $$
\end{theorem}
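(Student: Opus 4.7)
For part (a), my plan is to apply Lemma~\ref{le:local0} in nested form, mirroring the two-level construction of $\cZ$. The first application, with $m=r+\delta-1$ and outer count $N/m=n(j+1)/(r+\delta-1)$, unfolds the double union in $\cL_3$ and recognizes it as the $(r,\delta)$-locality zero pattern \eqref{eq:rts} for length $N=n(j+1)$, so Lemma~\ref{le:local0} delivers $(r,\delta)$ locality of $\cB$. A direct count showing that $\cD_3$ absorbs the relevant initial segment of $\cL_3$ gives $|\cZ|=(n-k)(j+1)$ and hence $\dim\cB=k(j+1)$; the BCH bound then yields $d(\cB)\ge\delta_3$, matching the Singleton-like bound \eqref{eq:delta} with equality and proving optimality. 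The second application, with $m=j+1$ and outer count $n$, exhibits $\cB_0$ as a cyclic code of length $j+1$ whose zero set contains $\cZ_2$; since $\cZ_2$ is itself already of the form \eqref{eq:rts} at length $j+1$, a third appeal to Lemma~\ref{le:local0} inside $\cB_0$ yields its $(r,\delta)$ locality, and a cyclic shift of $\cB$ transports this to every $\cB_l$.

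For part (b) the plan is to exhibit a concrete permutation realizing the equivalence. The basic reordering is the coordinate map $an+b\mapsto b(j+1)+a$ for $a\in\{0,\dots,j\}$, $b\in\{0,\dots,n-1\}$; under it the shift of $\cB$ by $n$ positions, a symmetry since $\cB$ is cyclic, becomes the simultaneous within-block cyclic shift across $n$ contiguous blocks of size $j+1$, so by Sec.~\ref{sec:cc-qc} the reordered code is quasi-cyclic and admits a generator matrix $G=(G_{il})$ whose blocks are $(j+1)\times(j+1)$ circulants. Using $k\mid n$, I then compose with a block permutation $\tau\colon\{0,\dots,n-1\}\to\{0,\dots,n-1\}$ sending $i\mapsto i\cdot n/k$ for $i=0,\dots,k-1$ and extended arbitrarily on the remaining indices; block-level permutations preserve the block-circulant form. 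Because any $K=k(j+1)$ consecutive coordinates of a cyclic $[N,K]$ code form an information set, after $\tau$ the blocks at the prescribed positions $\{0,n/k,\dots,(k-1)n/k\}$ carry an information set of size $K$. A block-wise Gaussian elimination, whose row-operations multiply blocks by elements of the commutative ring of $(j+1)\times(j+1)$ circulants and hence preserve circulance, then clears the systematic positions of each block-row other than its designated identity, producing $G_{i,in/k}=I_{j+1}$ and $G_{il}=0$ for $l\in\{0,n/k,\dots,(k-1)n/k\}\setminus\{in/k\}$, with arbitrary circulant parity blocks elsewhere.

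The main obstacle is to verify the compatibility of the composite reordering with the information-set argument: one must check that after the basic map and the block permutation $\tau$, the first $K$ cyclic coordinates of $\cB$ indeed lie in the union of blocks at positions $\{0,n/k,\dots,(k-1)n/k\}$, and one must confirm that the circulant blocks produced by Gaussian elimination are simultaneously consistent across the $k$ block-rows, i.e., that the constraints imposed by $g_\cB(x)$ on the block-circulant entries admit a simultaneous solution of the prescribed form. I would carry out this verification by tracking index images through both reorderings and by reducing the circulance consistency to a polynomial divisibility inside $\ff_q[x]/(x^{j+1}-1)$, exploiting the nested structure of $\cZ$ (the zeros contributed by $\cL_3$ are compatible modulo $j+1$, whereas those contributed by $\cD_3$ concentrate in a single ``outer'' block), so that the block-wise constraints decouple in the appropriate way.
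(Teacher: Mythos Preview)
Your Part (a) plan matches the paper's approach: both invoke Lemma~\ref{le:local0} to establish $(r,\delta)$ locality of $\cB$ from the structure of $\cL_3$, count zeros to get $\dim\cB=k(j+1)$, use the BCH bound for distance, and then pass to the punctured codes $\cB_l$ via a second (nested) application of the lemma together with cyclic shifts. Minor differences in bookkeeping aside, this is the same argument.

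Part (b), however, has a genuine gap. After your reordering $an+b\mapsto b(j+1)+a$, block $l$ of the quasicyclic code corresponds to the original coordinates $\{l,l+n,\dots,l+jn\}$. Composing with your $\tau$ (sending $i\mapsto in/k$) means the blocks at positions $\{0,n/k,\dots,(k-1)n/k\}$ correspond to the original set $\{an+b:0\le a\le j,\,0\le b\le k-1\}$. This set is \emph{not} the first $K=k(j+1)$ consecutive positions of $\cB$ whenever $n>k$ (it contains $jn+k-1>kj+k-1=K-1$), so the fact that consecutive positions of a cyclic code form an information set does not apply. No block permutation can repair this, since block permutations only relabel whole blocks and cannot turn a scattered image into a union of blocks. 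Your acknowledged ``main obstacle'' is therefore not a technicality but a real failure of the index-tracking argument. A secondary issue is that block-wise Gaussian elimination over the circulant ring $\ff_q[x]/(x^{j+1}-1)$ requires the pivot blocks to be units of that ring, which is strictly stronger than invertibility over $\ff_q$ and is nowhere verified.

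The paper avoids both problems by working with a different coordinate set: it takes $\cI=\{0,n/k,2n/k,\dots,n(j+1)-n/k\}$, the multiples of $n/k$, and shows directly that the $K\times K$ submatrix $G_\cI$ of the evaluation-style generator matrix is Vandermonde in the powers $\alpha^{tn/k}$ (distinct as $t$ ranges over nonzeros), hence invertible. Ordinary row operations over $\ff_q$ then make $G_\cI=I_{k(j+1)}$; the circulant block structure is obtained afterward by reading off the polynomials $g_{i,l}(D)$ and using that cyclic shifts by $n$ stay in the code. The key idea you are missing is this Vandermonde step identifying the \emph{correct} (non-consecutive) information set $\cI$.
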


\begin{proof} $(a)$ We note that the set of zeros of the code $\cB$ is partitioned into segments of length $r+\delta-1,$
i.e., has the structure of the set $\cL$ in \eqref{eq:rts}. In other words, the generator polynomial of $\cB$ 
satisfies 
   $$
	\prod_{t\in\cL}(x-\alpha^t)|g_{\cB}(x), \text{ where }\cL=\bigcup_{s=0}^{n\nu-1}(\cZ_1+s(r+\delta-1)).
	$$
Therefore, Lemma \ref{le:local0} implies that the code $\cB$ has $(r,\delta)$ locality. To compute the dimension of the code $\cB$ we count the 
number of its nonzeros. They are all located in $\bar\cD_3.$ This is a consecutive segment of exponents, and by \eqref{eq:shifts}, within each 
whole subsegment of length $r+\delta-1$ in it there are $r$ nonzeros. Once all such segments are accounted for, there may be an incomplete segment left, which contains $\min(|\bar\cD_3|-\lfloor\frac{|\bar\cD_3|}{r+\delta-1}\rfloor(r+\delta-1),r)$ zeros. 
As easily checked, the total number of nonzeros in either case is $k(j+1)$, which is therefore the dimension of the code $\cB$.
The distance of $\cB$ is at least $\delta_3$, and the bound \eqref{eq:delta} implies that the code $\cB$ has the largest possible distance for
the chosen locality parameters. 

Examining the structure of zeros of the punctured codes $\cB_l,$ we observe that they satisfy the assumptions of Lemma \ref{le:local0}, and
thus the punctured codes ${\cB}_l$ also have $(r,\delta)$ locality. 
Indeed, let $l=0.$ By Lemma~\ref{le:local0}, Eq.~\eqref{eq:C0}, the code $\cB^{(0)}$ given by 
\begin{align*}
	\cB^{(0)} = \{(c_0,c_{n\nu},\ldots,c_{(r+\delta-2)n\nu})\mid (c_0,\ldots,c_{n(j+1)-1})\in\cB \}
\end{align*} 
has dimension at most $r$ and minimum distance at least $\delta$. This implies that the coordinates that are multiples 
of $n\nu$ isolate a repair group of the code $\cB_0$. By shifting this set of coordinates to the right by $n$ positions, we obtain another 
repair group of $\cB_0$, which is disjoint from the first one. After several more shifts we will reach 
the set of coordinates $\{n(\nu-1),n(\nu-1)+n\nu,\ldots,n(\nu-1)+(r+\delta-2)n\nu\}.$ The collection of the sets constructed along the way forms 
a partition of the support of $\cB_0$ into disjoint repair groups. 
The same argument works for every code $\cB_l,1\leq 
l\leq n-1$ whose repair groups are formed by shifting the repair groups of $\cB_0$ to the right by $l$ positions.
This concludes the proof of 
Part (a).
	
Let us prove Part (b). Recalling the discussion in the beginning of Sec.~\ref{sec:clrc}, it is possible to represent
the generator matrix $G$ of the code to have rows of the form
   \begin{align}
		((\alpha^t)^{n(j+1)-1},(\alpha^t)^{n(j+1)-2},\ldots,1), \quad t\in\{0,\ldots,n(j+1)-1\}\setminus\cZ.\label{eq:rows}
	\end{align}   
(note the inverse order of the exponents). 
 Let 
    $$
    \cI=\{0,n/k,\dots,n(j+1)-n/k\}
    $$
be a subset of coordinates. As before, we label the columns of $G$ by the exponents of $\alpha$ from $0$ to $n(j+1)-1$ and consider a square $k(j+1)\times k(j+1)$ submatrix $G_{\cI}$ formed of the columns with indices in $\cI$. We claim that $G_{\cI}$ is invertible. Indeed, the rows of $G_\cI$ have the form
   \begin{align*}
		\alpha^{-t}((\alpha^{tn/k})^{k(j+1)},(\alpha^{tn/k})^{k(j+1)-1},\ldots,\alpha^{tn/k}), \quad t\in\{0,\ldots,n(j+1)-1\}\setminus\cZ,
	\end{align*}
and thus it forms a Vandermonde matrix generated by the set $(\alpha^{tn/k})$ for all $t$ outside the set of zeros $\cZ.$ We will assume that the submatrix $G_{\cI}=I_{k(j+1)}$ (the identity matrix) and continue to use the notation $G$ for the resulting generator matrix of the code.

Let $g_{i,0},g_{i,1},\dots,g_{i,n(j+1)-1}$ be the $i$th row of $G$, where $i\in\{0,\ldots,k-1\}$. 
With an outlook of constructing convolutional codes later in this section, define the polynomials 
   \begin{gather}
   g_i(D)=\sum_{s=0}^{n(j+1)-1}g_{i,s}D^s \nonumber\\
   g_{i,l}(D)=\sum_{s=0}^jg_{i,ns+i}D^i, \quad l=0,1,\dots,n-1. \label{eq:gil}
   \end{gather}
Then we have
	\begin{align*}
		{g}_i(D)=\sum_{l=0}^{n-1} D^l\sum_{s=0}^jg_{i,ns+l}D^{ns}=\sum_{l=0}^{n-1} D^l g_{i,l}(D^n),
	\end{align*}
Since $G_\cI$ is the identity matrix, we have $g_{i,in/k}(D)=1$ and $g_{i,l}(D)=0$ for $l\in\{0,n/k,\ldots,n-n/k\}\backslash\{in/k\}$.

To write the generator matrix in the circulant form given in the statement, we need to form the matrices $G_{il}.$ This is
accomplished by writing the coefficients of $g_{i,j}(D)$ as the first row of $G_{il}$ and filling the rest of this matrix by consecutive 
cyclic shifts to the right. This yields the following $(j+1)\times n(j+1)$ matrix
	\begin{align}
	\begin{pmatrix}
		G_{i,0} & G_{i,1} & \cdots & G_{i,n-1}
	\end{pmatrix}.\label{eq:mat}
	\end{align}
Note that each row in this matrix is a codeword of the code (equivalent to) $\cB.$ Finally, the matrix
	\begin{equation}\label{eq:Gm}
		G=\begin{pmatrix}
		G_{0,0} & G_{0,1} & \cdots & G_{0,n-1}\\
		G_{1,0} & G_{1,1} & \cdots & G_{1,n-1}\\
		\vdots & \vdots & \ddots & \vdots\\
		G_{k-1,0} & G_{k-1,1} & \cdots & G_{k-1,n-1}
		\end{pmatrix}.
	\end{equation}
formed of the rows \eqref{eq:mat} for $i=0,\dots,k-1$ generates a code equivalent to the code $\cB.$
\end{proof}

This theorem gives an explicit representation of $\cB$ as a quasicyclic code, and we can use this representation
to construct a convolutional code following the method in Sec.~\ref{sec:cc-qc}. Let $G(D)=(g_{i,l}(D))$ be a $k\times n$ generator matrix 
where $g_{i,l}(D)$ is defined in \eqref{eq:gil}. Since $\deg(g_{i,l}(D))\le j,$ we conclude that the memory of the generator 
matrix $G(D)$ is $M\leq j$. 
	Having \eqref{eq:qc} in mind, define an $(n,k)$ tailbiting convolutional code over $\cC=\cC_{[0,j]}\in\ff_q[D]$ as a set of sequences
\begin{align*}
	\cC=\left\{c(D)\mid c(D)=u(D)G(D);u(D)=\sum_{s=-M}^{j}u_sD^s; u_{-s}=u_{j+1-s}, s=1,\ldots,M \right\}.
\end{align*}
Here $j, k-1\le j\le n-1$ is any integer such that $(r+\delta-1)|(j+1)$ and $k|n.$

The next theorem states the main properties our construction.
\begin{theorem}
	The code $\cC$ has $(r,\delta)$ row locality. When viewed as a block code, the minimum distance of $\cC$ attains the bound \eqref{eq:delta}.                      
\end{theorem}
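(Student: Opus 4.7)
The plan is to exploit the correspondence between tailbiting convolutional codes and quasicyclic block codes recalled in Sec.~\ref{sec:cc-qc}, which reduces both claims to statements about the block code $\cB$ that were already established in Theorem~\ref{thm:quasi}.

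First I would make the equivalence explicit at the codeword level. A codeword of $\cC$ has the form $c(D) = u(D)G(D)$ where $u(D) = \sum_{s=-M}^{j} u_s D^s$ satisfies the tailbiting constraint $u_{-s}=u_{j+1-s}$ for $s=1,\ldots,M$. With this constraint the product $u(D)G(D)$ can be read as a cyclic convolution modulo $D^{j+1}-1$. Concretely, the $l$-th stream $c^{(l)}(D)=\sum_{i=0}^{k-1}u_i(D)g_{i,l}(D)$, taken mod $D^{j+1}-1$, is precisely the vector obtained by multiplying the information block $(u_0^{(i)},\ldots,u_j^{(i)})$ by the circulant $G_{i,l}$ whose first row is the coefficient vector of $g_{i,l}(D)$, and summing over $i$. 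Thus the concatenation of the $n$ streams $(c_0^{(l)},\ldots,c_j^{(l)})$, $l=0,\ldots,n-1$, reproduces a codeword of the quasicyclic code $\cB$ generated by the matrix in \eqref{eq:Gm}. This defines a Hamming-weight preserving bijection between $\cC$ (viewed as a block code of length $n(j+1)$) and $\cB$, differing only by a fixed permutation that interleaves the streams.

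With this identification in hand, both parts follow immediately. For row locality, the $l$-th row code $\cC^{(l)}_{[0,j]}$ defined in \eqref{eq:row code} is, under the above bijection, exactly the punctured code $\cB_l=\{(c_l,c_{l+n},\ldots,c_{l+nj}) \mid c\in\cB\}$ appearing in Theorem~\ref{thm:quasi}(a). That theorem asserts $\cB_l$ is a code of length $j+1$ with $(r,\delta)$ locality, which gives $(r,\delta)$ row locality of $\cC$ at time $j$. For the block-code minimum distance, the same bijection gives $d(\cC_{[0,j]})=d(\cB)$, and Theorem~\ref{thm:quasi}(a) already guarantees that $\cB$ is an optimal $(n(j+1),k(j+1))$ LRC code whose distance attains \eqref{eq:delta} with equality, yielding the claim.

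The main obstacle is not mathematical depth but careful bookkeeping: the convolutional picture groups coordinates by time instant into blocks of length $n$, while the quasicyclic picture groups them by stream into blocks of length $j+1$, so one must keep track of the interleaving permutation and check that the tailbiting constraint \eqref{eq:qc} is exactly what turns the semi-infinite convolution $u(D)G(D)$ into cyclic convolutions modulo $D^{j+1}-1$ in each stream. Once this is verified, both assertions reduce to direct quotation of Theorem~\ref{thm:quasi}, and no new combinatorial or algebraic work is needed.
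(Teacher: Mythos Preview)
Your proposal is correct and follows essentially the same route as the paper: you use the tailbiting constraint to turn the convolution $u(D)G(D)$ into cyclic convolutions modulo $D^{j+1}-1$ in each stream, identify the resulting block code with $\cB$ via the circulant generator matrix \eqref{eq:Gm}, and then read off both row locality (via $\cC^{(l)}_{[0,j]}=\cB_l$) and the minimum distance directly from Theorem~\ref{thm:quasi}(a). The paper's proof does exactly this, with the same two displayed equalities for $c^{(l)}(D)$ and the same appeal to Theorem~\ref{thm:quasi}; your version merely adds a bit more commentary on the interleaving permutation, which is harmless.
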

\begin{proof}	
For $l=0,\ldots,n-1$, we have
	\begin{align*}
		c^{(l)}(D) = \sum_{i=0}^{k-1}u^{(i)}(D)g_{i,l}(D).
	\end{align*}
	Furthermore, since $u_{-s}=u_{j+1-s}$ for $s=1,\ldots,M$, we have the following relation
	\begin{align*}
		c^{(l)}(D) = \sum_{i=0}^{k-1}u^{(i)}(D)g_{i,l}(D) \mod (D^{j+1}-1).
	\end{align*}
	In other words, we have
	\begin{align*}
		\begin{pmatrix}
		c^{(0)}  c^{(1)}  \dots  c^{(n-1)}
		\end{pmatrix}=
		\begin{pmatrix}
		u^{(0)}  u^{(1)}  \dots  u^{(k-1)}
		\end{pmatrix} G, 
	\end{align*}
where the matrix $G$ is defined in \eqref{eq:Gm}. This implies that the codes $\cC^{(l)}$ are exactly the codes $\cB_l$ defined in
in Theorem~\ref{thm:quasi}(a), viz., $\cC^{(l)}={\cB}_l$ for $l=0,\ldots,n-1.$ Since the code ${\cB}_l$ has $(r,\delta)$ locality for $l=0,\ldots,n-1$, we conclude that 
the code $\cC$ has $(r,\delta)$ row locality. Concluding, we have established that the convolutional code
$\cC$ has $(r,\delta)$ row locality.

As a block code, $\cC$ is equivalent to the code $\cB,$ which proves the last claim of the theorem.
\end{proof}

The large minimum distance of the code $\cC$ is related to the performance of the (hard decision) Viterbi decoding of the code $\cC$, and is therefore of interest in applications.

As remarked earlier, the constructed codes stop short of attaining the bound \eqref{eq:sb-lrcc}, and thus cannot be claimed to be optimal. 
Of course, as observed after Def.~\ref{def:djc}, the $j$th column distance of the code $\cC$ is at least the minimum distance of the code $\cB$, 
given by \eqref{eq:db}, but a more precise estimate remains an open question. 
Nevertheless, we believe that extension of the basic construction of LRC codes to the case of convolutional codes carries potential for future 
research into their structure. In particular, the algebraic machinery of quasicyclic codes of \cite{Ling01,Lally01} could lead to new constructions, 
and it may also be possible to further extend these studies to codes over ring alphabets \cite{Ling2003}.

\section{Bi-cyclic codes with availability}\label{sec:avl}

The H-LRC codes constructed in Sec.~\ref{sec:h} rely on $h$ embedded recovering sets for each code coordinate, which are not disjoint. In this section we consider LRC codes with $t$ \emph{disjoint} recovering sets for each code coordinate, i.e., LRC codes with availability $t$ (here we do not pursue a hierarchy of locality). This arises when the data is simultaneously requested by a large number of users, 
which suggests that the erased coordinate afford recovery from several nonoverlaping recovering sets in order to increase data availability. LRC codes with this property are defined as follows \cite{wang2014repair,rawat2016locality}.

\begin{definition}[\sc LRC codes with availability]
	\label{def:lrc-a}
	Let $t\geq 1$ and $(r_1,\delta_1),\ldots,(r_t,\delta_t)$ be integers. A code $\cC\subset \mathbb{F}_q^n$ is said to have availability $t$ with locality $(r_1,\delta_1),\ldots,(r_t,\delta_t)$ if for every coordinate $j,1\leq j\leq n$ of the code $\cC$ there are $t$ punctured codes $\cC^{(i)},1\leq i\leq t$ such that $j\in\supp(\cC^{(i)}),i=1,\dots,t$ and \\
	\indent (a) $\dim(\cC^{(i)})\le r_i$,\\
	\indent (b) $d(\cC^{(i)})\ge\delta_i$,\\
	\indent (c) $\bigcap_{i=1}^t \supp(\cC^{(i)})=\{j\}$.
\end{definition}

Below we limit ourselves to the case $t=2.$
We say that two partitions $\cP_1,\cP_2$ of the coordinates of the code are {\em orthogonal} (or transversal) if $|P_1\cap P_2|\le 1$
for any $P_1\in \cP_1,P_2\in \cP_2$ and every coordinate is contained in a pair of subsets $X\in\cP_1, Y\in\cP_2.$
Orthogonal partitions enable multiple disjoint recovering sets and were used in \cite{tb14} to construct codes with availability. 
A simple observation made in \cite{tb14} is that product codes naturally yield orthogonal partitions, and it is possible to use
products of one-dimensional cyclic codes to support this structure. A drawback of this approach is that product codes result in rather poor 
parameters of LRC codes with availability, in particular the rate of the resulting codes is low (although the alphabet is small compared to the 
code length \cite{tb14,huang2017multierasure}). It is well known that the rate of product codes can be increased with no loss to the distance by passing to generalized concatenations of codes \cite{BZ1974}. In this section we use a particular case of this construction given in \cite{saints1993hyperbolic} and sometimes called {\em hyperbolic codes}. The resulting LRC codes with availability have the same distance guarantee as simple product codes while having a much higher rate. As above, our starting point is the general method of Theorem \ref{thm:cyclic}, and we proceed similarly to Eq.~\eqref{eq:def-set-2}. 

We start with a finite field $\ff_q$ and assume that the code length $n$ divides $q-1.$ We further choose the size of the repair groups
to be $r_1,r_2$ and suppose that $0<r_1\leq r_2$ and $(r_1+1)| n$ and $(r_2+1)| n$. Further, let $\nu_1=n/(r_1+1)$ and $\nu_2=n/(r_2+1)$. Let $\alpha\in\ff_q$ be a primitive $n$-th root of unity. 
To simplify the expressions below, we will construct codes with $\delta_1=\delta_2=2.$ To construct the defining set $\cZ$ of our code, let
\begin{align}
\cL_1=\emptyset,\quad \cD_1=\{(0,0)\},\quad \cZ_1=\cL_1\cup\cD_1.
\end{align}
Let us fix the designed distance of the code $\cC$ to be $\delta\geq 2.$ Define
\begin{align*}
&\cL_{2,1}=\bigcup_{l_1=0}^{\nu_1-1}\bigcup_{j=0}^{n-1}(\cZ_1+(l_1(r_1+1),j)),\;\\
&\cL_{2,2}=\bigcup_{i=0}^{n-1}\bigcup_{l_2=0}^{\nu_2-1}(\cZ_1+(i,l_2(r_2+1))),\;
\cL_2=\cL_{2,1}\cup\cL_{2,2},\\
&\cD_2=\{(i,j) \mid (i+1)(j+1) < \delta \},\;
\cZ_2=\cL_2\cup\cD_2.
\end{align*}
Note that zeros are now indexed by pairs of exponents, and pairs are
added element-wise. Finally, put $\cZ=\cZ_2$. 

Consider a two-dimensional cyclic code $\cC=\langle g(x,y)\rangle$ of length $n^2$, where
\begin{align}
g(x,y) = \prod_{(i,j)\in\cZ}(x-\alpha^i)(y-\alpha^j).
\end{align}

\begin{lemma}
	The code $\cC$ has two disjoint recovering sets of size $r_1$ and $r_2$ for every coordinate.
\end{lemma}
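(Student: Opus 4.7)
The plan is to exploit the product structure of $\cL_{2,1}$ and $\cL_{2,2}$: the first implies that every codeword of $\cC$, restricted to a fixed row, belongs to a one-dimensional cyclic code with locality $r_1$, while the second does the same for columns with locality $r_2$. Pairing these two restrictions yields two recovering sets, one inside the row and one inside the column of each coordinate, that intersect only at the target.

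First I would show that for any fixed $j_0\in\{0,\ldots,n-1\}$ the row restriction $\{(c_{i,j_0})_{i=0}^{n-1}:c\in\cC\}$ is contained in the one-dimensional cyclic code $\cC_1$ of length $n$ whose defining set of zeros is $\{l_1(r_1+1):0\le l_1\le\nu_1-1\}$. Since $\cL_{2,1}\subset\cZ$, for every $c\in\cC$, every $l_1$, and every $j'\in\{0,\ldots,n-1\}$ we have
\begin{align*}
\sum_{i,j}c_{i,j}\,\alpha^{l_1(r_1+1)i+j'j}=0.
\end{align*}
Setting $F_j=\sum_i c_{i,j}\alpha^{l_1(r_1+1)i}$, the vanishing $\sum_j F_j\alpha^{j'j}=0$ for every $j'$ is a one-dimensional inverse Fourier identity and forces $F_j=0$ for every $j$, which is exactly the claimed row inclusion. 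Applying Lemma~\ref{lemma:zeros} to $\cC_1$ with $\nu=\nu_1$, $m=r_1+1$, $u=0$ (as in the proof of Theorem~\ref{thm:cyclic}) produces the weight-$(r_1+1)$ codeword $b(x)=\sum_{k=0}^{r_1}x^{k\nu_1}$ of $\cev{\cC_1'}$, whose cyclic shifts partition $\{0,\ldots,n-1\}$ into $\nu_1$ disjoint repair groups $\{u+k\nu_1:k=0,\ldots,r_1\}$. Hence every coordinate $(i_0,j_0)$ of $\cC$ has a size-$r_1$ recovering set lying entirely in its row.

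The symmetric argument, with the roles of the two coordinates swapped and $\cL_{2,2}$ in place of $\cL_{2,1}$, gives a size-$r_2$ recovering set for $(i_0,j_0)$ lying entirely in its column. Since the row $j=j_0$ and the column $i=i_0$ share only $(i_0,j_0)$, and this coordinate is excluded from both recovering sets, the two sets are disjoint. The only delicate step is the first one, namely the Fourier pullback that converts two-dimensional vanishing on $\cL_{2,1}$ into one-dimensional vanishing on each fixed row; thereafter the argument is a direct invocation of the one-dimensional locality machinery already developed in the paper.
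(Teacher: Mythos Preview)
Your proof is correct and follows essentially the same route as the paper: restrict to a fixed row (resp.\ column), show the restriction lies in a one-dimensional cyclic code whose defining set contains the cosets coming from $\cL_{2,1}$ (resp.\ $\cL_{2,2}$), and then invoke the one-dimensional locality machinery to obtain repair groups; disjointness is immediate since one recovering set sits in a row and the other in a column. The only notable difference is that you make the ``Fourier pullback'' step explicit---using the invertibility of the length-$n$ DFT to pass from vanishing at all $(l_1(r_1{+}1),j')$ to vanishing of each $F_{j_0}$---whereas the paper states more tersely that the generator polynomial of the row-punctured code is divisible by $\prod_{(i,j)\in\cL_{2,1}^{j}}(x-\alpha^i)$ and then appeals to Lemma~\ref{le:local0}; your version is arguably cleaner at that step.
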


\begin{proof}
	The proof relies on Lemma~\ref{le:local0}. For $c\in\cC$, let us write $c=(c_{i,j})$ where $0\leq i\leq n-1,0\leq j\leq n-1$. Fix $j$ and let $\cC^{(1)}=\{ (c_{0,j}, c_{\nu_1,j},\ldots,c_{r_1\nu_1,j}) \mid c\in \cC \}$ and let $\cL_{2,1}^{j}=\bigcup_{l_1=0}^{\nu_1-1}(\cZ_1+(l_1(r_1+1),j))$. The generator polynomial for the punctured code $\{ (c_{0,j}, c_{1,j},\ldots,c_{n-1,j}) \mid c\in \cC \}$ is given by $y^j\prod_{(i,j)\in\cZ}(x-\alpha^i)$, which is divisible by $\prod_{(i,j)\in\cL_{2,1}^{j}}(x-\alpha^i)$. Then, by arguments similar to Lemma~\ref{le:local0} we
conclude that the code $\cC^{(1)}$ has dimension at most $r_1$ and distance at least $2$. Since the code is cyclic in both dimensions, we 
also claim that every of symbol of the code $\cC$ has a recovering set of size at most $r_1$ for one erasure.
	
Repeating the above arguments for a fixed index $i$, we isolate another recovering of size $r_2$ for every coordinate. Furthermore, the two recovering sets are disjoint by construction.
\end{proof}

To estimate the distance, recall the following result about bicyclic codes.

\begin{lemma}[\sc Hyperbolic bound \cite{saints1993hyperbolic}]
	Suppose that the defining set of zeros of an $n\times n$ bicyclic code contains a subset given by $\cZ=\{(i,j): (i+1)(j+1) < d \}.$ Then the 
distance of the code is at least $d$.
\end{lemma}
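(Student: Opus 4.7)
The plan is to reduce the bivariate statement to two applications of the ordinary (one-dimensional) BCH bound by slicing the codeword at a carefully chosen value of $y$. Fix any nonzero codeword $c(x,y)\in\cC$ and write it in the row form $c(x,y)=\sum_{i=0}^{n-1}R_i(y)\,x^i$ with $R_i(y)\in\ff_q[y]/(y^n-1)$, and set $I=\{i:R_i\not\equiv 0\}$. For each $b\in\{0,\dots,n-1\}$ consider the slice $p_b(x):=c(x,\alpha^b)=\sum_iR_i(\alpha^b)\,x^i\in\ff_q[x]/(x^n-1)$. Since $c\not\equiv 0$ there is a smallest index $b_0$ with $p_{b_0}\not\equiv 0$, and the whole argument will revolve around this $b_0$.

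Two BCH applications then do the work. First, by minimality of $b_0$, $p_b\equiv 0$ for every $b<b_0$, which forces $R_i(\alpha^b)=0$ for every $i\in I$ and every $b=0,1,\ldots,b_0-1$; thus each row polynomial $R_i$, $i\in I$, possesses the $b_0$ consecutive roots $\alpha^0,\ldots,\alpha^{b_0-1}$, and BCH yields $\mathrm{wt}(R_i)\ge b_0+1$. Second, the hyperbolic condition $(a+1)(b_0+1)<d$ guarantees $p_{b_0}(\alpha^a)=c(\alpha^a,\alpha^{b_0})=0$ for every such $a$, i.e., for $a=0,1,\ldots,\lceil d/(b_0+1)\rceil-2$; since $p_{b_0}\not\equiv 0$, BCH gives $\mathrm{wt}(p_{b_0})\ge\lceil d/(b_0+1)\rceil$, and because $\supp(p_{b_0})\subseteq I$ this translates into $|I|\ge\lceil d/(b_0+1)\rceil$.

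Assembling the two bounds,
\begin{align*}
\mathrm{wt}(c)=\sum_{i\in I}\mathrm{wt}(R_i)\;\ge\;|I|\,(b_0+1)\;\ge\;\left\lceil\frac{d}{b_0+1}\right\rceil(b_0+1)\;\ge\;d,
\end{align*}
which is the desired conclusion. I do not anticipate any substantive obstacle; the only point worth checking is that the boundary cases cause no trouble. If $b_0=0$ the first BCH step is vacuous but step two forces $|I|\ge d$; if $b_0+1\ge d$ the second step produces no forced consecutive roots, but step one already yields $\mathrm{wt}(R_i)\ge d$. In both regimes the displayed chain still delivers $\mathrm{wt}(c)\ge d$ with the convention $\lceil d/(b_0+1)\rceil\ge 1$, so these extremes fold into the main argument without a separate discussion.
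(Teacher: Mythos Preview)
The paper does not supply its own proof of this lemma; it is quoted from \cite{saints1993hyperbolic} as an established result, so there is nothing in the present paper to compare your argument against.

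That said, your slicing argument is correct. Choosing the minimal $b_0$ with $p_{b_0}\not\equiv 0$ forces $R_i(\alpha^b)=0$ for every $i\in I$ and every $b<b_0$, so each nonzero row polynomial acquires $b_0$ consecutive zeros and hence weight $\ge b_0+1$; the hyperbolic defining set then forces $p_{b_0}(\alpha^a)=0$ for all $a$ with $(a+1)(b_0+1)<d$, giving $|I|\ge \mathrm{wt}(p_{b_0})\ge\lceil d/(b_0+1)\rceil$. Multiplying the two BCH estimates yields $\mathrm{wt}(c)\ge d$, and the boundary cases you flag are indeed absorbed by the displayed chain. One small point worth making explicit in a write-up: the existence of $b_0$ uses that $\alpha$ is a primitive $n$th root of unity, so $R_i(\alpha^b)=0$ for all $b\in\{0,\dots,n-1\}$ forces $R_i\equiv 0$ in $\ff_q[y]/(y^n-1)$; this also covers the case where $\alpha$ lives in an extension of $\ff_q$.
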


Thus, the distance of the code $\cC$ constructed above is at least $\delta,$ and its dimension $\dim(\cC)=n^2-|\cZ|.$  
Let us estimate the dimension from below. We have
\begin{align*}
|\cD_2| &= \sum_{j=0}^{\delta-2}\left\lfloor \frac{\delta}{j+1} \right\rfloor
\leq \delta\sum_{j=1}^{\delta-1}\frac{1}{j}\\
&\leq \delta\left(1+\int_{1}^{\delta-1}\ln (x-1) dx\right)\\
&\le \delta(1+\ln{(\delta-1)}).
\end{align*}
Therefore, we have
\begin{align*}
n^2-|\cZ| &\geq n^2-|\cL_2|-|\cD_2|\\
&= n^2-(r_1+r_2+1)\nu_1\nu_2-\delta(1+\ln{(\delta-1)})\\
&= \frac{n^2r_1r_2}{(r_1+1)(r_2+1)}-\delta(1+\ln{(\delta-1)}).
\end{align*}

To compare this estimate of the dimension with product codes, let $\cC'$ be a direct product of two cyclic LRC codes of 
length $n$ with locality $r$ and distance $\sqrt{\delta}$. The defining set of this code are given by $\cZ' = \cL'\cup\cD'$, where
\begin{align*}
	\cL' &= \{(i,j) \mid i,j = 1 \bmod (r+1)\},\\
	\cD' &= \{(i,j) \mid i,j = 1,\ldots,\sqrt{\delta}-1 \}.
\end{align*}
Choosing $r_1=r_2=r$ in our construction, we have $\cL_2+(1,1) = \cL'$. It is also easy to see that $|\cZ|\leq |\cZ'|$ and thus, $\dim(C)\geq \dim(\cC')$.

\begin{example}
	Let $r_1=2$, $r_2=6$, and $\delta=9$. Our construction gives rise to an LRC code with availability two, of length $n^2=441$, dimension $k=246$, and distance $d\geq9$, over a finite field $\mathbb{F}_q$ such that $21|(q-1)$ (for example, we can take $q=64$).
To compare these codes with the product construction, let us choose the column codes and row codes of length 21
and distance $3$, and let us take the maximum dimension of codes as given by the bound \eqref{eq:sb-1} for locality 2 and 6. This gives 
$k_1=13,k_2=17$ and the overall dimension $k=221,$ lower than above.
\end{example}

An extension of the construction in this section to the case of $t\geq 2$ can be easily obtained via $t$-dimensional cyclic codes and the general hyperbolic bound. 
Furthermore, using procedures similar to those used in \eqref{eq:def-set-1} and \eqref{eq:def-set-2}, our construction can be 
generalized to $h\ge 2$ levels of hierarchy such that the local codes in each of the $h$ levels have availability $t\ge 2.$ 
\appendices
\section{Proof of Proposition~\ref{cl:card}}\label{app:card}
Let us first prove a technical claim.
\begin{claim}
	\label{cl:d-mod}
	For $1\leq i\leq h+1$, we have $\delta_1 = (\delta_i-b^{(i-1)}_0) \bmod n_1$.
\end{claim}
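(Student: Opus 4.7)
The plan is to prove this by induction on $i$. The key structural observation is that every $n_j$ for $j\geq 1$ is a multiple of $n_1$: this follows immediately from the recurrence $n_{j+1}=\nu_j n_j$, which yields $n_j = \nu_{j-1}\nu_{j-2}\cdots \nu_1 n_1$ for $j\geq 2$. Consequently, when we reduce expressions like \eqref{eq:d} modulo $n_1$, every term of the form $c\cdot n_j$ with $j\geq 1$ vanishes.

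For the base case $i=1$, since $b_0^{(0)}=0$ by definition and $\delta_1 < n_1 = r_1+\delta_1-1$ (as $r_1\geq 2$), we have $(\delta_1 - b_0^{(0)}) \bmod n_1 = \delta_1$, which gives the claim directly.

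For the inductive step, I would assume $\delta_i \equiv \delta_1 + b_0^{(i-1)} \pmod{n_1}$ and compute $\delta_{i+1} \bmod n_1$ using \eqref{eq:d}. Reducing that expression mod $n_1$ kills the three terms $(\nu_i-a_i)n_i$, $\sum_{j=1}^{i-1} u_j^{(i)} n_j$, and $u_0^{(i)} n_1$, leaving
\begin{align*}
\delta_{i+1} \equiv \delta_i + b_0^{(i)} - b_0^{(i-1)} \equiv \delta_1 + b_0^{(i)} \pmod{n_1},
\end{align*}
where the second congruence uses the inductive hypothesis. Rearranging gives $\delta_1 \equiv (\delta_{i+1} - b_0^{(i)}) \bmod n_1$, which completes the induction.

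This argument is essentially mechanical once one notes the divisibility $n_1 \mid n_j$ for all $j\geq 1$; the main bookkeeping challenge is simply tracking that every summand on the right-hand side of \eqref{eq:d}, except for $\delta_i + b_0^{(i)} - b_0^{(i-1)}$, is manifestly a multiple of $n_1$. No auxiliary lemmas seem needed beyond the definitions in \eqref{eq:b0} and \eqref{eq:d} themselves.
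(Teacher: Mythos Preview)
Your proposal is correct and follows essentially the same inductive argument as the paper's proof: both verify the base case $i=1$ directly and then reduce \eqref{eq:d} modulo $n_1$ to pass from $i$ to $i+1$. Your explicit mention of the divisibility $n_1\mid n_j$ is a helpful clarification but does not constitute a different approach.
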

\begin{proof}
We prove the claim by induction. Clearly, for $i=1$ we have $\delta_1 = (\delta_1-b^{(0)}_0) \bmod n_1$. 
Next suppose that for a given $i, 1\leq i<h+1$ we have $\delta_1 = (\delta_{i}-b^{(i-1)}_0) \bmod n_1$.
Reducing \eqref{eq:d} modulo $n_1,$ we obtain the equality $(\delta_i-b^{(i-1)}_0) = (\delta_{i+1}-b^{(i)}_0)\bmod n_1$.
Thus also $\delta_1 = (\delta_{i+1}-b^{(i)}_0)\bmod n_1$, and this completes the
proof. \end{proof}

Now we are ready to prove Proposition~\ref{cl:card}.  
Clearly, for $i=1$ we have $|\cZ_1|=\delta_1-1=n_1-r_1$.
Suppose for $1\leq i'\leq i$ we have $|\cZ_{i'}|=n_{i'}-r_{i'}$, where $1\leq i<h+1$.
Let us establish the induction step. 

By the definition of $\cZ_{i+1}$ and \eqref{eq:d}, we have $|\cZ_{i+1}| \geq (\nu_i-a_i)n_i$. 
Consider the set 
   $$
\cA_{i+1}=(\nu_i-a_i)n_i+\{\delta_i-b^{(i-1)}_0-\delta_1+1,\delta_i-b^{(i-1)}_0-\delta_1+2,\ldots,n_i\}
   $$    
formed by adding $(\nu_i-a_i)n_i$ to every element of the subset above, and let $\cB_{i+1}=\cA_{i+1}\cap\cD_{i+1}$. 
Since $\cup_{l=0}^{\nu_i-a_i-1}(\cZ_i+ln_i)\subseteq \cD_{i+1}$, we have
\begin{align}
|\cZ_{i+1}| &= (\nu_i-a_i)n_i+|\cB_{i+1}\setminus (\cZ_i+(\nu_i-a_i)n_i)|+\Big|\bigcup_{l=\nu_i-a_i}^{\nu_i-1}(\cZ_i+ln_i)\Big|.
\end{align}
Next, we would like to determine the cardinality of the set $\cB_{i+1}\setminus (\cZ_i+(\nu_i-a_i)n_i)$. By Claim~\ref{cl:d-mod}, we have $\delta_1 = (\delta_i-b^{(i-1)}_0) \bmod n_1$. Therefore, $|\cA_{i+1}|$ is divisible by $n_1$. Note that $\cB_{i+1}\subseteq\cA_{i+1}$. Moreover, among the first $u^{(i)}_{i-1}n_{i-1}$ elements of $\cA_{i+1}$ there are $u^{(i)}_{i-1}r_{i-1}-b^{(i-1)}_0$ elements that are in $\cB_{i+1}$ but not in $\cZ_i+(\nu_i-a_i)n_i$. For the next $u^{(i)}_{i-2}n_{i-2}$ elements in $\cA_{i+1}$ there are $u^{(i)}_{i-2}r_{i-2}$ elements that are in $\cB_{i+1}$ but not in $\cZ_i+(\nu_i-a_i)n_i$, and so forth.
Hence, we have
\begin{align*}
|\cB_{i+1}\setminus (\cZ_i+(\nu_i-a_i)n_i)|
& = \sum_{j=1}^{i-1}u^{(i)}_j r_j + u^{(i)}_0 r_1+b^{(i)}_0 -b^{(i-1)}_0\\
& = b_i.
\end{align*} 
It follows that
\begin{align}
|\cZ_{i+1}| &= (\nu_i-a_i)n_i+b_i+\Big|\bigcup_{l=\nu_i-a_i}^{\nu_i-1}(\cZ_i+ln_i)\Big|\nonumber\\
&=(\nu_i-a_i)n_i+b_i+a_i|\cZ_i|\nonumber\\
&=n_{i+1}-a_i n_i+b_i+a_i(n_i-r_i)\label{eq:hypo}\\
&=n_{i+1}-(a_i r_i-b_i)\nonumber\\
&=n_{i+1}-r_{i+1}\nonumber,
\end{align}
where \eqref{eq:hypo} uses $|\cZ_i|=n_i-r_i$, which is the induction hypothesis. This completes the proof.

\section{Proof of Lemma~\ref{le:opt}}\label{app:le:opt}
We again argue by induction on $i$. For $i=1$, by \eqref{eq:d} we have
\begin{align}
\delta_{2} &= (\nu_1-a_1)n_1+\delta_1 + u^{(1)}_0 n_1+b^{(1)}_0-b^{(0)}_0\nonumber\\
&= n_2-a_1n_1+\delta_1+ u^{(1)}_0 (n_1-r_1) +b_1\label{eq:1}\\
&= n_2-a_1(r_1+\delta_1-\delta_0)+\delta_1 + u^{(1)}_0 (n_1-r_1) +b_1\nonumber\\
&= n_2-r_2+\delta_1-a_1(\delta_1-\delta_0) + u^{(1)}_0 (\delta_1-\delta_0)\label{eq:2}\\
&= n_2-r_2+\delta_1-(a_1-u^{(1)}_0)(\delta_1-\delta_0)\nonumber\\
&= n_2-r_2+\delta_1-\left\lceil \frac{r_2}{r_1}\right\rceil(\delta_1-\delta_0),\nonumber
\end{align}
where \eqref{eq:1} follows from $n_2=n_1\nu_1$ and \eqref{eq:b0}, in \eqref{eq:2} we used $a_1r_1=r_2+b_1$, and the last equality follows from $a_1=\lceil r_2/r_1 \rceil$ and $u^{(1)}_0=\lfloor (b^{(1)}_1+b^{(0)}_0)/r_1\rfloor=0$.

For the induction step, let us fix $i, 1\le i<h$ and suppose that 
     \begin{align}\label{eq:ih}
\delta_{i+1} = n_{i+1}-r_{i+1}+\delta_{i}-\sum_{l=1}^{i}\left\lceil \frac{r_{i+1}}{r_l} \right\rceil (\delta_l-\delta_{l-1}),
     \end{align}
provided that conditions \eqref{eq:opt} are satisfied. Observe that
\begin{align}
\delta_{i+2} &= (\nu_{i+1}-a_{i+1})n_{i+1}+\delta_{i+1}+\sum_{j=1}^{i}u^{(i+1)}_j n_j + u^{(i+1)}_0 n_1+b^{(i+1)}_0-b^{(i)}_0\nonumber\\
&= n_{i+2}-a_{i+1}n_{i+1}+\delta_{i+1}+\sum_{j=1}^{i}u^{(i+1)}_j (n_j-r_j) + u^{(i+1)}_0 (n_1-r_1) +b_{i+1}\label{eq:step1}
\end{align}
Substituting $n_{i+1}$ from \eqref{eq:ih}, we obtain
\begin{align}
a_{i+1}n_{i+1} & = a_{i+1}\Big( r_{i+1} + \delta_{i+1} - \delta_{i}+\sum_{l=1}^{i}\Big\lceil \frac{r_{i+1}}{r_l} \Big\rceil (\delta_l-\delta_{l-1}) \Big)\label{eq:step2}.
\end{align}
In addition, also by the induction hypothesis, we have
\begin{align*}
u^{(i+1)}_0(n_1-r_1)&= u^{(i+1)}_0(\delta_1-\delta_0),\\
u^{(i+1)}_j(n_j-r_j)&= u^{(i+1)}_j (\delta_j-\delta_{j-1}) + \sum_{l=1}^{j-1} u^{(i+1)}_j\Big\lceil \frac{r_j}{r_l} \Big\rceil(\delta_l-\delta_{l-1}),\quad 1\leq j \leq i.
\end{align*}
Therefore, 
\begin{align}
\sum_{j=1}^{i}u^{(i+1)}_j & (n_j-r_j) + u^{(i+1)}_0 (n_1-r_1)\nonumber\\
&=\sum_{l=1}^{i} \Big( u^{(i+1)}_l + \sum_{j=l+1}^{i}u^{(i+1)}_j\Big\lceil \frac{r_j}{r_l} \Big\rceil \Big)(\delta_l-\delta_{l-1})+ u^{(i+1)}_0(\delta_1-\delta_0). \label{eq:step3}
\end{align}
Substituting \eqref{eq:step2} and \eqref{eq:step3} into \eqref{eq:step1}, we obtain
\begin{align*}
\delta_{i+2} = n_{i+2}- & r_{i+2}+\delta_{i+1}-a_{i+1}(\delta_{i+1}-\delta_{i})\\
&+\sum_{l=2}^{i}\Big(  u^{(i+1)}_l + \sum_{j=l+1}^{i}u^{(i+1)}_j\Big\lceil \frac{r_j}{r_l} \Big\rceil - a_{i+1} \Big\lceil \frac{r_{i+1}}{r_l} \Big\rceil \Big) (\delta_l-\delta_{l-1})\\
&+\Big(  u^{(i+1)}_0 + u^{(i+1)}_1 + \sum_{j=2}^{i}u^{(i+1)}_j\Big\lceil \frac{r_j}{r_1} \Big\rceil - a_{i+1} \Big\lceil \frac{r_{i+1}}{r_1} \Big\rceil \Big) (\delta_1-\delta_{0}).
\end{align*}
Thus, if the corresponding conditions of \eqref{eq:opt} are satisfied, then we have
\begin{align*}
\delta_{i+2} &= n_{i+2}-  r_{i+2}+\delta_{i+1}-\sum_{l=1}^{i+1}\Big\lceil \frac{r_{i+2}}{r_l}\Big\rceil (\delta_{i+1}-\delta_{i}).
\end{align*}
This completes the induction step.

\section{Proof of Proposition~\ref{prop:lrcc-max}}\label{app:proof-of-lrcc-max}

\noindent {\em Part (a)}: 
The generator matrix of the truncated code $\cC_{[0,j]}$ is given in \eqref{eq:Gj}, where $\rank(G_0)=k$.

Since $G_0$ has full rank, by Definition~\ref{def:djc}, the $j$-th column distance of $\cC$ is equal to
\begin{align*}
	d_j^c=\min\{\mathrm{wt}(u_{[0,j]}G_j^c) \mid u_0\neq 0 \},
\end{align*} 
where $u_{[0,j]}=(u_0,\ldots,u_j)\in\ff_q^{k(j+1)}$ is an input sequence truncated at the $j$-th time instant. 
Obviously,  
\begin{align}
	d_j^c\leq \min\{\mathrm{wt}(u_{[0,j]}G_j^c) \mid u_0\neq 0, u_1=u_2=\ldots=u_j=0 \}. \label{eq:djc-u}
\end{align}
Consider the $k\times n(j+1)$ submatrix $[G_0, G_1,\dots,G_j]$ that forms the first $k$ rows of $G_j^c$. By elementary row operations 
it is possible possible to make some, say first, $k$ columns of $G_1$ be all-zero, and the same is true for some $k$ columns of the matrices $G_i,i=2,\dots,j.$ (Note that to accomplish this, we use all the rows of $G_j^c$ and not just the rows of the submatrix $[G_0, G_1,\dots,G_j]$.)
As a result, the vector $b(u_0):=u_0[G_0, G_1,\dots,G_j]$ will contain at least $kj$ zero coordinates for any choice of $u_0\in \ff_q^k$, and so the
effective length of the set of vectors $\{b(u_0)\}$ is $n(j+1)-kj$. The set 
of vectors $\{b(u_0) \mid u_0\in\ff_q^k\}$ is a subset of the code $\cC_{[0,j]}$ and it forms a linear code $\cK$ of effective length $n(j+1)-kj$ with $(r,\delta)$ locality if the original convolutional code $\cC$ has (column or row) $(r,\delta)$ locality. The distance of the
code $\cK$ gives an upper bound on $d_j^c,$ and is itself bounded above as in \eqref{eq:delta}. Substituting the parameters of the code $\cK,$ we obtain the bound \eqref{eq:sb-lrcc} for the distance $d_j^c.$ Part (a) is proved.

{\em Remark:} Without the locality assumption the above argument proves the Singleton bound \eqref{eq:sbc} for the column distance of the code $\cC.$

\vspace*{.1in}\noindent {\em Part (b)}: Recall the following observation (e.g., \cite{gluesing2006strongly}):

\begin{proposition}\label{prop:djc}
	Let $j\geq 0$. Let $H_j^c$ be the parity check matrix for the truncated convolutional code $\cC_{[0,j]}$. Then the following properties are equivalent:
	\begin{enumerate}
		\item $d_j^c=d$;
		\item none of the first $n$ columns of $H_j^c$ is contained in the span of any other $d-2$ columns and one of the first $n$ columns of $H_j^c$ is in the span of some other $d-1$ columns of $H_j^c$.
	\end{enumerate}
\end{proposition}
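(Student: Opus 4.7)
This is the standard parity-check characterization of minimum distance, specialized to the subfamily of codewords of $\cC_{[0,j]}$ with nonzero leading block $c_0$. My plan is to set up the usual correspondence between codewords and linearly dependent column sets of $H_j^c$, then prove each direction of the equivalence $(1)\Leftrightarrow(2)$ separately.

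First I would index the columns of $H_j^c$ by $\{1,\ldots,n(j+1)\}$, with the first $n$ corresponding to the coordinates of $c_0$, and record the basic fact that a nonzero vector $c$ lies in $\ker H_j^c=\cC_{[0,j]}$ iff $\sum_{i\in\supp(c)}c_i H_j^c[:,i]=0$. The side condition $c_0\neq 0$ then translates cleanly into $\supp(c)\cap\{1,\ldots,n\}\neq\emptyset$, so $d_j^c$ is the smallest $w$ such that some set of $w$ columns of $H_j^c$, including at least one of the first $n$, supports a nonzero dependence in which every involved column has a nonzero coefficient.

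For the forward implication $(1)\Rightarrow(2)$, assuming $d_j^c=d$, I would take a minimum-weight codeword $c^\ast$ attaining the definition in Definition~\ref{def:djc} and pick any coordinate $i^\ast\in\supp(c^\ast)\cap\{1,\ldots,n\}$. Solving the parity-check relation for $H_j^c[:,i^\ast]$ immediately exhibits it as a linear combination of the other $d-1$ columns indexed by $\supp(c^\ast)\setminus\{i^\ast\}$, establishing the second clause of $(2)$. For the first clause I would argue by contradiction: if some column $H_j^c[:,i]$ with $i\le n$ were in the span of only $d-2$ other columns, then taking the coefficient $1$ at position $i$ and the negated combination coefficients at the other positions produces a codeword of weight at most $d-1$ with $c_0\neq 0$, contradicting $d_j^c=d$.

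For the reverse implication $(2)\Rightarrow(1)$, the second clause of $(2)$ produces, via the same reconstruction, a codeword of weight at most $d$ with $c_0\neq 0$, hence $d_j^c\le d$; the first clause of $(2)$ rules out any codeword of weight at most $d-1$ with $c_0\neq 0$ by the same spanning argument run in reverse, giving $d_j^c\ge d$. The main subtlety I expect to manage carefully is that ``$H_j^c[:,i^\ast]$ lies in the span of $d-1$ other columns'' only guarantees a codeword of weight \emph{at most} $d$, since some of the coefficients in the combination may vanish. Clause (i) of condition $(2)$ is exactly what rules out this degeneracy: any such vanishing would place a first-$n$ column in the span of fewer than $d-1$ others, which is forbidden. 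So the two clauses of $(2)$ interlock to pin $d_j^c$ precisely to $d$, and I would make this interaction explicit rather than treating the two clauses in isolation.
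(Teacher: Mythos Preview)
Your argument is correct and is exactly the standard parity-check characterization specialized to codewords with $c_0\neq 0$. Note, however, that the paper does not supply its own proof of this proposition: it is quoted as a known observation from \cite{gluesing2006strongly} and used as a black box in the proof of Proposition~\ref{prop:lrcc-max}(b), so there is no paper proof to compare against.
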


Now suppose that $d_j^c$ attains \eqref{eq:sb-lrcc} with equality while $d_{j-1}^c<(n-k)j+\delta-\left\lceil\frac{k}{r}\right\rceil(\delta-1)$. By Proposition~\ref{prop:djc}, there exists a column among the first $n$ columns of $H_{j-1}^c$ such that it is in the span of some other $d_{j-1}^c-1$ columns of $H_{j-1}^c$. Note that $H_{j-1}^c$ is a submatrix of $H_j^c$. Specifically, we have
\begin{align*}
	H_j^c=\begin{pmatrix}
	H_{j-1}^c & \begin{matrix} 0 \\ \vdots \\ 0 \end{matrix}\\
	\begin{matrix} H_j & H_{j-1} & \cdots & H_1 \end{matrix} & H_0
	\end{pmatrix},
\end{align*} where $H_i,1\leq i \leq j$ are $(n-k)\times n$ matrices and $\rank(H_0)=n-k.$
Because of this, there exists a column among the first $n$ columns of $H_j^c$ such that it is in the span of some other $d_{j-1}^c-1+n-k<d_j^c-1$ columns of $H_j^c$, which by Proposition~\ref{prop:djc} contradicts our assumption about $d_j^c$.
Hence, it follows that the optimality of the $j$-th column distance implies the optimality of the $i$-th column distance for all $i\leq j$ for convolutional codes with (column or row) locality.

	\bibliographystyle{IEEEtranS}
	\bibliography{./LRCMultiSets}

% Generated by IEEEtranS.bst, version: 1.13 (2008/09/30)
\begin{thebibliography}{10}
\providecommand{\url}[1]{#1}
\csname url@samestyle\endcsname
\providecommand{\newblock}{\relax}
\providecommand{\bibinfo}[2]{#2}
\providecommand{\BIBentrySTDinterwordspacing}{\spaceskip=0pt\relax}
\providecommand{\BIBentryALTinterwordstretchfactor}{4}
\providecommand{\BIBentryALTinterwordspacing}{\spaceskip=\fontdimen2\font plus
\BIBentryALTinterwordstretchfactor\fontdimen3\font minus
  \fontdimen4\font\relax}
\providecommand{\BIBforeignlanguage}[2]{{%
\expandafter\ifx\csname l@#1\endcsname\relax
\typeout{** WARNING: IEEEtranS.bst: No hyphenation pattern has been}%
\typeout{** loaded for the language `#1'. Using the pattern for}%
\typeout{** the default language instead.}%
\else
\language=\csname l@#1\endcsname
\fi
#2}}
\providecommand{\BIBdecl}{\relax}
\BIBdecl

\bibitem{bbv19}
S.~Ballentine, A.~Barg, and S.~Vl{\u{a}}du{\c{t}}, ``Codes with hierarchical
  locality from covering maps of curves,'' \emph{IEEE Trans. Inf. Theory},
  vol.~65, no.~10, pp. 6056--6071, 2019.

\bibitem{B+17}
A.~Barg, K.~Haymaker, E.~Howe, G.~Matthews, and A.~V{\'a}rilly-Alvarado,
  ``Locally recoverable codes from algebraic curves and surfaces,'' in
  \emph{Algebraic Geometry for Coding Theory and Cryptography}, E.~Howe,
  K.~Lauter, and J.~Walker, Eds.\hskip 1em plus 0.5em minus 0.4em\relax
  Springer, 2017, pp. 95--126.

\bibitem{BTV17}
A.~Barg, I.~Tamo, and S.~Vl{\u{a}}du{\c{t}}, ``Locally recoverable codes on
  algebraic curves,'' \emph{IEEE Trans. Inform. Theory}, vol.~63, no.~8, pp.
  4928--4939, 2017.

\bibitem{beemer18}
A.~{Beemer}, R.~{Coatney}, V.~{Guruswami}, H.~H. {Lopez}, and F.~{Pinero},
  ``Explicit optimal-length locally repairable codes of distance 5,'' in
  \emph{2018 56th Annual Allerton Conference on Communication, Control, and
  Computing (Allerton)}, 2018, pp. 800--804.

\bibitem{BZ1974}
E.~Blokh and V.~Zyablov, ``Coding of generalized cascade codes,'' \emph{Probl.
  Inform. Trans.}, vol.~10, no.~3, pp. 218--222, 1974.

\bibitem{CaiMiaoSchwartzTang18}
H.~{Cai}, Y.~{Miao}, M.~{Schwartz}, and X.~{Tang}, ``On optimal locally
  repairable codes with super-linear length,'' in \emph{2019 IEEE International
  Symposium on Information Theory (ISIT)}, 2019, pp. 2818--2822.

\bibitem{chen17}
B.~Chen, S.-T. Xia, J.~Hao, and F.-W. Fu, ``Constructions of optimal
  $(r,\delta)$ locally repairable codes,'' \emph{IEEE Trans. Inf. Theory},
  vol.~64, no.~4, pp. 2499--2511, 2018.

\bibitem{Datta2014}
A.~{Datta}, ``Locally repairable rapid{RAID} systematic codes--one simple
  convoluted way to get it all,'' in \emph{2014 IEEE Information Theory
  Workshop (ITW 2014)}, 2014, pp. 60--64.

\bibitem{esmaeili1998link}
M.~Esmaeili, T.~A. Gulliver, N.~P. Secord, and S.~A. Mahmoud, ``A link between
  quasi-cyclic codes and convolutional codes,'' \emph{IEEE Trans. Inform.
  Theory}, vol.~44, no.~1, pp. 431--435, 1998.

\bibitem{freij2016locally}
R.~Freij-Hollanti, T.~Westerb{\"a}ck, and C.~Hollanti, ``Locally repairable
  codes with availability and hierarchy: {M}atroid theory via examples,'' in
  \emph{24th International Z{\"u}rich Seminar on Communications (IZS), Zurich,
  Switzerland, March 2-4, 2016}.\hskip 1em plus 0.5em minus 0.4em\relax
  ETH-Z{\"u}rich, 2016, pp. 45--49, https://doi.org/10.3929/ethz-a-010645448.

\bibitem{gluesing2006strongly}
H.~Gluesing-Luerssen, J.~Rosenthal, and R.~Smarandache, ``Strongly-{MDS}
  convolutional codes,'' \emph{IEEE Trans. Inf. Theory}, vol.~52, no.~2, pp.
  584--598, 2006.

\bibitem{gopalan2012locality}
P.~Gopalan, C.~Huang, H.~Simitci, and S.~Yekhanin, ``On the locality of
  codeword symbols,'' \emph{IEEE Trans. Inf. Theory}, vol.~58, no.~11, pp.
  6925--6934, 2012.

\bibitem{grezet2019complete}
M.~{Grezet} and C.~{Hollanti}, ``The complete hierarchical locality of the
  punctured simplex code,'' in \emph{2019 IEEE International Symposium on
  Information Theory (ISIT)}, 2019, pp. 2833--2837.

\bibitem{GXY19}
V.~Guruswami, C.~Xing, and C.~Yuan, ``How long can optimal locally repairable
  codes be?'' \emph{IEEE Trans. Inf. Theory}, vol.~6, no.~6, pp. 3662--3670,
  2019.

\bibitem{holzbaur2018cyclic}
L.~Holzbaur, R.~Freij-Hollanti, and A.~Wachter-Zeh, ``Cyclic codes with
  locality and availability,'' 2018, arXiv preprint arXiv:1812.06897.

\bibitem{huang2017multierasure}
P.~Huang, E.~Yaakobi, and P.~H. Siegel, ``Multi-erasure locally recoverable
  codes over small fields: {A} tensor product approach,'' \emph{IEEE Trans.
  Inf. Theory}, 2020, {DOI} 10.1109/TIT.2019.2962012.

\bibitem{IKZ2018}
F.~{Ivanov}, A.~{Kreshchuk}, and V.~{Zyablov}, ``On the local erasure
  correction capacity of convolutional codes,'' in \emph{2018 International
  Symposium on Information Theory and Its Applications (ISITA), Singapore},
  2018, pp. 296--300.

\bibitem{JZ15}
R.~Johannesson and K.~S. Zigangirov, \emph{Fundamentals of Convolutional
  Coding}, 2nd~ed.\hskip 1em plus 0.5em minus 0.4em\relax Hoboken, NJ: J. Wiley
  \& Sons, Inc., 2015.

\bibitem{Justesen1990}
J.~{Justesen}, E.~{Paaske}, and M.~{Ballan}, ``Quasi-cyclic unit memory
  convolutional codes,'' \emph{IEEE Transactions on Information Theory},
  vol.~36, no.~3, pp. 540--547, 1990.

\bibitem{kamath2012codes}
G.~M. Kamath, N.~Prakash, V.~Lalitha, and P.~V. Kumar, ``Codes with local
  regeneration and erasure correction,'' \emph{IEEE Trans. Inform. Theory},
  vol.~60, no.~8, pp. 4637--4660, 2014.

\bibitem{Lally01}
K.~Lally and P.~Fitzpatrick, ``Algebraic structure of quasicyclic codes,''
  \emph{Discrete Applied Mathematics}, vol. 111, pp. 157--175, 2001.

\bibitem{Li17a}
X.~Li, L.~Ma, and C.~Xing, ``Construction of asymptotically good locally
  repairable codes via automorphism groups of function fields,'' \emph{IEEE
  Trans. Inf. Theory}, vol.~65, no.~11, pp. 7087--7094, 2019.

\bibitem{LiMaXing17b}
------, ``Optimal locally repairable codes via elliptic curves,'' \emph{IEEE
  Trans. Inform. Theory}, vol.~65, no.~1, pp. 108--117, 2019.

\bibitem{Ling01}
S.~Ling and P.~Sol{\'e}, ``On the algebraic structure of quasi-cyclic codes,
  {I},'' \emph{IEEE Trans. Inf. Theory}, vol.~47, no.~7, pp. 2751--2760, 2001.

\bibitem{Ling2003}
------, ``On the algebraic structure of quasi-cyclic codes {II}: Chain rings,''
  \emph{Designs, Codes and Cryptography}, vol.~30, pp. 113--130, 2003.

\bibitem{luo2018optimal}
Y.~Luo, C.~Xing, and C.~Yuan, ``Optimal locally repairable codes of distance 3
  and 4 via cyclic codes,'' \emph{IEEE Trans. Inf. Theory}, vol.~65, no.~2, pp.
  1048--1053, 2018.

\bibitem{MN19}
U.~Mart{\'\i}nez-Pe{\~n}as and D.~Napp, ``Locally repairable convolutional
  codes with sliding window repair,'' in \emph{Proc. 2019 IEEE Int. Sympos.
  Inf. Theory (ISIT)}, 2019, pp. 2838--2842, expanded version in arXiv preprint
  arXiv:1901.02073.

\bibitem{rawat2016locality}
A.~S. Rawat, D.~S. Papailiopoulos, A.~G. Dimakis, and S.~Vishwanath, ``Locality
  and availability in distributed storage,'' \emph{IEEE Trans. inf. theory},
  vol.~62, no.~8, pp. 4481--4493, 2016.

\bibitem{saints1993hyperbolic}
K.~Saints and C.~Heegard, ``On hyperbolic cascaded {R}eed-{S}olomon codes,'' in
  \emph{International Symposium on Applied Algebra, Algebraic Algorithms, and
  Error-Correcting Codes}.\hskip 1em plus 0.5em minus 0.4em\relax Springer,
  1993, pp. 291--303.

\bibitem{SAC2015}
B.~{Sasidharan}, G.~K. {Agarwal}, and P.~V. {Kumar}, ``Codes with hierarchical
  locality,'' in \emph{2015 IEEE International Symposium on Information Theory
  (ISIT)}, 2015, pp. 1257--1261, expanded version in arXiv preprint
  arXiv:1501.06683.

\bibitem{SvT79}
G.~Solomon and H.~C.~A. van Tilborg, ``A connection between block and
  convolutional codes,'' \emph{SIAM J. Appl. Math.}, vol.~37, no.~2, pp.
  358--369, 1979.

\bibitem{tb14}
I.~Tamo and A.~Barg, ``A family of optimal locally recoverable codes,''
  \emph{IEEE Trans. Inf. Theory}, vol.~60, no.~8, pp. 4661--4676, 2014.

\bibitem{tamo2016bounds}
I.~Tamo, A.~Barg, and A.~Frolov, ``Bounds on the parameters of locally
  recoverable codes,'' \emph{IEEE Trans. inf. theory}, vol.~62, no.~6, pp.
  3070--3083, 2016.

\bibitem{CyclicLRC16}
I.~Tamo, A.~Barg, S.~Goparaju, and R.~Calderbank, ``Cyclic {LRC} codes, binary
  {LRC} codes, and upper bounds on the distance of cyclic codes,''
  \emph{International Journal of Information and Coding Theory}, vol.~3, no.~4,
  pp. 345--364, 2016.

\bibitem{tamo2016cyclic}
------, ``Cyclic {LRC} codes, binary {LRC} codes, and upper bounds on the
  distance of cyclic codes,'' \emph{International Journal of Information and
  Coding Theory}, vol.~3, no.~4, pp. 345--364, 2016.

\bibitem{Tanner2004}
R.~M. {Tanner}, D.~{Sridhara}, A.~{Sridharan}, T.~E. {Fuja}, and D.~J.
  {Costello}, ``{LDPC} block and convolutional codes based on circulant
  matrices,'' \emph{IEEE Transactions on Information Theory}, vol.~50, no.~12,
  pp. 2966--2984, 2004.

\bibitem{TRV2012}
V.~Tom{\'a}s, J.~Rosenthal, and R.~Smarandache, ``Decoding of convolutional
  codes over the erasure channel,'' \emph{IEEE Trans. Inf. Theory}, vol.~58,
  no.~1, pp. 90--108, 2012.

\bibitem{wang2014repair}
A.~Wang and Z.~Zhang, ``Repair locality with multiple erasure tolerance,''
  \emph{IEEE Trans. Inf. Theory}, vol.~60, no.~11, pp. 6979--6987, 2014.

\bibitem{yang2019hierarchical}
S.~Yang, A.~Hareedy, R.~Calderbank, and L.~Dolecek, ``Hierarchical coding to
  enable scalability and flexibility in heterogeneous cloud storage,'' 2019,
  arXiv preprint arXiv:1905.02279.

\bibitem{ZLLS2016}
B.~{Zhu}, X.~{Li}, H.~{Li}, and K.~W. {Shum}, ``Replicated convolutional codes:
  A design framework for repair-efficient distributed storage codes,'' in
  \emph{2016 54th Annual Allerton Conference on Communication, Control, and
  Computing, Monticello, IL}, 2016, pp. 1018--1024.

\end{thebibliography}

\end{document}